\newcolumntype{C}[1]{>{\centering\let\newline\\\arraybackslash\hspace{0pt}}m{#1}}
\newcommand{\R}{\mathbb{R}}
\newcommand{\PP}{\mathbb{P}}
\newcommand{\N}{\mathbb{N}}
\newcommand{\E}{\mathbb{E}}
\newcommand{\dd}{\mathrm{d}}
\newcommand\bfG{\mathbf G}
\newcommand\bfK{\mathbf K}
\newcommand\bfP{\mathbf P}
\newcommand\bfQ{\mathbf Q}
\newcommand\bN{\boldsymbol N}
\newcommand\bfSigma{\mathbf\Sigma}
\newcommand\bfH{\mathbf H}
\newcommand\bfS{\mathbf S}
\newcommand\bfM{\mathbf M}
\newcommand\bfI{\mathbf I}
\newcommand\bfA{\mathbf A}
\newcommand\bZ{\boldsymbol Z}
\newcommand\bW{\boldsymbol W}
\newcommand\bV{\boldsymbol V}
\newcommand\bX{\boldsymbol X}
\newcommand\bY{\boldsymbol Y}
\newcommand\bx{\boldsymbol x}
\newcommand\bv{\boldsymbol v}
\newcommand\ba{\boldsymbol a}
\newcommand\bb{\boldsymbol b}
\newcommand\bG{\boldsymbol G}
\newcommand\by{\boldsymbol y}
\newcommand\bz{\boldsymbol z}
\newcommand\bxi{\boldsymbol\xi}
\newcommand\btheta{\boldsymbol\theta}
\newcommand\bzero{\mathbf{0}}
\newcommand\bfzero{\boldsymbol{0}}
\DeclarePairedDelimiter\floor{\lfloor}{\rfloor}
\theoremstyle{plain}
\newtheorem{theorem}{Theorem}
\newtheorem{proposition}{Proposition}
\newtheorem{assumption}{Assumption}
\newtheorem{lemma}{Lemma}
\newtheorem{?}{Question}
\crefname{assumption}{assumption}{Assumptions}
\begin{document}

\begin{frontmatter}

\title{Metropolis Adjusted Langevin Trajectories: a robust alternative to Hamiltonian Monte Carlo}
\runtitle{\, Metropolis Adjusted Langevin Trajectories}

\begin{aug}

%\author{\fnms{Lionel} \snm{Riou-Durand} \ead[label=e1]{lionel.riou-durand@warwick.ac.uk}}
%\and
%\author{\fnms{Jure} \snm{Vogrinc} \ead[label=e2]{jure.vogrinc@warwick.ac.uk}}

\author{\fnms{Lionel} \snm{Riou-Durand\hspace{-0.1cm}} \ead[label=e1]{lionel.riou-durand@insa-rouen.fr }}\and
%\author{\fnms{Jure} \snm{Vogrinc} \ead[label=e2]{jure.vogrinc@warwick.ac.uk}}
\author{\fnms{Jure} \snm{Vogrinc} \ead[label=e2]{jure.vogrinc@warwick.ac.uk}}

\runauthor{Riou-Durand \& Vogrinc\, }

%\affiliation{University of Warwick}

%{\renewcommand{\addtocontents}[2]{}
\address{INSA Rouen, Normandie Université, France ; University of Warwick, United Kingdom.\\ \printead{e1,e2}}
%\address{INSA Rouen Normandie, \\Saint-Étienne-du-Rouvray, 76800, France.\\ \printead{e1}}
%\address{University of Warwick, \\Coventry, CV4 7AL, United Kingdom.\\ \printead{e2}}
%\address{University of Warwick, Coventry, CV4 7AL, United Kingdom.\\ \printead{e1,e2}}
%\\ e-mail: %\href{mailto:lionel.riou-durand@warwick.ac.uk}{lionel.riou-durand@warwick.ac.uk};
%\href{mailto:jure.vogrinc@warwick.ac.uk}{jure.vogrinc@warwick.ac.uk}.}
%}

\end{aug}

\begin{abstract}
We introduce MALT: a new Metropolis adjusted sampler built upon the (kinetic) Langevin diffusion. Compared to Generalized Hamiltonian Monte Carlo (GHMC), the Metropolis correction is applied to whole Langevin trajectories, which prevents momentum flips, and allows for larger step-sizes. We argue that MALT yields a neater extension of HMC, preserving many desirable properties. We extend optimal scaling results of HMC to MALT for isotropic targets, and obtain the same scaling with respect to the dimension without additional assumptions. We show that MALT improves both the robustness to tuning and the sampling performance of HMC on anisotropic targets. We compare our approach with Randomized HMC, recently praised for its robustness. We show that, in continuous time, the Langevin diffusion achieves the fastest mixing rate for strongly log-concave targets. We then assess the accuracies of MALT, GHMC, HMC and RHMC when performing numerical integration on anisotropic targets, both on toy models and real data experiments on a Bayesian logistic regression. We show that MALT outperforms GHMC, standard HMC, and is competitive with RHMC.
\end{abstract}

\begin{keyword}[class=MSC]
\kwd[Primary: ]{60J25}
\kwd[; secondary: ]{60H10, 60H30, 65C05}
\end{keyword}

\begin{keyword}
\kwd{Markov Chain Monte Carlo}
\kwd{Hamiltonian Monte Carlo}
\kwd{Langevin diffusion}
\kwd{Mixing rate}
\kwd{Optimal scaling}
\end{keyword}

\end{frontmatter}
\maketitle

%Preprint
%\parskip=9pt

\setcounter{tocdepth}{1}
\tableofcontents

\section{Introduction}\label{sec_intro}

 We consider the problem of sampling from a probability distribution with positive density $\Pi$ with respect to Lebesgue's measure on $\R^d$. We define $\Pi$ through a potential function $\Phi:\R^d\rightarrow \R$ satisfying $\int_{\R^d}\exp\{-\Phi(\by)\}\dd \by<\infty$, as follows
\begin{equation*}
    \Pi(\bx)\propto\exp\{-\Phi(\bx)\},\qquad\bx\in\R^d.
\end{equation*}
%The function $\Phi$ is called the potential. 
 
Exact sampling methods (e.g. rejection sampler) can tackle this problem, but their complexities increase exponentially fast with $d$. In this work we focus on approximate sampling methods known as Markov Chain Monte Carlo (MCMC) algorithms, for which polynomial complexities with $d$ can be obtained under suitable assumptions. These methods rely on sampling from a tractable Markov chain which converges in distribution to the target $\Pi$.

 %A widely used method for tackling this problem consists of building a discrete time Markov chain targeting the stationary distribution corresponding to $\Pi$, for which updates can be sampled exactly through a tractable algorithm. These 
 
%In order to obtain polynomial complexities with $d$ recursive sampling methods are commonly referred to as Markov Chain Monte Carlo (MCMC) algorithms. 
%If MCMC algorithms can solve the sampling problem by drawing a long run of the Markov chain, they can also enable the approximation of intractable expectations with respect to $\Pi$. %Indeed, empirical averages built upon random samples drawn from a Markov chain are convergent estimators of such expectations by the ergodic theorem. 
MCMC algorithms are powerful tools to estimate statistical models involving intractable integrals. They are widely used
% MCMC algorithms are powerful tools for providing numerical approximations of statistical models involving intractable integrals. Their use is widely spread 
in Bayesian statistics, to approximate moments of high-dimensional posterior distributions. Efficient sampling algorithms for large $d$ often rely on discretized Markov processes guided by the gradient of the potential $\Phi$. These algorithms are commonly referred to as gradient-based samplers. In this work, we make the following smoothness assumption on $\Phi$. We note $|\bx|\triangleq (\bx^\top \bx)^{1/2}$ the Euclidean norm of $\bx\in \R^d$.

\begin{assumption}\label{assumption:grad_lipschitz}
The potential $\Phi\in C^1(\mathbb{R}^d)$ has a Lipschitz gradient
$$
\exists\, M>0,\qquad|\nabla\Phi(\bx)-\nabla\Phi(\by)|\le M|\bx-\by|,\qquad \bx,\by\in\mathbb{R}^d.
$$
\end{assumption}
The discretization of a Markov process induces a bias in its stationary distribution. 
%The complexity of the resulting algorithm can be measured via the number of iterations needed to sample from a distribution within an $\varepsilon>0$ distance from $\Pi$, e.g. for the total variation or the Wasserstein metric.
Accurate sampling can be ensured by choosing a small enough step-size to reduce the bias while running a long enough chain to get close to stationarity. Solving this trade-off has received a lot of attention lately, and polynomial complexities with $d$ were obtained for log-concave targets satisfying \Cref{assumption:grad_lipschitz}; see \cite{dalalyan2017theoretical, dalalyan2019user, durmus2017nonasymptotic,  durmus2019high, durmus2019analysis, karagulyan2020penalized} for the overdamped Langevin diffusion, \cite{cheng2018underdamped,dalalyan2020sampling,dalalyan2019bounding,ma2021there,monmarche2020high, sanz2021wasserstein} for the (kinetic) Langevin diffusion, and \cite{bou2021mixing,bou2020convergence,bou2020coupling,chen2019optimal,mangoubi2019mixing,mangoubi2021mixing,mangoubi2018dimensionally} for Hamiltonian dynamics. A common practice is to adjust the long-term bias of the chain by facing each update with an accept reject test known as Metropolis correction; see \cite{Hastings::1970,metropolis1953equation}. %This is the case of
Among the samplers using this correction are the Metropolis Adjusted Langevin Algorithm (MALA) and Hamiltonian Monte Carlo (HMC); see \cite{besag1994comments,Duane::1987,RobertsTweedie::1996}. %An extensive comparison between adjusted and unadjusted schemes is beyond the scope of our study. In this work, we focus on these adjusted samplers, which enable the tuning of the step-size by monitoring the acceptance rate of the algorithm.
%One advantage of Metropolis adjusted samplers is that
%One limitation of these samplers is that whenever the discretization error scales polynomially with the time-step, the number of gradient evaluations required to reach a given precision will increase polynomially with the precision level at best. Another method is to adjust the bias of the stationary distribution is to apply the Metropolis-Hastings correction; see \cite{Hastings::1970,metropolis1953equation}. This correction was applied to various discretized Markov processes and lead to several algorithms, among the most prominent algorithms are the Metropolis Adjusted Langevin Algorithm (MALA) and Hamiltonian Monte Carlo (HMC); see \cite{besag1994comments,Duane::1987,RobertsTweedie::1996}. For Metropolis-adjusted samplers, a logarithmic scaling with respect to the precision level is achievable as soon as the Markov chain is geometrically ergodic. In this work, we focus on Metropolis adjusted samplers. An extensive comparison between adjusted and unadjusted samplers is beyond the scope of our study.
Measuring the complexity of Metropolis adjusted samplers was initiated by the study of scaling limits for product form targets; see \cite{Beskos::2013,pillai2012optimal,Roberts::1998,Roberts::1997}. These studies show that accurate integration at stationarity requires a number of gradient queries that scales asymptotically as $d^{1/3}$ for MALA and $d^{1/4}$ for HMC. Non-asymptotic guarantees for Metropolis adjusted samplers have also been obtained; see \cite{chen2020fast,chewi2021optimal,dwivedi2018log,lee2020logsmooth}.
While the complexity of gradient-based samplers is actively studied,
their sensitivity to tuning often remains a major issue for their practical implementations.
%It is now well established that in a typical application gradient-based samplers can outperform the others in terms of computational cost for high dimensional target distributions. However, their lack of robustness to tuning is often a major issue for their practical implementations. 
Developing samplers that combine efficiency and robustness is a key challenge, which receives a growing interest; see \cite{bou2017randomized, hoffman2021adaptive,livingstone2019barker,vogrinc2022optimal}. Our work is motivated by this objective.
 %In this work, we focus on these adjusted samplers, which enable the tuning of the step-size by monitoring the acceptance rate of the algorithm.

%\jure{I think if we can give the assumptions 1,2,3 short descriptive names apart from numbers it is better. If we refer back to them within some proofs it can give the reader an instant reminder better than a number. A3:Product form, A2: strong log-concavity, A1:Lipschitz gradient potential}
 %Hamiltonian Monte Carlo (\cite{Duane::1987}) is a sampler 
 We focus in particular on samplers built upon Hamiltonian dynamics, which describe the motion of a position $\bX_{t}\in\R^d$ and a velocity $\bV_{t}\in\R^{d}$ (a.k.a momentum). These are defined as the solution of the following system of Ordinary Differential Equations (ODE).
%Among the family of gradient-based MCMC algorithms, HMC is often presented as a state-of-the-art sampler for high dimensional targets, justified by its gold standard $d^{1/4}$ scaling with respect to dimension; see \cite{Beskos::2013}. The HMC algorithm and many of its variations are built upon a system of Ordinary Differential Equations (ODE), known as Hamiltonian dynamics. The solution of these deterministic dynamics at time $t\ge 0$ is composed by a position $\bX_{t}\in\R^d$ and a velocity $\bV_{t}\in\R^{d}$ (a.k.a momentum). These correspond to the solution of the ODE 
\begin{equation}\label{eq_hamilton}
    \dd\begin{bmatrix}\bX_{t}\\
\bV_{t}
\end{bmatrix}=\begin{bmatrix}
\bV_{t}\\
-\nabla \Phi(\bX_{t})
\end{bmatrix}\dd t.
\end{equation}
Hamltonian dynamics preserve the density
\begin{equation*}
    \Pi_*(\bx,\bv)\propto\exp\{-\Phi(\bx)-|\bv|^2/2\},\qquad(\bx,\bv)\in\R^{2d}.
\end{equation*}
%We use the notations $\Pi$ and $\Pi_*$ to refer both to the densities and to their corresponding probability distributions, when the context is clear. 
This density is invariant in the sense that $(\bX_{0},\bV_{0})\sim\Pi_*\, \Rightarrow\, (\bX_{t},\bV_{t})\sim\Pi_*$ for any $t\ge0$. Yet Hamiltonian trajectories follow the contours of $\Pi_*$ so they cannot be ergodic without introducing random refreshments; see below.
The density $\Pi_*$ %on $\R^{2d}$
is a product between %two marginal densities, corresponding to independent position and velocity drawn respectively from 
$\Pi$ and a standard Gaussian velocity. Thus sampling from $\Pi_*$ yields marginal samples from $\Pi$ by keeping only the positions. 
In general, \eqref{eq_hamilton} has no closed form solutions, and needs to be discretized. One of the most well known discretizations is the Störmer-Verlet integrator (a.k.a leapfrog), defined as follows. 
 For a step-size $h>0$, let $\btheta_h:(\bx_0,\bv_0)\mapsto(\bx_1,\bv_1)$ such that
    \begin{align*}
    \bv_{1/2}&=\bv_0-(h/2)\nabla \Phi(\bx_0)\\
    \bx_{1}&=\bx_0+h\bv_{1/2}\\
    \bv_1&=\bv_{1/2}-(h/2)\nabla \Phi(\bx_1).
\end{align*}
A numerical trajectory of length $T\ge h$ is composed by $L=\lfloor T/h\rfloor$ Störmer-Verlet updates, noted  $\btheta_h^L\triangleq\btheta_h\circ\cdots \circ\btheta_h$. This map is volume preserving: the Jacobian of $(\bx,\bv)\mapsto\btheta_h^L(\bx,\bv)$ equals one; and time-reversible: the map $\varphi\circ\btheta_h^L$ is an involution for $\varphi(\bx,\bv)\triangleq(\bx,-\bv)$.

%if $\varphi(\bx,\bv)\triangleq(\bx,-\bv)$ then $\varphi\circ\btheta_h^L$ is an involution.
%; see \cite{neal2011mcmc}. Formally, this means respectively that the Jacobian of the map $(\bx,\bv)\mapsto\btheta_h^L(\bx,\bv)$ is equal to one, ans that if $\varphi(\bx,\bv)\triangleq(\bx,-\bv)$ then $\varphi\circ\btheta_h^L$ is an involution.

%preserves several properties of Hamiltonian dynamics; see \cite{neal2011mcmc}. In particular, it is known to be time reversible. Let $\varphi(\bx,\bv)\triangleq(\bx,-\bv)$ stands for the flip of velocity. Time reversibility means that flipping the velocity along the trajectory reduces to going backwards in time, or equivalently, that the map $\varphi\circ\btheta_h^L$ is an involution. 
The HMC algorithm introduced in \cite{Duane::1987}, proposes numerical Hamiltonian trajectories
%The hybrid, or Hamiltonian Monte Carlo algorithm was first introduced in \cite{Duane::1987}. It consists of proposing a Hamiltonian trajectory of length $T>0$, approximated with the leapfrog method for a step-size $h>0$, 
faced with a Metropolis correction. The momentum is refreshed by a Gaussian draw at the start of each trajectory, inducing the randomness needed to ensure ergodicity; see \cite{durmus2020irreducibility}. An extension of this algorithm, called Generalized Hamiltonian Monte Carlo (GHMC), was then suggested in \cite{horowitz1991generalized} to allow for partial momentum refreshments with a persistence parameter $\alpha\in[0,1)$.
%(a.k.a cosines of the angle of refreshment). 
GHMC is defined explicitly in \Cref{alg:ghmc}. It reduces to standard HMC when $\alpha=0$. 

%\begin{algorithm}[H]
%\SetAlgoLined
%\SetKwInOut{Input}{Input}\SetKwInOut{Output}{Output}
%\Input{Starting point $(\bX_0,\bV_0)\in\R^{2d}$, number of MCMC samples $N\ge 1$, integration time $T>0$, step-size $h>0$, and persistence $\alpha\in[0,1)$.
%}
%\BlankLine
%\DontPrintSemicolon
%Set $L\leftarrow\lfloor T/h\rfloor$\\
%\For{$n\leftarrow1$ to $N$}{
%	propose $(\bX',\bV')\leftarrow\btheta_h^L(\bX_{n-1},\bV_{n-1})$\\ draw a uniform random variable $U$ on $(0,1)$\\
%	\If{$U>\Pi_*(\bX',\bV')/\Pi_*(\bX_{n-1},\bV_{n-1})$}{reject and flip the momentum $(\bX',\bV')\leftarrow(\bX_{n-1},-\bV_{n-1})$}
%    draw a standard Gaussian random vector $\bxi$ on $\R^d$\\ refresh the momentum $\bV'\leftarrow \alpha \bV'+\sqrt{1-\alpha^2}\bxi$\\
%    set $(\bX_{n},\bV_{n})\leftarrow(\bX',\bV')$
%}
%\KwRet{$(\bX_1,\bV_1),\cdots,(\bX_N,\bV_N)$}.\;
%\caption{Generalized Hamiltonian Monte Carlo}\label{alg:ghmc_bis}
%\end{algorithm}
%\vspace{-0.2cm}
\begin{algorithm}[!ht]
\SetAlgoLined
\SetKwInOut{Input}{Input}\SetKwInOut{Output}{Output}\normalsize
\Input{Starting point $(\bX^0,\bV^0)\in\R^{2d}$, number of MCMC samples $N\ge 1$, step-size $h>0$, integration time $T\ge h$, and persistence $\alpha\in[0,1)$.
}
\BlankLine
\DontPrintSemicolon
Set $L\gets\lfloor T/h\rfloor$\\
\For{$n\gets1$ to $N$}{
    draw $\bxi\sim\mathcal{N}_d(\bzero_d,\bfI_d)$ and refresh the momentum $\bV'\gets \alpha \bV^{n-1}+\sqrt{1-\alpha^2}\bxi$\\
    propose a Hamiltonian trajectory $(\bX^{n},\bV^{n})\gets\btheta_h^L(\bX^{n-1},\bV')$\\
    compute the energy difference $\Delta\gets\Phi(\bX^{n})-\Phi(\bX^{n-1})+(|\bV^{n}|^2-|\bV'|^2)/2$\\
	draw a uniform random variable $U$ on $(0,1)$\\
	\If{$U>\exp\{-\Delta\}$}{reject and flip the momentum $(\bX^{n},\bV^{n})\gets(\bX^{n-1},-\bV')$}
}
\KwRet{$(\bX^1,\bV^1),\cdots,(\bX^N,\bV^N)$}.\;
\caption{Generalized Hamiltonian Monte Carlo}\label{alg:ghmc}
\end{algorithm}
%\vspace{-0.2cm}
%set $(\bx_0,\bv_0)\leftarrow(\bX_{n-1},\bV_{n-1})$\\
    %\For{$i\leftarrow1$ to $L$}{
%	set $(\bx_i,\bv_i)\leftarrow\btheta_h(\bx_{i-1},\bv_{i-1})$
%	}
%	set $(\bX_{n},\bV_{n})\leftarrow(\bx_L,\bv_L)$\\
%The Metropolis-Hastings kernel is known to be a reversible Markov kernel with respect to its invariant measure.
%Time-reversibility %is a dynamic notion, 
%for an (approximate) Hamiltonian trajectory is different from the notion of reversibility for a Markov chain (a.k.a detailed balance). Yet the two notions can be
%which 
%relates to the notion of
 \Cref{alg:ghmc} defines a skew-reversible Markov kernel; see \cite[Lemma 2]{andrieu2021peskun}. It satisfies detailed balance up to a momentum flip, noted $\varphi(\bx,\bv)=(\bx,-\bv)$. One way to deconstruct \Cref{alg:ghmc} is to split the proposed trajectory as $\btheta_h^L=\varphi\circ\varphi\circ\btheta_h^L$ so that the involution $\varphi\circ\btheta_h^L$ is faced with an accept-reject test. Overall, 
%when composing the output of the test with $\varphi$, 
a momentum flip occurs whenever a move is rejected. This is of little importance when $\alpha=0$, i.e. for HMC, as momentum flips are erased by full refreshments; the Markov kernel is reversible, see \cite[Remark 13]{andrieu2020general}. However, momentum flips are only partially erased when $\alpha\in(0,1)$; the Markov kernel is non-reversible.
%Beyond this technical distinction, 
%It is important to remark 
%We emphasize that
Momentum flips typically slow down the mixing of the sampler,
by causing backtracking which limits the exploration of the space.
%Intuitively, momentum flips cause backtracking which slows down the exploration of the space.
This is supported by Peskun-ordering results for skew-reversible kernels; see \cite[Corrolary 2]{andrieu2021peskun}. It illustrates an important
%We insist here on a crucial
distinction between efficiency of the exploration, and non-reversibility of the Markov kernel.

%In this work we focus on an alternative strategy to introduce randomness in the Hamiltonian trajectories. 
The Langevin diffusion (a.k.a kinetic, geometric, underdamped, or second order Langevin) combines \eqref{eq_hamilton} with a continuous momentum refreshment with damping $\gamma\ge0$ (a.k.a friction), from a Brownian motion $(\bW_t)_{t\ge0}$ on $\R^d$. Under \Cref{assumption:grad_lipschitz}, the Langevin diffusion is the strong solution of the following Stochastic Differential Equation (SDE). 
%The resulting process is referred in the molecular dynamics' literature as Langevin diffusion (a.k.a kinetic, geometric, underdamped, or second order Langevin). 
\begin{equation}\label{eq_langevin}
\dd\begin{bmatrix}\bX_{t}\\
\bV_{t}
\end{bmatrix}=\begin{bmatrix}\bV_{t}
\\
-\nabla \Phi(\bX_{t})
\end{bmatrix}\dd t+\begin{bmatrix}\bzero_d
\\
-\gamma\bV_{t}\,\dd t+\sqrt{2\gamma}\,\dd\bW_{t}
\end{bmatrix}.
\end{equation}
The Langevin diffusion preserves $\Pi_*$ as well. It coindices with Hamiltonian dynamics for $\gamma=0$, but a positive damping is required for the process to be ergodic. Quantitative rates of convergence were established for various damping regimes; see \cite{dalalyan2020sampling,eberle2019couplings}. The overdamped Langevin diffusion is obtained as $\gamma\rightarrow\infty$ while rescaling the time as $\bar{\bX}_t=\bX_{\gamma t}$; see \cite{nelson1967dynamical}. 

One tuning strategy for GHMC, advocated in \cite{horowitz1991generalized}, is to approximate the Langevin diffusion as $h\rightarrow 0$ by choosing $\alpha\sim \exp(-\gamma h)$ and $T=h$ in \Cref{alg:ghmc}. A Metropolis correction is then applied to each single leapfrog step. Refreshments become more and more partial as $\gamma\rightarrow 0$, while momentum flips get erased more and more partially. This emphasizes backtracking upon rejection, which severely impacts the sampling performance of GHMC; see \cite{horowitz1991generalized,kennedy2001cost}. Several variants of GHMC were proposed; see \cite{bou2010pathwise, ottobre2016function,ricci2003algorithms,scemama2006efficient}, but these analyses always consider small enough step-sizes to maintain a negligible amount of rejections. This strong constraint on the step-size motivated several methodological variations of GHMC aiming at mitigating the impact of momentum flips; see \cite{campos2015extra,sohl2014hamiltonian,wagoner2012reducing}. Some of these are related to delayed rejection methods, introduced in \cite{green2001delayed, mira2001metropolis, tierney1999some}; see \cite{park2020markov} for recent developments. Reducing flips/rejections in these samplers always comes at the price of a higher number of proposals, inducing another trade-off to solve in terms of computational cost.\newline

The article is organised as follows.
%, together with a summary of our contributions.

\begin{itemize}
\item In \Cref{sec_malt}, we present the Metropolis Adjusted Langevin Trajectories (MALT) sampler. Compared to previous samplers built upon the kinetic Langevin diffusion, like GHMC and its variations (see \cite{bou2010pathwise, horowitz1991generalized, ottobre2016function, ricci2003algorithms, scemama2006efficient}), MALT applies a Metropolis adjustment on entire Langevin trajectories, which arguably defines a neater extension of HMC. In particular, MALT avoids the momentum flip problem of GHMC, and yields a reversible and ergodic Markov kernel for any positive damping.
\item In \Cref{sec_optimal_scaling}, we derive optimal scaling limits of MALT for product-form targets. Our study extends the result of \cite{Beskos::2013} for HMC to any choice of damping. We show that the $d^{1/4}$ scaling and the $65\%$ acceptance rate are optimal without further assumptions. Our result ensures that on isotropic targets, MALT preserves the rate of convergence of HMC. It also yields a simple guideline for tuning the step-size, which gives MALT another edge over GHMC.
\item In \Cref{sec_robustness}, we show that MALT greatly improves the robustness of HMC on anisotropic targets. We first highlight this property with explicit calculations on a toy Gaussian model. When the target has heterogeneous scales, HMC suffers from a loss of efficiency and a greater sensitivity to the choice of trajectory length, while MALT solves both issues for an explicit damping. We extend our study to strongly log-concave targets, and relate the Langevin diffusion to another robust approach known as Randomized HMC (see \cite{bou2017randomized}). Our analysis refines the mixing rate of \cite{deligiannidis2021randomized} for RHMC, and shows that the Langevin diffusion is a limit of RHMC that achieves the fastest mixing rate. To our knowledge, this connection is new and motivates the use of partial momentum refreshment.
\item In \Cref{sec_comparisons}, we compare the accuracies of MALT, HMC, GHMC and RHMC when doing numerical integration. This study aims at approximating the moments of various target distributions, including toy examples as well as a real data Bayesian model with log-concave posterior distribution. We measure the worst effective sample size across the components for various test functions. Our results show that MALT outperforms both HMC and GHMC, while it offers competitive results compared to RHMC.  
%\item A table of notations is presented in \Cref{sec_notations}.
\end{itemize}

 A table of notations is presented in \Cref{sec_notations}.

\section{Metropolis Adjusted Langevin Trajectories}\label{sec_malt}
We introduce MALT: a new Metropolis adjustment of a standard time discretization of the (kinetic) Langevin diffusion. We explain its foundations and establish connections with previous Metropolis adjusted samplers built upon the Langevin diffusion such as GHMC; see \cite{bou2010pathwise,bussi2007accurate,horowitz1991generalized,ottobre2016function,ricci2003algorithms,scemama2006efficient}. Compared to these approaches, the Metropolis correction is applied to a whole Langevin trajectory of length $T>0$. This mechanism allows us to get rid of GHMC's perturbations induced by momentum flips, by performing full refreshments of the velocity at the start of each trajectory. We describe MALT as an extension of HMC, which uses an arguably neater adjustment than GHMC. We support this claim by presenting several comparative advantages of MALT, both in terms of tuning and sampling performance.

For any time-step $h>0$, we introduce $\eta=e^{-\gamma h/2}$. We also let $\bxi,\bxi'\sim\mathcal{N}_d(\bfzero_d,\bfI_d)$ be independent. Starting from $(\bx_0,\bv_0)\in\R^{2d}$, we consider the following discretization 
\begin{align*}
\bv_0'&=\eta \bv_0+\sqrt{1-\eta^2}\bxi\tag{O}\\
    \bv_{1/2}&=\bv_0'-(h/2)\nabla \Phi(\bx_0)\tag{B}\\
    \bx_{1}&=\bx_0+h\bv_{1/2}\tag{A}\\
    \bv_1'&=\bv_{1/2}-(h/2)\nabla \Phi(\bx_1)\tag{B}\\
 \bv_1&=\eta \bv_1'+\sqrt{1-\eta^2}\bxi'\tag{O}
\end{align*}
The updates denoted by the letters O, B and A correspond to exact solutions of a splitting of \eqref{eq_langevin} into three parts. These are respectively referred in the literature as momentum refreshment, acceleration and free transport parts of the Langevin dynamics; see \cite{leimkuhler2013rational}. The BAB composition reduces to the St\"ormer-Verlet update introduced in \Cref{sec_intro}. Combined with an infinitesimal momentum refreshment as $h\rightarrow0$, the OBABO composition yields a natural time discretization of the Langevin diffusion, built as an extension of the Leapfrog integrator for Hamiltonian dynamics. For this reason, this splitting scheme has received a significant interest; see \cite{bou2010pathwise,bussi2007accurate,monmarche2020high,ricci2003algorithms,scemama2006efficient}. Several other splittings of the Langevin diffusion have been proposed and studied; see \cite{alamo2016technique,sanz2021wasserstein}. This work focuses on the OBABO update in order to preserve several properties of the St\"ormer-Verlet update, useful for constructing a Metropolis correction. On the principle, our construction could rely on other integrators, as long as these are time-reversible and volume preserving; see \cite{bou2010pathwise}.

For any $\gamma>0$, the distribution of $\bz_1=(\bx_1,\bv_1)$ given $\bz_0=(\bx_0,\bv_0)$ admits a positive density $\bz_1\mapsto q_{h,\gamma}(\bz_0,\bz_1)$ with respect to Lebesgue's measure on $\R^{2d}$. This density is formally defined as the product of the two conditional densities corresponding to the Gaussian distributions of $\bx_1$ given $(\bx_0,\bv_0)$ and $\bv_1$ given $(\bx_0,\bv_0,\bx_1)$. The case $\gamma=0$ reduces to considering the deterministic St\"ormer-Verlet update $\btheta_h$. The OBABO update characterizes a Markov kernel defined for any Borel set $A$ of $\R^{2d}$ by
    $$\bfQ_{h,\gamma}(\bz_0,A)\triangleq\left\{\begin{array}{ll}
        \int_Aq_{h,\gamma}(\bz_0,\bz_1)\dd \bz_1  & \mbox{if } \gamma>0 \\
        \delta_{\btheta_h(\bz_0)}(A) & \mbox{if } \gamma=0.
    \end{array}\right.$$
    For any distribution $\nu_0$ on $\R^{2d}$, starting from $\bz_{0}\sim \nu_0$ we define the numerical Langevin trajectory for $i\ge 1$ by
    \begin{equation}\label{eq_obabo_trajectory}
           \bz_{i}\sim\bfQ_{h,\gamma}(\bz_{i-1},.)
    \end{equation}
    We refer to the synchronized processes $(\bZ_t)_{t\ge0}$ and $(\bz_{i})_{i\ge 0}$ for the respective solutions of \eqref{eq_langevin} and \eqref{eq_obabo_trajectory}, starting from $\bZ_0=\bz_{0}\sim\nu_0$ with identical momentum refreshments. This synchronization is formally ensured in the OBABO update by considering $\bxi=\bxi_{0,h/2}$ and $\bxi'=\bxi_{h/2,h}$ such that for $t> s\ge 0$
    \begin{equation}\label{eq_synchronization_langevin_obabo}
        \bxi_{s,t}\triangleq\sqrt{\frac{2\gamma}{1-e^{-\gamma (t-s)}}} \int_s^te^{-\gamma(t-u)}\dd \bW_u.
    \end{equation}
      In \Cref{prop:StrongAccuracy}, we show that numerical Langevin trajectories built upon recursive OBABO updates are strongly accurate. Similar results have been proposed and discussed in \cite{bou2010long,bou2010pathwise}. 
\begin{proposition}
\label{prop:StrongAccuracy}
 Suppose that \Cref{assumption:grad_lipschitz} holds. Let $(\bZ_t)_{t\ge0}$ and $(\bz_{i})_{i\ge 0}$ be the respective solutions of \eqref{eq_langevin} and \eqref{eq_obabo_trajectory}, synchronized with respect to \eqref{eq_synchronization_langevin_obabo}. For any fixed $d\ge 1$, $T>0$ and $\gamma\ge0$,
there exists $C>0$ such that for any square integrable start $\bZ_0\sim\nu_0$ on $\R^{2d}$ and any $t,h\in (0,T]$
$$
 \left(\E[|\bZ_{\floor*{t/h}h}-\bz_{\floor*{t/h}}|^2]\right)^{1/2}\le C\left(1+\E[|\bZ_0|^2]\right)^{1/2}h.
$$
\end{proposition}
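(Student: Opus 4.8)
The plan is to prove strong order one by the classical mean-square fundamental theorem for one-step approximations of SDEs (Milstein and Tretyakov), which reduces a global $\mathbb{L}_2$ bound to \emph{local} one-step estimates, provided a one-step stability bound and second-moment bounds are available. Writing $n=\lfloor t/h\rfloor$ so that $n\le T/h$, I would compare $\bZ_{nh}$ with $\bz_n$ by inserting, at each grid point $kh$, one exact Langevin step started from $\bZ_{kh}$ against one OBABO step started from $\bz_k$, both driven by the \emph{same} Brownian increments through the synchronization \eqref{eq_synchronization_langevin_obabo}; denote the two one-steps from a common start $\bz$ by $\bZ_h^{\bz}$ (exact) and $\bz_1^{\bz}$ (numerical). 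The theorem then turns the claim into three ingredients: (i) moment bounds $\sup_{s\le T}\E[|\bZ_s|^2]\le C(1+\E[|\bZ_0|^2])$ and $\sup_{k\le T/h}\E[|\bz_k|^2]\le C(1+\E[|\bZ_0|^2])$; (ii) a one-step Lipschitz bound $|\bz_1^{\bz}-\bz_1^{\bz'}|\le(1+Ch)|\bz-\bz'|$; and (iii) local error estimates of mean order $2$ and mean-square order $3/2$, which combine to give global order $3/2-1/2=1$.

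The moment bounds in (i) are routine under \Cref{assumption:grad_lipschitz}: a Lipschitz gradient grows linearly, $|\nabla\Phi(\bx)|\le|\nabla\Phi(\bzero_d)|+M|\bx|$, so Itô's formula applied to $|\bZ_s|^2$ together with Grönwall gives the continuous bound, while the explicit affine-plus-Lipschitz form of the OBABO update gives the discrete bound by discrete Grönwall over at most $T/h$ steps. For the stability bound (ii) I would exploit that, under synchronous coupling, the Gaussian refreshments cancel in the difference of two OBABO trajectories: writing out the O, B, A, B, O updates for $\bz$ and $\bz'$ sharing $\bxi,\bxi'$, the refreshment terms drop out of the difference and the map becomes deterministic, controlled by $M$ via $|\nabla\Phi(\bx)-\nabla\Phi(\bx')|\le M|\bx-\bx'|$. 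This yields $|\bz_1^{\bz}-\bz_1^{\bz'}|\le(1+Ch)|\bz-\bz'|$ pointwise with $C=C(M,\gamma)$ (absorbing $h\le T$), so that $(1+Ch)^{T/h}\le e^{CT}$ propagates accumulated error.

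The crux is the local error analysis (iii), which must be carried out under \Cref{assumption:grad_lipschitz} alone, i.e. without access to $\nabla^2\Phi$. I would therefore avoid a naive Itô--Taylor expansion of $\nabla\Phi$ and instead compare integral (Duhamel) representations, freezing the gradient at the starting point. Fix a deterministic $\bz=(\bx,\bv)$; the exact position obeys $\bX_h=\bx+\int_0^h\bV_u\,\dd u$ with $\bV_u=e^{-\gamma u}\bv+\sqrt{2\gamma}\int_0^u e^{-\gamma(u-r)}\dd\bW_r-\int_0^u e^{-\gamma(u-r)}\nabla\Phi(\bX_r)\,\dd r$, whereas OBABO produces $\bx_1=\bx+h(\eta\bv+\sqrt{1-\eta^2}\bxi)-(h^2/2)\nabla\Phi(\bx)$ and a corresponding velocity update. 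Expanding $\eta=e^{-\gamma h/2}$ and $\sqrt{1-\eta^2}$, the deterministic coefficients of $\bv$ and of $\nabla\Phi(\bx)$ agree to $O(h^2)$ with $O(h^3)$ remainders; the effect of replacing $\nabla\Phi(\bX_r)$ by $\nabla\Phi(\bx)$ is controlled by $M\int_0^h\E[|\bX_r-\bx|^2]^{1/2}\,\dd r=O(h^2)$ using (i); and the stochastic Fubini identity $\sqrt{2\gamma}\int_0^h\!\int_0^u e^{-\gamma(u-r)}\dd\bW_r\,\dd u=\sqrt{2\gamma}\int_0^h\gamma^{-1}(1-e^{-\gamma(h-r)})\dd\bW_r$ shows that the exact position noise and the OBABO position noise $h\sqrt{1-\eta^2}\bxi$ are both centred Gaussians of variance $O(h^3)$ built from the same Brownian path, so their difference is $O(h^{3/2})$ in $\mathbb{L}_2$. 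Collecting terms, the one-step error $\bZ_h^{\bz}-\bz_1^{\bz}$ has mean $O((1+|\bz|)h^2)$ (the $O(h^{3/2})$ pieces being centred) and mean-square $O((1+|\bz|)h^{3/2})$, precisely the orders $p_1=2$, $p_2=3/2$ required.

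Assembling (i)--(iii) through the fundamental theorem gives $\E[|\bZ_{nh}-\bz_n|^2]\le C^2(1+|\bz_0|^2)h^2$ for deterministic starts; conditioning on $\bZ_0$ and integrating over $\nu_0$ (using its independence from the driving noise) yields the stated bound with factor $(1+\E[|\bZ_0|^2])^{1/2}$. The case $\gamma=0$ degenerates to deterministic St\"ormer--Verlet, whose $O(h^3)$ local error gives global order $2$, so the bound holds \emph{a fortiori}; since $\gamma\ge0$ is fixed, letting $C$ depend on $\gamma$ is harmless. I expect the main obstacle to be step (iii): matching the noise of the exactly-integrated Ornstein--Uhlenbeck part with the half-step refreshments $\bxi,\bxi'$ under the synchronization \eqref{eq_synchronization_langevin_obabo}, so as to confirm that the mean-square local error is genuinely $O(h^{3/2})$ while the systematic error is $O(h^2)$, and doing so using only Lipschitz continuity of $\nabla\Phi$ rather than second-order smoothness.
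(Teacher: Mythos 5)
Your proposal is correct and follows essentially the same route as the paper: both reduce the global bound to local one-step mean/mean-square estimates of orders $2$ and $3/2$ via the Milstein--Tretyakov fundamental theorem, prove the second-moment bound $\E[|\bZ_t|^2]\le C(1+|\bZ_0|^2)$ by Gr\"onwall, and obtain the local estimates by subtracting the explicit (Duhamel) representation of the exact Langevin step from the OBABO update, isolating the centred $O(h^{3/2})$ noise mismatch through the synchronization \eqref{eq_synchronization_langevin_obabo} while all systematic terms are $O(h^{2})$. The only cosmetic differences are that the paper propagates local errors through the exact flow's stability (as built into the cited theorem) rather than through a numerical one-step contraction, and handles the gradient increment via the integral representation of \cite{bou2010long} rather than by freezing $\nabla\Phi$ at the starting point; neither changes the substance of the argument.
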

  \Cref{prop:StrongAccuracy} ensures that, measured by the $\mathbb{L}_2$-norm, the numerical error of approximating Langevin trajectories with the OBABO update scales at worst linearly with the time-step $h>0$. Beyond its natural interpretation, this claim is used to establish several results in \Cref{sec_optimal_scaling}. A detailed proof is therefore derived in \Cref{sec:StrongAccuracy}.
    
We introduce a Metropolis correction applied to numerical Langevin trajectories of length $T>0$. This correction involves a local numerical error, defined for $h>0$ and $\bx,\by\in\R^{d}$ by
\begin{equation}\label{eq_local_error}
      \mathcal{E}_h(\bx,\by)\triangleq
    \Phi(\by)-\Phi(\bx)
    -
    \frac{1}{2}(\by-\bx)^\top \left(\nabla\Phi(\by)+\nabla\Phi(\bx)\right)
    +
    \frac{h^2}{8}\left(\left|\nabla\Phi(\by)\right|^2-\left|\nabla\Phi(\bx)\right|^2\right).
\end{equation}
Composed by $L=\lfloor T/h \rfloor$ steps, a Langevin trajectory $(\bx_0,\bv_0),...,(\bx_L,\bv_L)\in\R^{2d}$ is drawn from \eqref{eq_obabo_trajectory}. 
The trajectory is faced with an accept-reject test to compensate for the total error
\begin{equation}\label{eq_total_error}
\Delta(\bx_0,...,\bx_L)\triangleq\sum_{i=1}^L\mathcal{E}_h(\bx_{i-1},\bx_i).
\end{equation}
Applied to the positions in the OBABO discretization, the local error $\mathcal{E}_h$ reduces to the energy difference induced by the St\"ormer Verlet update. Indeed the BAB composition yields
\begin{equation}\label{eq_energy_difference}
    \mathcal{E}_h(\bx_0,\bx_1)=\Phi(\bx_1)-\Phi(\bx_0)+\frac{1}{2}(|\bv_1'|^2-|\bv_0'|^2)=-\log\left(\frac{\Pi_*(\bx_1,\bv_1')}{\Pi_*(\bx_0,\bv_0')}\right).
\end{equation}
The total error $\Delta$ corresponds to the sum of energy differences incurred by the Leapfrog integrator (disregarding the partial velocity refreshments). Therefore, it can be interpreted as the error of approximating the Hamiltonian part of Langevin dynamics. Each numerical Langevin trajectory is finally accepted with probability $1\wedge \exp\{-\Delta\}$. The resulting algorithm, named Metropolis Adjusted Langevin Trajectories (MALT), is presented hereafter; see \Cref{alg:malt}.

\begin{algorithm}[!ht]
\SetAlgoLined
\SetKwInOut{Input}{Input}\SetKwInOut{Output}{Output}\normalsize
\Input{Starting point $(\bX^0,\bV^0)\in\R^{2d}$, number of MCMC samples $N\ge 1$, step-size $h>0$, integration time $T\ge h$, and friction $\gamma\ge0$.
}
\BlankLine
\DontPrintSemicolon
Set $L\gets\lfloor T/h\rfloor$\\
\For{$n\gets1$ to $N$}{
 {\color{blue}draw a full refresh of the momentum $\bV'\sim\mathcal{N}_d(\bfzero_d,\bfI_d)$\\
    set $(\bx_0,\bv_0)\gets(\bX^{n-1},\bV')$ and $\Delta\gets0$\\
    \For{$i\gets1$ to $L$}{
	draw an OBABO step $(\bx_i,\bv_i)\sim \bfQ_{h,\gamma}((\bx_{i-1},\bv_{i-1}),.)$\hspace*{3.5em}%
        \rlap{\smash{$\left.\begin{array}{@{}c@{}}\\{}\\{}\\{}\\{}\\{}\end{array}\color{blue}\right\}%
          \color{blue}\begin{tabular}{l}Propose a \\Langevin\\ trajectory.\end{tabular}$}}\\
	update the energy difference $\Delta\gets \Delta+\mathcal{E}_h(\bx_{i-1},\bx_i)$
	}
	set $(\bX^{n},\bV^{n})\gets(\bx_L,\bv_L)$}\\
	draw a uniform random variable $U$ on $(0,1)$\\
	\If{$U>\exp\{-\Delta\}$}{reject and flip the momentum $(\bX^{n},\bV^{n})\gets(\bX^{n-1},-\bV')$}
}
\KwRet{$(\bX^1,\bV^1),\cdots,(\bX^N,\bV^N)$}.\;
\caption{Metropolis Adjusted Langevin Trajectories}\label{alg:malt}
\end{algorithm}
%\Cref{alg:malt} is comparable to HMC in terms of running cost. Particularly, it requires the same number of evaluations of the potential and its gradient per trajectory step. The only deficit with respect to the HMC is that there is in general no simple closed expression for the logarithm of the acceptance ratio, instead the local energy differences $\mathcal{E}_h(\bx_{i-1},\bx_i)$ have to be summed along the trajectory.

From \eqref{eq_total_error} and \eqref{eq_energy_difference}, we see that the total energy error $\Delta$ in \Cref{alg:ghmc} and \Cref{alg:malt} coincide when $\gamma=\alpha=0$. In other words, MALT is an extension of HMC to any choice of friction $\gamma\ge0$. We also observe that computing $\Delta$ involves the gradients of $\bx_0,\cdots,\bx_L$ which are already computed when proposing the Langevin trajectory. In that sense, running MALT does not require more gradient evaluations than running HMC. In \Cref{prop_malt_reversible}, we show that MALT defines a reversible Markov kernel with respect to $\Pi$. We recall that a kernel $\bfP$ is reversible with respect to $\Pi$ if for any Borel sets $A,B$ of $\R^d$
$$
\int_B\Pi(\dd \bx)\bfP(\bx,A)=\int_A\Pi(\dd \bx)\bfP(\bx,B).
$$
\begin{proposition}\label{prop_malt_reversible}
For any $\gamma\ge0$ and $T\ge h>0$, the sequence $(\bX_i)_{i\ge0}$ in \Cref{alg:malt} defines a Markov chain characterized by a kernel $\bfP$ reversible with respect to $\Pi$.
\end{proposition}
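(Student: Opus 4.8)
The plan is to lift the problem to the extended space $\R^{2d}$ and then exploit the full momentum refreshment to descend to the position marginal. Write $\bfK$ for the Markov kernel on $\R^{2d}$ that maps $\bz_0$ to the endpoint of one numerical Langevin trajectory: given $\bz_0=(\bx_0,\bv_0)$, draw $\bz_1,\dots,\bz_L$ from \eqref{eq_obabo_trajectory}, accept $\bz_L$ with probability $1\wedge e^{-\Delta}$ with $\Delta=\Delta(\bx_0,\dots,\bx_L)$ as in \eqref{eq_total_error}, and upon rejection return $\varphi\bz_0$. Since each iteration of \Cref{alg:malt} begins with a fresh draw $\bV'\sim\mathcal{N}_d(\bfzero_d,\bfI_d)$, the incoming velocity $\bV^{n-1}$ is discarded, so $(\bX^n)_{n\ge0}$ is itself a Markov chain with kernel $\bfP(\bx,A)=\int\mathcal{N}_d(\dd\bv)\,\bfK((\bx,\bv),A\times\R^d)$, and if $\bX^{n-1}\sim\Pi$ then $(\bX^{n-1},\bV')\sim\Pi_*$. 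The first reduction is the observation that it suffices to prove $\bfK$ reversible with respect to $\Pi_*$ \emph{up to the momentum flip} $\varphi$ (skew detailed balance; see \cite[Proposition 3.5]{andrieu2021peskun}), i.e. that the law of $(\bz_0,\bz')$ with $\bz_0\sim\Pi_*$ and $\bz'\sim\bfK(\bz_0,\cdot)$ is invariant under $(\bz_0,\bz')\mapsto(\varphi\bz',\varphi\bz_0)$. Indeed this map acts on the position coordinates as a plain transposition, and $\Pi_*$ is invariant under $\varphi$; marginalising over the velocities then yields exactly the detailed balance $\int_B\Pi(\dd\bx)\bfP(\bx,A)=\int_A\Pi(\dd\bx)\bfP(\bx,B)$.

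The core of the argument is a \emph{distorted} detailed balance identity for the $L$-step proposal density $q^{(L)}$ obtained by composing $q_{h,\gamma}$ in \eqref{eq_obabo_trajectory}, namely
\begin{equation*}
\Pi_*(\bz_0)\,q^{(L)}(\bz_0,\bz_L)\,e^{-\Delta(\bx_0,\dots,\bx_L)}=\Pi_*(\varphi\bz_L)\,q^{(L)}(\varphi\bz_L,\varphi\bz_0).
\end{equation*}
I would prove this first for a single OBABO step by assembling three ingredients: the refreshment maps $(\mathrm{O})$ act on the velocity only and are reversible with respect to $\mathcal{N}_d(\bfzero_d,\bfI_d)$ (the Ornstein--Uhlenbeck transition $p_\eta(\bv'\mid\bv)$ satisfies $\phi(\bv)p_\eta(\bv'\mid\bv)=\phi(\bv')p_\eta(\bv\mid\bv')$, with $\phi$ the $\mathcal{N}_d(\bfzero_d,\bfI_d)$ density, and is invariant under flipping the sign of both arguments); the St\"ormer--Verlet map $\btheta_h$ from the $\mathrm{BAB}$ block is volume preserving with $\varphi\circ\btheta_h$ an involution; and the energy identity \eqref{eq_energy_difference}, which says precisely that the deterministic $\mathrm{BAB}$ substep multiplies $\Pi_*$ by $e^{-\mathcal{E}_h(\bx_{i-1},\bx_i)}$. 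Chaining the single-step identities along the trajectory and integrating out the intermediate states $\bz_1,\dots,\bz_{L-1}$ (using that Lebesgue measure is invariant under $\varphi$, so that the reversal $\bz_0,\dots,\bz_L\mapsto\varphi\bz_L,\dots,\varphi\bz_0$ is measure preserving), the local factors $e^{-\mathcal{E}_h}$ accumulate into $e^{-\Delta}$ by \eqref{eq_total_error}.

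Granting the distorted identity, skew detailed balance of $\bfK$ follows by separating acceptance and rejection. For the accepted part, antisymmetry of the local error, $\mathcal{E}_h(\by,\bx)=-\mathcal{E}_h(\bx,\by)$, read off \eqref{eq_local_error}, shows that the reverse proposal $\varphi\bz_L\to\varphi\bz_0$ carries energy error $-\Delta$; combining the distorted identity with the elementary Metropolis identity $e^{-\Delta}(1\wedge e^{\Delta})=1\wedge e^{-\Delta}$ gives $\Pi_*(\bz_0)q^{(L)}(\bz_0,\bz_L)(1\wedge e^{-\Delta})=\Pi_*(\varphi\bz_L)q^{(L)}(\varphi\bz_L,\varphi\bz_0)(1\wedge e^{\Delta})$, which is exactly skew detailed balance of the accept move. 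For the rejection part the flip is what saves us: the move $\bz_0\mapsto\varphi\bz_0$ is its own image under $(\bz_0,\bz')\mapsto(\varphi\bz',\varphi\bz_0)$, so it satisfies skew detailed balance automatically, with no further condition on the rejection probability. Adding the two contributions proves that $\bfK$ is skew reversible, and the reduction of the first paragraph then delivers reversibility of $\bfP$ with respect to $\Pi$.

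The main obstacle is the distorted detailed balance identity for the composed $L$-step kernel rather than any single piece: one must track the intermediate velocities at the seams between consecutive OBABO steps (where two $\mathrm{O}$ substeps meet), keep the Jacobians under control --- trivial for the volume-preserving $\btheta_h$, and absorbed into the Gaussian densities for the $\mathrm{O}$ steps --- and verify that the flip $\varphi$ threads consistently through the reversal so that the energy factors telescope to precisely $e^{-\Delta}$ and nothing else survives. Everything else (the marginalisation and the acceptance/rejection bookkeeping) is then routine, and mirrors the mechanism already described for HMC with full refreshment in \cite[Remark 13]{andrieu2020general}.
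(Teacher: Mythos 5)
Your argument is correct in substance and rests on the same key identity as the paper, but it is packaged differently. The paper lifts the chain to the space of whole trajectories $\bz_{0:L}$ with stationary measure $\mu$ from \eqref{eq_trajectory_target} and writes one MALT iteration as a palindromic composition $\bfG\bfM\bfG$ of a Gibbs update (keep $\bx_0$, refresh the velocity, redraw the trajectory) and a Metropolis update whose deterministic proposal is the reversed, momentum-flipped trajectory $\beta(\bz_{0:L})$; reversibility of each factor with respect to $\mu$ (for $\bfM$ this is precisely \eqref{eq_sum_log_ratio}, i.e. $\mu(\bz_{0:L})e^{-\Delta}=\mu(\beta(\bz_{0:L}))$, obtained by telescoping the single-step identity of \cite[Eq 5.13]{bou2010pathwise}) makes reversibility of the composite automatic, and marginalising to the first position gives the claim. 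You instead stay on phase space, prove skew detailed balance of the endpoint kernel with respect to $\Pi_*$, and marginalise the velocity using the full refreshment. The underlying computation --- single-step distorted detailed balance for OBABO, telescoping of the $\mathcal{E}_h$ factors, the Metropolis identity $e^{-\Delta}(1\wedge e^{\Delta})=1\wedge e^{-\Delta}$, and the change of variables under path reversal --- is identical; what the paper's palindrome buys is that the rejection branch and the momentum flip never have to be discussed separately (they are absorbed into the involution $\beta$ and erased by the final Gibbs step), whereas your route must verify, as you correctly do, that the flip move is fixed pointwise by $(\bz_0,\bz')\mapsto(\varphi\bz',\varphi\bz_0)$ and hence harmless.

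Two points need repair. First, your displayed ``distorted detailed balance identity'' is not well formed as written: $q^{(L)}(\bz_0,\bz_L)$ is a marginal of the endpoint, while $\Delta(\bx_0,\dots,\bx_L)$ depends on the intermediate positions, so the product $q^{(L)}(\bz_0,\bz_L)\,e^{-\Delta}$ has no meaning. The identity must be stated on the path space, namely $\Pi_*(\bz_0)\prod_{i=1}^Lq_{h,\gamma}(\bz_{i-1},\bz_i)\,e^{-\Delta(\bx_{0:L})}=\Pi_*(\varphi(\bz_L))\prod_{i=1}^Lq_{h,\gamma}(\varphi(\bz_i),\varphi(\bz_{i-1}))$, and the acceptance probability $1\wedge e^{-\Delta}$ must be kept inside the integral over $\bz_{1:L-1}$ when you pass to the endpoint kernel; the desired symmetry then follows from the change of variables $\bz_{0:L}\mapsto\beta(\bz_{0:L})$ together with $\Delta(\bx_{L:0})=-\Delta(\bx_{0:L})$. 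Second, for $\gamma=0$ the kernel $\bfQ_{h,0}$ is a Dirac mass and has no Lebesgue density, so the density manipulations break down; this case requires the separate deterministic argument via volume preservation of $\btheta_h$ and the involution $\varphi\circ\btheta_h^L$, which the paper handles by invoking \cite[Theorem 3]{andrieu2020general}. Neither issue affects the soundness of the overall strategy.
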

A proof is derived in \Cref{sec:proof_reversibility}.
The claim is known for HMC; see \cite[Remark 13]{andrieu2020general}. For $\gamma>0$, the result follows from remarking that the acceptance ratio $\exp\{-\Delta\}$ in \Cref{alg:malt} is the product of the single step acceptance ratios obtained in \cite[Eq 5.13]{bou2010pathwise}. Our analysis is more generally built upon a Markov kernel on the space of trajectories $\bz_{0:L}\triangleq(\bz_0,\cdots,\bz_L)$. In particular, the kernel is shown to be reversible with respect to the extended measure
\begin{equation}\label{eq_trajectory_target}
    \mu(\dd\bz_{0:L})\triangleq\Pi_*(\dd\bz_0)\prod_{i=1}^L \bfQ_{h,\gamma}(\bz_{i-1},\dd\bz_i).
\end{equation}
% defined on Borel sets $A$ of $\R^{d(L+1)}$.
% For $\gamma>0$, the distribution $\mu$ admits a density with respect to Lebesgue's measure. Noting the backward trajectory $\beta(\bz_{0:L})\triangleq(\varphi(\bz_L),\varphi(\bz_{L-1}),\cdots,\varphi(\bz_0))$, we obtain from \eqref{eq_total_error} and \cite[Eq 5.13]{bou2010pathwise} that
% \begin{equation}\label{eq_sum_log_ratio}
%     -\Delta(\bz_{0:L})=\sum_{i=1}^L\log\left(\frac{\Pi_*(\bz_i)q_{h,\gamma}(\varphi(\bz_i),\varphi(\bz_{i-1}))}{\Pi_*(\bz_{i-1})q_{h,\gamma}(\bz_{i-1},\bz_i)}\right)=\log\left(\frac{\mu(\beta(\bz_{0:L}))}{\mu(\bz_{0:L})}\right).
% \end{equation}
We highlight that reversibility is ensured only when full momentum refreshment are performed at the start of each trajectory. In particular, momentum flips are completely erased in \Cref{alg:malt} whereas this is not the case in \Cref{alg:ghmc} if $\alpha>0$. 
%The length of the trajectories $T>0$ in MALT is an additional degree of freedom compared to one-step Metropolis corrections aiming for the Langevin diffusion; see \cite{bou2010pathwise,bussi2007accurate,horowitz1991generalized,ottobre2016function,ricci2003algorithms,scemama2006efficient}. 
%We argue that applying a Metropolis correction to whole Langevin trajectories improves flexibility of tuning and simplifies the study of its Markov kernel. In particular, we present MALT as an extension of HMC for which the tuning of $T>0$ can be made robust by choosing a positive friction $\gamma>0$. This property is illustrated in \Cref{sec_worst_ACF} and \Cref{sec_mixing_rates} by deriving uniform bounds on the ACFs of the Langevin diffusion. It is further supported by 
In \Cref{prop:ErgodicityMALT}, %in which 
we show that MALT is ergodic in total variation for any $T\ge h>0$ if a positive friction $\gamma>0$ is chosen. The total variation between two measures $\nu,\nu'$ on $\R^d$ is denoted as $\left\|\nu-\nu'\right\|_{\rm TV}\triangleq\sup\{ |\nu(A)-\nu'(A)|$, $A$ Borel set of $\R^d\}$.
% \begin{equation}\label{eq_total_variation}
%     \left\|\nu-\nu'\right\|_{\rm TV}\triangleq\underset{A\in\mathcal{B}(\R^d)}{\sup} |\nu(A)-\nu'(A)|.
% \end{equation}
\begin{proposition}\label{prop:ErgodicityMALT}
 Let $\bfP$ be the Markov kernel of the chain $(\bX_i)_{i\ge0}$ in \Cref{alg:malt}, and suppose that $\gamma>0$. Then for 
% $\Pi$-almost every $\bx\in\R^d$
any $T\ge h>0$ and for $\Pi$-almost every $\bx\in\R^d$ we have
%any starting measure $\nu\ll\Pi$
% on $\R^d$
\[
\lim_{n\to\infty}\left\|\delta_{\bx}\bfP^n-\Pi\right\|_{\rm TV}
=
0\,.
\]
\end{proposition}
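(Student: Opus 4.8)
The plan is to verify that the position chain $(\bX_i)_{i\ge0}$ produced by \Cref{alg:malt} is $\Pi$-irreducible and aperiodic, and then to invoke the standard convergence theorem for $\phi$-irreducible aperiodic Markov chains admitting an invariant probability measure, which delivers convergence in total variation from $\Pi$-almost every starting point (this is precisely why the statement is phrased ``for $\Pi$-almost every $\bx$'' rather than ``for every $\bx$'', as the latter would require Harris recurrence). Invariance of $\Pi$ is already supplied by \Cref{prop_malt_reversible}. Since a full Gaussian refreshment $\bV'\sim\mathcal{N}_d(\bfzero_d,\bfI_d)$ is drawn at the start of every iteration, the velocity is entirely regenerated each step, so $(\bX_i)_{i\ge0}$ is genuinely a Markov chain on $\R^d$ and the momentum flip upon rejection is immaterial to it. Both irreducibility and aperiodicity will follow from a single claim: the accepted-move part of $\bfP$ dominates an absolutely continuous kernel $\tilde p(\bx,\by)\,\dd\by$ whose density is strictly positive for all $\bx,\by\in\R^d$.

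To establish the claim I would use that $\gamma>0$ forces $\eta=e^{-\gamma h/2}\in(0,1)$, hence $\sqrt{1-\eta^2}>0$, so that each OBABO step injects a nondegenerate Gaussian refreshment. Indeed, tracing one step, $\bx_i=\bx_{i-1}+h\eta\bv_{i-1}+h\sqrt{1-\eta^2}\bxi-(h^2/2)\nabla\Phi(\bx_{i-1})$ is an affine function of the injected noise $\bxi$ with invertible Jacobian $h\sqrt{1-\eta^2}\,\bfI_d$; this is exactly why, for $\gamma>0$, the one-step kernel $\bfQ_{h,\gamma}(\bz_0,\cdot)$ admits the strictly positive density $q_{h,\gamma}$ recorded in the text. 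Composing $L$ such steps and integrating out the initial refreshment $\bv_0\sim\mathcal N_d(\bfzero_d,\bfI_d)$ together with all velocities, the position path $(\bx_1,\dots,\bx_L)$ (with $\bx_0=\bx$ fixed) acquires a strictly positive density $g(\bx;\bx_1,\dots,\bx_L)$ on $\R^{Ld}$, being the integral of a product of strictly positive factors. Under \Cref{assumption:grad_lipschitz}, $\Phi$ and $\nabla\Phi$ are finite everywhere, so the total error $\Delta$ in \eqref{eq_total_error} is finite along every path and the acceptance probability $1\wedge e^{-\Delta}$ is strictly positive. Writing the accepted density as
$$
\tilde p(\bx,\by)=\int g(\bx;\bx_1,\dots,\bx_{L-1},\by)\,\big(1\wedge e^{-\Delta(\bx,\bx_1,\dots,\bx_{L-1},\by)}\big)\,\dd\bx_1\cdots\dd\bx_{L-1},
$$
an integral of a strictly positive integrand, yields $\tilde p(\bx,\by)>0$ for all $\bx,\by$.

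With this in hand, $\Pi$-irreducibility is immediate: for any Borel set $B$ of positive Lebesgue measure, $\bfP(\bx,B)\ge\int_B\tilde p(\bx,\by)\,\dd\by>0$, and since $\Pi$ has a positive density this coincides with $\Pi$-irreducibility. Aperiodicity follows from the same bound without further work: if the chain had a cyclic decomposition $D_0,\dots,D_{p-1}$ with $p\ge 2$, then from $\bx\in D_0$ one would need $\bfP(\bx,D_0)=0$, contradicting $\bfP(\bx,D_0)\ge\int_{D_0}\tilde p(\bx,\by)\,\dd\by>0$. Applying the convergence theorem then gives the stated limit. I expect the only substantive point to be the positivity of the proposal's endpoint density, i.e. that the injected refreshment noise spans $\R^d$ at $\bx_L$; this is genuinely where $\gamma>0$ enters, and it is worth flagging that for $\gamma=0$ the OBABO map degenerates to the deterministic leapfrog $\btheta_h^L$, the proposal becomes a point mass with no absolutely continuous component, and this argument breaks down, consistently with the fact that ergodicity of HMC requires additional conditions.
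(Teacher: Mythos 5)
Your proposal is correct and follows essentially the same route as the paper: both establish $\Pi$-irreducibility and aperiodicity by noting that for $\gamma>0$ the proposed trajectory has an everywhere-positive density (the paper works with the full Gibbs density $g(\bx_0,(\bv_0,\bz_{1:L}))=\psi(\bv_0)\prod_i q_{h,\gamma}(\bz_{i-1},\bz_i)$ rather than the marginalised endpoint density, but this is cosmetic), that $\Delta$ is finite so the acceptance probability is strictly positive, and then invoke the standard a.e.-convergence theorem for $\phi$-irreducible aperiodic chains with invariant probability measure, with aperiodicity obtained by the same cyclic-decomposition contradiction.
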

%     Noting $\bfQ_{h,\gamma}^{1}=\bfQ_{h,\gamma}$, the composition is defined recursively for $i\ge 1$ by
% $$\bfQ_{h,\gamma}^{i+1}(\bz,A)=\int_{\R^{2d}} \bfQ_{h,\gamma}^{i}(\bz,\dd\bz')\bfQ_{h,\gamma}(\bz',A).$$
% $$\bfP_{T,h,\gamma}(\bz,A)=\bfQ_{h,\gamma}^{\lfloor T/h\rfloor}(\bz,A)$$
% $$\bfP_{T,h,\gamma}^N(\bz,A)$$
A proof is derived in \Cref{sec:proof_ergodicity}. For any $\gamma>0$, the measure $\mu$ defined in \eqref{eq_trajectory_target} admits a positive density with respect to Lebesgue's measure, ensuring $\Pi$-irreducibility and aperiodicity of the Markov kernel. Beyond their convergence in total variation, ergodic chains satisfy a strong law of large numbers; see \cite[Eq 6]{roberts2004general}. We remark that a positive damping simplifies considerably the study of MALT's ergodicity, compared to standard HMC. Indeed, ergodicity in total variation for HMC obtained in \cite[Theorem 2]{durmus2020irreducibility} makes the additional assumption that
\begin{equation}
    \left(1+hM^{1/2}\omega(hM^{1/2})\right)^{\lfloor T/h\rfloor}<2,\qquad\omega(s)=(1+s/2+s^2/4),
\end{equation}
where $M>0$ is the Lipshitz constant of $\nabla\Phi$; see \Cref{assumption:grad_lipschitz}.
In particular, the integration time of HMC is required to be small: of the order $M^{-1/2}$. A positive choice of friction in \Cref{alg:malt} avoids this restrictive condition, enabling the use of longer trajectories for anisotropic targets. We refer to \cite{livingstone2019geometric} for ergodicity results on Randomized HMC. The geometric ergodicity for \Cref{alg:malt} is beyond the scope of our study.

\section{Optimal scaling limits for isotropic targets}\label{sec_optimal_scaling}
In this section, we consider the problem of tuning the time step $h>0$ in MALT for any choice of friction $\gamma\ge 0$ and integration time $T>0$. Our study connects to several results known as \textit{optimal scaling};
%To study the optimal scaling of the method with increasing dimension we use the classical setting of optimal scaling results for MCMC,
see \cite{Roberts::1997} and \cite{Roberts::1998} for initial results on Random Walk Metropolis and MALA and \cite{Beskos::2013} for HMC. 
One goal of these works is to derive scaling limits of Metropolis adjusted algorithms as the dimension $d$ goes to infinity. The obtained results rely on assuming that the target distribution has a product form; i.e. that the components are IID. This simplified framework enables the study of the accept-reject mechanism for high-dimensional targets. This study provides simple tuning guidelines of the time step to obtain a non-trivial acceptance rate. This result is convenient for assessing the number of gradient evaluations to reach a certain integration time, as a scaling of the dimension.
%We study a sequence of IID targets, meaning the marginal stationary distributions of single coordinates (with the rest of the coordinates integrated out) are independent and identical. This can summarised in the potential as:
We consider here the following assumption:
\begin{assumption}\label{Assumption:ProductForm}
The potential satisfies $\Phi(\bx)=\sum_{i=1}^d\phi(\bx(i))$ for some $\phi\in\mathcal{C}^4(\R)$, with uniformly bounded derivatives $\phi^{(k)}$ for $k=2,3,4$, and $\int_\R x^8\exp(-\phi(x)) \dd x<\infty$.
\end{assumption}
We highlight that the smoothness and integrability conditions above are similar, although slightly weaker, than those suggested in \cite[Proposition 5.5]{Beskos::2013} for HMC. Under \Cref{Assumption:ProductForm},
%Under this assumption, understanding of the dynamics of the MALT chain with such target potential essentially reduces to understanding the behavior of its single coordinates. 
the asymptotic dynamics of MALT
%of these single coordinate processes 
can be described in terms of independent components of the Langevin diffusion defined in \eqref{eq_langevin}. Throughout this section, we denote $(X_t,V_t)\in\R^2$ for the solution of a single component of \eqref{eq_langevin}
%with a one-dimensional potential $\phi$ 
initiated at stationarity: $(X_0,V_0)\sim e^{-\phi}\otimes\mathcal{N}(0,1)$. The $d$-dimensional chain generated by MALT has a potential satisfying \Cref{Assumption:ProductForm}, corresponding to a marginal potential $\phi$ (independent of the dimension). The product form of the target together with the proposal density enforces a product structure also on the proposal. In other words, the components of the numerical Langevin trajectories proposed are also independent and identically distributed. We denote $\mu_{\bx}(A)\triangleq\mu(\{\bx_{0:L}\in A\})$ the marginal measure of $\bx_{0:L}\triangleq(\bx_0,\cdots,\bx_L)$ characterized by \eqref{eq_trajectory_target}.
For any fixed choice of physical time $T$ and friction $\gamma$ we establish asymptotic normality of the total energy difference. 
\begin{theorem}\label{thm:MALT_CLT}
Suppose that \Cref{Assumption:ProductForm} holds.
Choose $h=\ell d^{-1/4}$ for some constant $\ell>0$. For any $T>0$ and $\gamma\geq0$, let $L=\lfloor T/h\rfloor$ and $\bx_{0:L}\sim \mu_{\bx}$.
%Assume potentials satisfy \Cref{Assumption:ProductForm} with the same marginal potential $\phi$ but varying dimension and take $\bx_{0:L}\sim \mu_{\bx}$.
Then as $d\to\infty$ 
%the total energy differences satisfy 
\[
\Delta(\bx_{0:L})
~\Rightarrow~
\mathcal{N}\Big(\frac{1}{2}\ell^4\Sigma,\ell^4\Sigma\Big)\,,
\]
such that for $S(x,v)\triangleq\frac{1}{12}v^3\phi^{(3)}(x)+\frac{1}{4}v\phi''(x)\phi'(x)$ we have 
\[
\Sigma~=~\E\bigg[\Big(\int_0^TS(X_t,V_t)\dd t\Big)^2\bigg]\,.
\]
\end{theorem}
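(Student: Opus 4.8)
The plan is to exploit the product structure of both the target and the proposal. Since $\Phi(\bx)=\sum_{i=1}^d\phi(\bx(i))$ and the OBABO kernel $\bfQ_{h,\gamma}$ factorizes over coordinates, the local error $\mathcal{E}_h$ is additive over components, so under $\mu_{\bx}$ the total error decomposes as $\Delta(\bx_{0:L})=\sum_{j=1}^d\Delta^{(j)}$, where the single-component errors $\Delta^{(j)}=\sum_{i=1}^L\mathcal{E}_h(\bx_{i-1}(j),\bx_i(j))$ are i.i.d. copies of a scalar quantity $\Delta^{(1)}$ attached to a one-dimensional Langevin trajectory. The theorem then reduces to a central limit theorem for a triangular array of i.i.d. summands, and the whole problem becomes the sharp asymptotic analysis of the single scalar variable $\Delta^{(1)}$ as $h=\ell d^{-1/4}\to0$ and $L=\lfloor T/h\rfloor\to\infty$.

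First I would expand the per-step local error. A third-order Taylor expansion of $\mathcal{E}_h(\bx_{i-1},\bx_i)$ around the step, using the leapfrog relations $\bx_i-\bx_{i-1}=h\bv_{i-1/2}$ together with the velocity updates, gives a leading contribution $h^3S(\bx_{i-1},\bv_{i-1}')+O(h^4)$, where $S$ is precisely the third-order coefficient appearing in the statement (the $O(h)$ and $O(h^2)$ terms cancel because the St\"ormer--Verlet scheme is second order). Summing over the $L\sim T/h$ steps and writing $\sum_i h^3S=h^2\sum_i h\,S$, I recognize a Riemann sum of mesh $h$ for $h^2\int_0^TS(X_t,V_t)\dd t$. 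Two approximations must then be controlled in $\mathbb{L}_2(\mu_{\bx})$: replacing the numerical points $(\bx_i,\bv_i)$ by the continuous trajectory $(X_{ih},V_{ih})$, for which \Cref{prop:StrongAccuracy} supplies the $O(h)$ synchronous error, and replacing the Riemann sum by the integral. Both are $O(h)$ contributions to $h\sum_i h\,S$, hence $O(h^3)$ contributions to $\Delta^{(1)}$, so that
\[
\Delta^{(1)}=h^2\int_0^TS(X_t,V_t)\dd t+R,\qquad \big(\E[R^2]\big)^{1/2}=O(h^3).
\]
Squaring and taking expectations yields $\E[(\Delta^{(1)})^2]=h^4\Sigma+o(h^4)$ with $\Sigma=\E[(\int_0^TS\dd t)^2]$; multiplying by $d$ and using $h^4=\ell^4/d$ gives $d\,\E[(\Delta^{(1)})^2]\to\ell^4\Sigma$, which will be the limiting variance.

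To pin down the mean I would use that a single coordinate of MALT is itself a reversible Metropolis scheme, so the construction behind \Cref{prop_malt_reversible} yields the exact identity $\E_{\mu_{\bx}}[e^{-\Delta^{(1)}}]=1$ for every $d$. Since $\Delta^{(1)}=O_{\mathbb{L}_2}(h^2)$, expanding $e^{-\Delta^{(1)}}=1-\Delta^{(1)}+\tfrac12(\Delta^{(1)})^2-\cdots$ and using moment bounds gives
\[
\E[\Delta^{(1)}]=\tfrac12\E[(\Delta^{(1)})^2]+O(h^6)=\tfrac12h^4\Sigma+o(h^4),
\]
so $\E[\Delta]=d\,\E[\Delta^{(1)}]\to\tfrac12\ell^4\Sigma$, while $\mathrm{Var}(\Delta)=d\,\mathrm{Var}(\Delta^{(1)})\to\ell^4\Sigma$ by the previous step. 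Asymptotic normality of the centred sum $\Delta-\E[\Delta]=\sum_j(\Delta^{(j)}-\E[\Delta^{(j)}])$ then follows from the Lindeberg--Feller theorem; I would verify a Lyapunov condition with fourth moments, namely $d\,\E[(\Delta^{(1)}-\E[\Delta^{(1)}])^4]=O(d\,h^8)=O(d^{-1})\to0$, which requires $\E[(\int_0^TS\dd t)^4]<\infty$. Combining the limiting mean, the limiting variance and the CLT delivers $\Delta\Rightarrow\mathcal{N}(\tfrac12\ell^4\Sigma,\ell^4\Sigma)$.

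The main obstacle is the uniform error control in the single-component expansion: the integrand $S$ grows polynomially (it contains $v^3$ and the linearly growing factor $\phi'$), so the Taylor remainders, the strong-accuracy substitution and the Riemann-sum error must all be estimated in $\mathbb{L}_2$ against trajectory moments, and these estimates have to remain summable over the $L\asymp d^{1/4}$ steps while the time-step shrinks. This is exactly where the hypotheses of \Cref{Assumption:ProductForm} enter: the uniform bounds on $\phi^{(2)},\phi^{(3)},\phi^{(4)}$ tame the Taylor coefficients, and the eighth-moment condition $\int_\R x^8e^{-\phi}\dd x<\infty$ furnishes the velocity- and position-moment bounds needed to push the remainder to $o(h^4)$ after squaring and summing over the $d$ components, as well as the finite fourth moment required for the Lyapunov condition.
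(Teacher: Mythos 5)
Your overall route is the same as the paper's: decompose $\Delta$ into $d$ IID single-coordinate energy errors, Taylor-expand the local error to identify $h^2\int_0^T S(X_t,V_t)\,\dd t$ as the $\mathbb{L}_2$-limit of $h^{-2}\Delta^{(1)}$ using \Cref{prop:StrongAccuracy} and the moment bounds supplied by \Cref{Assumption:ProductForm}, fix the mean through the reversibility identity $\E[e^{-\Delta^{(1)}}]=1$, and close with a triangular-array CLT. The paper packages the last two steps by verifying the symmetry property $\E_\mu[f(-\Delta)e^{-\Delta}]=\E_\mu[f(\Delta)]$ (\Cref{prop:log-AR_properties}) together with the two moment conditions of \Cref{prop:DeltaMALT} and then invoking an external CLT framework; you do them by hand, and that is where two genuine gaps appear.

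First, the mean. Expanding $e^{-\Delta^{(1)}}=1-\Delta^{(1)}+\tfrac12(\Delta^{(1)})^2-\cdots$ and taking expectations term by term is not justified as written: the remainder is of order $\E\big[|\Delta^{(1)}|^3e^{(\Delta^{(1)})_-}\big]$, and on the rare event that $\Delta^{(1)}$ is very negative the weight $e^{-\Delta^{(1)}}$ can be large, so this is not controlled by polynomial moments of $\Delta^{(1)}$ alone. The standard fix — and the reason the paper proves \Cref{prop:log-AR_properties} rather than only the identity $\E[e^{-\Delta}]=1$ — is to use the full symmetry $\E[f(-\Delta)e^{-\Delta}]=\E[f(\Delta)]$ to convert the exponentially weighted negative tail into an unweighted positive tail, which is then handled by the truncated second-moment condition (part (ii) of \Cref{prop:DeltaMALT}). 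You need to either import that argument or supply an exponential-moment bound you do not currently have.

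Second, the Lyapunov condition. You claim $d\,\E[(\Delta^{(1)}-\E\Delta^{(1)})^4]=O(dh^8)$, but your error analysis is carried out entirely in $\mathbb{L}_2$: you establish $\E[R^2]=o(h^4)$, not $\E[R^4]=o(h^8)$, and a fourth-moment version of the remainder estimates (e.g.\ of \Cref{lem:VarianceForm}) would demand substantially higher moments of the position than the eighth-moment condition in \Cref{Assumption:ProductForm} provides. The finiteness of $\E[(\int_0^T S\,\dd t)^4]$ controls the limit variable, not $\Delta^{(1)}$ itself. The fix is to verify the weaker Lindeberg condition instead: since $h^{-2}\Delta^{(1)}\to\int_0^T S\,\dd t$ in $\mathbb{L}_2$, the family $h^{-4}(\Delta^{(1)})^2$ converges in $\mathbb{L}_1$ and is therefore uniformly integrable, which gives $h^{-4}\E[(\Delta^{(1)})^2\mathds{1}_{|\Delta^{(1)}|>\epsilon}]\to0$ — exactly the content of \Cref{prop:DeltaMALT}(ii). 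With these two repairs your argument goes through and coincides in substance with the paper's proof.
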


The proof is given in \Cref{sec:proofs:Theorem2}. It relies on extending the optimal scaling framework for Metropolis-Hastings methods recently introduced in Section~3 of \cite{vogrinc2021counterexamples} to also include algorithms based on trajectories, such as HMC and MALT. This is achieved through the study of the asymptotic properties of the \emph{total energy difference} $\Delta$, or equivalently the log Metropolis-Hastings rates $-\Delta$, as they are called in \cite{vogrinc2021counterexamples}. Establishing strong accuracy of the trajectories for the function $S$ is crucial. Our analysis relies on approximating $S$ by a sum of Lipschitz functions for which strong accuracy follows by \Cref{prop:StrongAccuracy}. Next, we use \Cref{thm:MALT_CLT} to deliver guidelines for the tuning of MALT. We generalize the results discussed in Sections~3~and~4~of~\cite{Beskos::2013} so that they also hold for non-negative friction $\gamma$. In the special case of $\gamma=0$ we recover the results on HMC.

\begin{proposition}\label{prop:SingleCoordinateDisplacement}
Under the assumptions of \Cref{thm:MALT_CLT} the following statements hold:
\begin{enumerate}
\item[(i)] The acceptance rate satisfies, with $\Psi$ denoting the CDF of a standard Gaussian
\[
\E\left[1\wedge e^{-\Delta(\bx_{0:L})}\right]
\quad\xrightarrow{d\to\infty}\quad
a(\ell)~\triangleq~2\Psi\left(-\frac{\ell^2\sqrt{\Sigma}}{2}\right)\,.
\]
\item[(ii)] Let $f\colon\R\to\R$ be locally Lipschitz such that $|f(x)-f(y)|\leq C(|y|+|x|)|x-y|$ for some $C>0$. Denote $\Upsilon_f\triangleq\E\left[\left(f(X_T)-f(X_0)\right)^2\right]$. Then
\[
\E\left[\left(f(\bX^{n+1}(1))- f(\bX^n(1))\right)^2\right]
\quad\xrightarrow{d\to\infty}\quad
\Upsilon_f
a(\ell)\,.
\]
\item[(iii)] Set $X_T'=X_T\mathbf{1}_{[0,a(\ell)]}(U)+X_0\mathbf{1}_{(a(\ell),1]}(U)$ for a uniform random variable $U$. Then
\[
(\bX^n(1),\bX^{n+1}(1))
~\Rightarrow~
(X_0,X_T')\,.
\]
\end{enumerate}
\end{proposition}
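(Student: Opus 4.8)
The three claims all descend from \Cref{thm:MALT_CLT} together with an asymptotic decoupling of the first coordinate from the accept/reject decision. Write $Z\sim\mathcal{N}(\tfrac{1}{2}\ell^4\Sigma,\ell^4\Sigma)$ for the limiting energy difference and set $\sigma\triangleq\ell^2\sqrt{\Sigma}$, so that $Z$ has mean $\sigma^2/2$ and variance $\sigma^2$. For (i), since $x\mapsto 1\wedge e^{-x}$ is bounded and continuous, weak convergence immediately gives $\E[1\wedge e^{-\Delta}]\to\E[1\wedge e^{-Z}]$. The plan is then to evaluate this Gaussian integral via the classical identity valid whenever the mean equals half the variance: splitting on the sign of $Z$ one has $\E[1\wedge e^{-Z}]=\PP(Z\le 0)+\E[e^{-Z}\mathbf{1}_{Z>0}]$, and completing the square in the second term (using $\mathrm{mean}=\sigma^2/2$ to cancel the constant) shows both summands equal $\Psi(-\sigma/2)$. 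This yields $a(\ell)=2\Psi(-\ell^2\sqrt{\Sigma}/2)$.

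For the decoupling I exploit the product structure: since $\Phi(\bx)=\sum_j\phi(\bx(j))$ the local error splits coordinatewise, so $\Delta=\Delta_1+\Delta_{-1}$, where $\Delta_1$ collects the terms of the first coordinate and $\Delta_{-1}$ the remaining $d-1$ coordinates. The numerical trajectory being a product over i.i.d. coordinates, $\Delta_{-1}$ is independent of the first-coordinate trajectory, in particular of $(X_0,\bx_L(1))$. The second-moment control underlying \Cref{thm:MALT_CLT} gives $\E[\Delta_1^2]=O(1/d)\to0$, whence $\Delta_1\to0$ in $\mathbb{L}_2$ and, by Slutsky, $\Delta_{-1}\Rightarrow Z$. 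Finally, by strong accuracy (\Cref{prop:StrongAccuracy}) the numerical endpoint satisfies $\bx_L(1)\to X_T$ in $\mathbb{L}_2$, allowing the discretized proposal to be replaced by the exact Langevin endpoint in every limit below.

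Part (ii) follows from this decoupling. At stationarity a rejection leaves the first coordinate unchanged, so $(f(\bX^{n+1}(1))-f(\bX^n(1)))^2=(f(\bx_L(1))-f(X_0))^2\,\mathbf{1}\{U<1\wedge e^{-\Delta}\}$; taking the expectation over $U$ reduces the task to the limit of $\E[(f(\bx_L(1))-f(X_0))^2(1\wedge e^{-\Delta})]$. I first replace $1\wedge e^{-\Delta}$ by $1\wedge e^{-\Delta_{-1}}$: since $x\mapsto1\wedge e^{-x}$ is $1$-Lipschitz the incurred error is at most $\E[(f(\bx_L(1))-f(X_0))^2|\Delta_1|]$, which by Cauchy--Schwarz is bounded by $\E[(f(\bx_L(1))-f(X_0))^4]^{1/2}\E[\Delta_1^2]^{1/2}$. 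The fourth moment is finite by the bound $|f(x)-f(y)|\le C(|x|+|y|)|x-y|$ together with the eighth-moment integrability in \Cref{Assumption:ProductForm} and standard moment bounds for $\bx_L(1)$, so the error vanishes. The remaining factors $(f(\bx_L(1))-f(X_0))^2$ (first coordinate) and $1\wedge e^{-\Delta_{-1}}$ (other coordinates) are independent, and $\E[1\wedge e^{-\Delta_{-1}}]\to a(\ell)$ by (i) and $\Delta_{-1}\Rightarrow Z$, while $\E[(f(\bx_L(1))-f(X_0))^2]\to\Upsilon_f$ by strong accuracy; the product gives $\Upsilon_f\,a(\ell)$.

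For (iii) I will establish the joint weak limit $(X_0,\bx_L(1),\Delta,U)\Rightarrow(X_0,X_T,Z,U)$, in which $Z$ is independent of $(X_0,X_T)$ and $U$ is an independent uniform. This is exactly the decoupling above: $(X_0,\bx_L(1))$ has a law independent of $\Delta_{-1}$, $\Delta_{-1}\Rightarrow Z$, $\Delta_1\to0$, and $\bx_L(1)\to X_T$. Writing $\bX^{n+1}(1)=\bx_L(1)\mathbf{1}\{U<1\wedge e^{-\Delta}\}+X_0\mathbf{1}\{U\ge1\wedge e^{-\Delta}\}$ and applying the continuous mapping theorem (the boundary $\{U=1\wedge e^{-Z}\}$ being null) identifies the limit as $X_T\mathbf{1}\{U<1\wedge e^{-Z}\}+X_0\mathbf{1}\{U\ge1\wedge e^{-Z}\}$. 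Since $(Z,U)$ are independent of $(X_0,X_T)$, conditionally on $(X_0,X_T)$ the acceptance event has probability $\PP(U<1\wedge e^{-Z})=a(\ell)$ and is independent of $(X_0,X_T)$; hence the accept indicator may be represented as $\mathbf{1}_{[0,a(\ell)]}(U')$ for a fresh uniform, which is precisely the definition of $X_T'$. The main obstacle throughout is this decoupling step: making rigorous the joint convergence with the required independence and, in particular, securing $\E[\Delta_1^2]\to0$ alongside the fourth-moment bound on $f(\bx_L(1))-f(X_0)$, both of which lean on the moment estimates behind \Cref{thm:MALT_CLT} and on the strong accuracy of \Cref{prop:StrongAccuracy}.
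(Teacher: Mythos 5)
Your proposal is correct and follows essentially the same route as the paper: (i) by weak convergence of $\Delta$ plus the standard Gaussian identity for $\E[1\wedge e^{-Z}]$ when the mean is half the variance (the paper cites Proposition~2.4 of \cite{Roberts::1997} where you compute it directly), and (ii)--(iii) by the coordinate decomposition $\Delta=\Delta_1+\Delta_{-1}$, the $\mathbb{L}_2$-smallness of the single-coordinate energy error, independence across coordinates, and \Cref{prop:StrongAccuracy} to replace $\bx_L(1)$ by $X_T$. The only cosmetic difference is that for (iii) you package the argument as joint weak convergence of $(X_0,\bx_L(1),\Delta,U)$ followed by continuous mapping, whereas the paper tests against bounded Lipschitz functions and invokes Portmanteau; the underlying decoupling is identical.
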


Point (i) identifies the asymptotic average acceptance rate under the $h=\ell d^{-1/4}$ scaling.
Note that $h=\ell d^{-1/4}$ is the only decay rate of the time-step that will lead to a non-trivial distributional limit in \Cref{thm:MALT_CLT} and to a non-trivial limiting average acceptance rate. Any other decay rate will lead to a limiting acceptance rate of either zero or one (to see this formally check proof of \Cref{prop:ScalingOptimality}).
This result for general friction $\gamma$ is exactly the same as for the HMC case $\gamma=0$ studied in Section~3.3~of~\cite{Beskos::2013}.

Apart from extending the results from the HMC case to any choice of friction, point (ii) generalizes Section~3.4~of~\cite{Beskos::2013} where only the mean function is considered. The result describes the asymptotic lag-one autocorrelation of different functions which can be used as heuristic indicators of the performance of MALT. For the mean function, it corresponds to the asymptotic expected squared jump distance, a performance criterion often used for adaptive tuning of MCMC methods (see \cite{pasarica2010adaptively})). Under \Cref{Assumption:ProductForm} we can guarantee this for functions growing as fast as quadratic but with a stronger moment condition we could extend this further for functions that grow more rapidly.

Point (iii) is again an extension of the results in HMC case, studied in Section~3.5~of~\cite{Beskos::2013}. It describes the limiting behavior of each marginal coordinate of the chain generated by MALT (with other coordinates integrated out with respect to the stationary measure). If MALT is initiated in stationarity and only a single coordinate is observed, then its behavior can be described in terms of a Langevin trajectory and an independent coin. %Like in the HMC case (and unlike the RWM and MALA cases) it is important to know that the rest of the coordinates are in stationarity, as a single coordinate is not Markovian with respect to its own filtration, not even asymptotically. 
\cite[Section 4]{Beskos::2013} considers two measures of efficiency for HMC, the number of successful transitions per gradient evaluation and the expected squared jump distance per gradient evaluation. maximizing the first one is equivalent to minimizing the computational cost of a single accepted transition, while maximizing the second will also prioritizes schemes which move further.

Number of successful transitions per gradient evaluation is proportional to the acceptance rate and inversely proportional to the number of steps $L=\lfloor T/h\rfloor$. For a fixed $T$ this corresponds to maximizing
$h\E\left[1\wedge e^{-\Delta(\bx_{0:L})}\right]$.
Similar reasoning holds for other measures of efficiency. We will consider the following measures of efficiency.
For $f$ as in \Cref{prop:SingleCoordinateDisplacement}(ii) we define \emph{expected squared $f$-distance} per gradient evaluation as
\[
\frac{1}{L}\E\left[\left(f\left(\bX^{n+1}(1)\right)- f\left(\bX^n(1)\right)\right)^2\right]\,,
\]
If we set $f$ to be linear we recover the expected squared jump distance per gradient evaluation as in \cite{Beskos::2013}. This efficiency measure is proportional to the Dirichlet form of MALT evaluated at $f$ divided by the length of the trajectory. For fixed $T$ and $\gamma$ the efficiency measures corresponding to different functions $f$ asymptotically differ from each other only by a factor that is independent of the time-step. Hence, optimizing any of them is asymptotically equivalent.
We are now in position to show that the step-size decay rate $d^{-1/4}$ is optimal according to any of these asymptotic measures. 

\begin{proposition}\label{prop:ScalingOptimality}
 Let the \Cref{Assumption:ProductForm} be satisfied and let $f$ be as in \Cref{prop:SingleCoordinateDisplacement}(ii). Let $T,\ell>0$ and $\gamma\geq0$ be constants. Let $h\to0$ be a sequence of time-steps and $L=\lfloor T/h\rfloor$. 
 If either $d^{1/4}h\to0$ or $d^{1/4}h\to\infty$, then
 \[
 d^{1/4}\times \frac{1}{L}\E\left[\left(f\left(\bX^{n+1}(1)\right)- f\left(\bX^n(1)\right)\right)^2\right]
 \quad\xrightarrow{d\to\infty}\quad
0\,.
 \]
 If $d^{1/4}h\to\ell$ for some $\ell\in(0,\infty)$, then for ${\rm eff}(\ell)~\triangleq~\ell a(\ell)$ we have
 \[
 d^{1/4}\times \frac{1}{L}\E\left[\left(f\left(\bX^{n+1}(1)\right)- f\left(\bX^n(1)\right)\right)^2\right]
 \quad\xrightarrow{d\to\infty}\quad
 \Upsilon_f \times {\rm eff}(\ell)
\,.
 \]
 There exists a unique optimal $\ell^*$ maximizing ${\rm eff}(\ell^*)$, for which the corresponding optimal acceptance rate equals
 \[
 a(\ell^*)
 ~\approx~
 0.651\,.
 \]
\end{proposition}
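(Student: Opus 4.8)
The plan is to factor the per-gradient efficiency into a deterministic prefactor times the expected squared displacement, reduce the critical regime to \Cref{prop:SingleCoordinateDisplacement}(ii), dispose of the two degenerate regimes by crude bounds, and finally carry out the resulting one-dimensional optimisation by hand. The starting observation is that, since $L=\lfloor T/h\rfloor$ and $h\to0$, we have $Lh\to T$, hence $d^{1/4}/L\sim d^{1/4}h/T=\ell_d/T$ with $\ell_d\triangleq d^{1/4}h$. This isolates the identity
\[
d^{1/4}\times\frac{1}{L}\,\E\!\left[\left(f(\bX^{n+1}(1))-f(\bX^{n}(1))\right)^2\right]
=\frac{\ell_d}{T}\big(1+o(1)\big)\,\E\!\left[\left(f(\bX^{n+1}(1))-f(\bX^{n}(1))\right)^2\right],
\]
so that everything reduces to the interplay between the growth of $\ell_d$ and the behaviour of the expected squared displacement.

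In the critical regime $\ell_d\to\ell\in(0,\infty)$, \Cref{prop:SingleCoordinateDisplacement}(ii) gives that the displacement factor tends to $\Upsilon_f a(\ell)$ while $\ell_d/T\to\ell/T$; their product is $\Upsilon_f\,{\rm eff}(\ell)$ up to the fixed positive constant $1/T$, which is immaterial for the optimisation. For the degenerate regimes I would argue as follows. When $\ell_d\to0$, bound the displacement above by the squared displacement of the \emph{proposal} (dropping the acceptance indicator); by \Cref{prop:StrongAccuracy} the single-coordinate proposal converges in $\mathbb{L}_2$ to the Langevin trajectory $(X_0,X_T)$, so this stays within a constant multiple of $\Upsilon_f$, while $\ell_d/T\to0$ sends the product to $0$. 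When $\ell_d\to\infty$, split the displacement by Cauchy--Schwarz into a fourth moment times $\PP(\text{accept})^{1/2}$: the fourth moment is uniformly bounded using the growth bound on $f$, the eighth-moment hypothesis in \Cref{Assumption:ProductForm}, and standard moment bounds on the numerical trajectory, whereas $\PP(\text{accept})=\E[1\wedge e^{-\Delta}]$ vanishes faster than any power of $\ell_d$ because $\Delta$ has mean of order $\ell_d^4\Sigma\to\infty$. The clean way to quantify the latter is $1\wedge e^{-\Delta}\le e^{-\Delta/2}$, whose expectation, computed from the product-over-coordinates structure exactly as in the proof of \Cref{thm:MALT_CLT}, decays like $e^{-\ell_d^4\Sigma/8}$; this super-polynomial decay dominates the polynomial factor $\ell_d/T$. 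The main obstacle lies precisely in this slow-decay case: \Cref{thm:MALT_CLT} and \Cref{prop:SingleCoordinateDisplacement} are stated for \emph{fixed} $\ell$, so one must revisit the cumulant expansion of $\Delta$ in the proof of \Cref{thm:MALT_CLT} to obtain control of $\E[e^{-\Delta/2}]$ that is uniform enough to let $\ell_d\to\infty$; everything else is soft.

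Finally, for the optimisation I would work directly with ${\rm eff}(\ell)=\ell a(\ell)=2\ell\,\Psi(-c\ell^2)$, where $c\triangleq\sqrt{\Sigma}/2$. Writing $\Psi'$ for the standard Gaussian density and substituting $u=c\ell^2$, the stationarity condition ${\rm eff}'(\ell)=2\Psi(-c\ell^2)-4c\ell^2\Psi'(c\ell^2)=0$ becomes
\[
\Psi(-u)=2u\,\Psi'(u).
\]
To see this has a unique positive root I would analyse $g(u)\triangleq\Psi(-u)-2u\,\Psi'(u)$: using $\Psi''(u)=-u\Psi'(u)$ one gets $g'(u)=\Psi'(u)(2u^2-3)$, so $g$ is decreasing on $[0,\sqrt{3/2}]$ and increasing on $[\sqrt{3/2},\infty)$. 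Since $g(0)=\tfrac12>0$, $g(\sqrt{3/2})<0$, and $g(u)\to0^-$ (from the Mills-ratio asymptotics $\Psi(-u)\sim\Psi'(u)/u$), $g$ crosses zero exactly once, at some $u^*\in(0,\sqrt{3/2})$, and stays negative afterwards. This unique root gives the unique maximiser $\ell^*$ (global, because ${\rm eff}>0$ vanishes at both endpoints of $(0,\infty)$), and the optimal acceptance rate is $a(\ell^*)=2\Psi(-u^*)$. Solving $\Psi(-u)=2u\,\Psi'(u)$ numerically yields $u^*\approx0.452$, whence $a(\ell^*)=2\Psi(-0.452)\approx0.651$, matching the HMC value of \cite{Beskos::2013}.
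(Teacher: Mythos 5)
Your treatment of the critical regime, the $d^{1/4}h\to0$ regime, and the final optimisation is correct and essentially matches the paper (the paper proves uniqueness of the stationary point via monotonicity of $s\mapsto s\psi(s)/\Psi(-s)$ rather than your sign analysis of $g$, but both are valid, and your observation that the stated limit omits a harmless fixed factor $1/T$ from $d^{1/4}/L\to\ell/T$ is fair).

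The genuine gap is in the regime $d^{1/4}h\to\infty$, and it is exactly the obstacle you flag but do not resolve. Your route requires $\E[e^{-\Delta/2}]\lesssim e^{-dh^4\Sigma/8}$, i.e.\ a per-coordinate bound $\E[e^{-\Delta_h/2}]\le e^{-ch^4}$ holding uniformly as $h\to0$. Nothing in the paper supplies this: \Cref{thm:MALT_CLT} and \Cref{prop:DeltaMALT} control only the first and second moments of $\Delta_h$ (and weak convergence), not its moment generating function, and an upper bound on $\E[e^{-\Delta_h/2}]$ of the required strength needs control of exponential moments of the negative tail of $\Delta_h$, which would demand revisiting the entire error analysis under much stronger hypotheses than \Cref{Assumption:ProductForm}. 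The paper avoids MGFs altogether: it first decouples the first coordinate from the acceptance indicator via the symmetry property \Cref{prop:log-AR_properties} (the factor-two inequality \eqref{eq:ScalingOptimalityAux1}, needed because $\Delta_{h,1}$ appears inside the indicator), then splits on the event $\mathcal{A}_d=\{\sum_{j\ge2}\Delta_{h,j}\ge\tfrac{d-1}{2}\E[\Delta_h]\}$. On $\mathcal{A}_d$ the deterministic bound $1\wedge e^{-x}\le e^{-x}$ gives decay $e^{-dh^4\Sigma/6}$; on $\mathcal{A}_d^c$ Chebyshev's inequality together with $\E[\Delta_h]/\E[\Delta_h^2]\to\tfrac12$ gives $\PP(\mathcal{A}_d^c)=O(1/(dh^4))$, and $d^{1/4}h\cdot(dh^4)^{-1}=(d^{1/4}h)^{-3}\to0$. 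Note that a polynomial rate $O(1/(dh^4))$ for the acceptance probability is already enough against the prefactor $d^{1/4}h$ — your insistence on super-polynomial decay is what forces you into the unavailable Chernoff bound. If you replace your MGF step with this second-moment/Chebyshev split (which combines fine with your Cauchy--Schwarz decoupling, since $(d^{1/4}h)\E[1\wedge e^{-\Delta}]^{1/2}=O((d^{1/4}h)^{-1})$), the argument closes.
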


This extends the optimal scaling results in Section~4 of \cite{Beskos::2013} to MALT. The guidelines for tuning MALT are the same as for HMC:
\[
\textit{Scale the  step $h\propto d^{-1/4}$ and tune it to accept $65.1\%$ of proposals.}
\]

It is remarkable that the optimal choice of $\ell$ does not depend on the function $f$ considered, nor the distribution $\Pi$. The same is however, not true for the choices of $T$ and $\gamma$. Using \Cref{prop:ScalingOptimality} we can deduce that for an appropriate constant $C\approx 0.619$ we have ${\rm eff}(\ell^*)=C \Sigma^{-1/4}$. This implies that the smaller the $\Sigma$ the larger time-steps optimized sampler takes. However, the constants $\Upsilon_f$ also depend on $T$ and $\gamma$ and as a consequence the choice of $T$ and $\gamma$ such that the corresponding $\ell^*(T,\gamma)$ maximizes an efficiency measure depends greatly on the choice of that efficiency measure (in terms of $f$). The findings are not consistent (and are even contradictory) across a simple selection of functions $f$ even in the case of the standard Gaussian potential.

For a standard Gaussian marginal potential $\phi(x)=\frac{x^2}{2}$ we have $\phi'(x)=x$, $\phi''(x)=1$ and $\phi^{(3)}=0$. The identity $d(X_t^2)=2X_tdX_t=2X_tV_t$, the stationarity of $(X_t,V_t)$ and Isserlis' theorem imply
\begin{align*}
\Sigma_{\gamma,T}
~=~
\frac{1}{16}\E\left[\left(\int_0^TV_tX_t\dd t\right)^2\right]
~&=~
\frac{1}{64}\E\left[\left(X_T^2-X_0^2\right)^2\right]
~=~
\frac{1}{32}\left(\E[X_0^4]-\E[X_0^2X_T^2]\right)
\\&=~
\frac{1}{16}\left(\E[X_0^2]^2-\E[X_0X_T]^2\right)
~=~
\frac{1}{16}\left(1-\rho^2_{\gamma}(T)\right)\,,
\end{align*}
where $\rho_\gamma(T)=\rm{Corr}(X_0,X_T)$ follows formula \eqref{eq_corr_langevin} with $\sigma_i=1$.

The rescaled optimal time-step $\ell_\gamma^*(T)$ is proportional to  $(1-\rho^2_\gamma(T))^{-1/4}$. We remark that the map $T\mapsto\ell_\gamma^*(T)$ fluctuates a lot for small friction especially for high correlated samples: for $\gamma=0$ it even diverges for any $T=k\pi$, $k\in\N$. At the contrary, the map $T\mapsto\ell_\gamma^*(T)$ fluctuates less as $\gamma$ increases, and becomes monotonically decreasing as soon as $\gamma\ge 2$. This suggests that the joint tuning of $\ell$ and $T$ is more stable when choosing a positive friction. 
%We speculate that this holds beyond Gaussian case as well. This reemphasises the potential of MALT as a robust alternative to HMC. 

% \lionel{
% Finally, we note that the number of gradient evaluations to reach a certain integration time $T$ depends on the correlation $\rho_\gamma(T)$ but not directly on $\gamma$. In particular, for any $\gamma$ and any choice of $T$ such that $\rho_\gamma(T)\approx 0$ we have $\ell_\gamma^*(T)$ is approximately constant.
% }
\section{Robustness to anisotropic targets}\label{sec_robustness}
In this section, we explain how MALT can improve the robustness of HMC for anisotropic targets. We first highlight a phenomenon of resonances of the Auto-Correlation Functions (ACFs) on a toy Gaussian model with heterogeneous scales.
The worst ACF of HMC is highly sensitive to the choice of trajectory length, whereas the Langevin diffusion enables a uniform control of the ACFs for an explicit damping. This phenomenon is known to limit the efficiency and complicate the tuning of HMC for anisotropic targets; see \cite{bou2017randomized, hoffman2021adaptive, neal2011mcmc}. We then extend our study to strongly log-concave targets, and relate the Langevin diffusion to another robust approach known as Randomized HMC (see \cite{bou2017randomized}). Our analysis refines the mixing rate of \cite{deligiannidis2021randomized} for RHMC, and shows that the Langevin diffusion is a limit of RHMC that achieves the fastest mixing rate. To our knowledge, this connection is new and motivates the use of partial momentum refreshment.
%We extend our study to strongly log-concave targets and connect the Langevin diffusion to Randomized HMC. We refine the mixing rate of \cite{} 

\subsection{Anisotropic Gaussian target}\label{sec_worst_ACF}
We assume that $\Phi(\bx)=\sum_{i=1}^d\bx(i)^2/(2\sigma_i^2)$, for various scales $\sigma_1,\cdots,\sigma_d>0$. The Langevin diffusion reduces to an Ornstein-Uhlenbeck process, defined through independent SDEs for $1 \le i\le d$ by
\[
\dd\begin{bmatrix}\bX_t(i)\\
\bV_t(i)
\end{bmatrix}=-\bfA_{i,\gamma}\begin{bmatrix}\bX_t(i)\\
\bV_t(i)
\end{bmatrix} \dd t+\begin{bmatrix}
0\\
\sqrt{2\gamma}\,\dd\bW_{t}(i)
\end{bmatrix}, \qquad\bfA_{i,\gamma}\triangleq\begin{bmatrix}0 & -1\\
\sigma_i^{-2} & \gamma
\end{bmatrix}.
\]
 We note $e^\bfA\triangleq\sum_{k=0}^{\infty}\bfA^k/k!$ the matrix exponential of any square matrix $\bfA$. The solution of the $i^{th}$ component of the Langevin diffusion at time $T>0$ unfolds as follows (see \cite{gardiner1985handbook}).
 \begin{equation}\label{eq_gaussian_langevin_trajectory}
    \begin{bmatrix}\bX_T(i)\\
\bV_T(i)
\end{bmatrix}=e^{-T\bfA_{i,\gamma}}\begin{bmatrix}\bX_0(i)\\
\bV_0(i)
\end{bmatrix} +\int_0^T e^{-(T-t)\bfA_{i,\gamma}}\begin{bmatrix}
0\\
\sqrt{2\gamma}\,\dd\bW_{t}(i)
\end{bmatrix}.
\end{equation}
We note $\bfA(i,j)$ the $(i,j)^{th}$ element of $\bfA$. The ACF of the $i^{th}$ component writes as
$$
\rho_{i,\gamma}(T)\triangleq{\rm Corr}(\bX_T(i),\bX_0(i))=\E[\bX_T(i)\bX_0(i)]/\sigma_i^2= e^{-T\bfA_{i,\gamma}}(1,1).
$$
We compute this matrix exponential explicitly using an eigen-value decomposition of $\bfA_{i,\gamma}$. Noting $\omega_{i,\gamma}\triangleq|(\gamma/2)^2-(1/\sigma_i)^2|^{1/2}$, we get
\begin{equation}\label{eq_corr_langevin}
    \rho_{i,\gamma}(T)=\left\{
    \begin{array}{ll}
        e^{-\gamma T/2}\big(\cos\left(T\omega_{i,\gamma}\right)+(\gamma/(2\omega_{i,\gamma}))\sin\left(T\omega_{i,\gamma}\right)\big) & \mbox{if } 0\le\gamma<2/\sigma_i \\
        e^{-T/\sigma_i}\big(1+T/\sigma_i\big) & \mbox{if } \gamma=2/\sigma_i\\
        e^{-\gamma T/2}\big(\cosh\left(T\omega_{i,\gamma}\right)+(\gamma/(2\omega_{i,\gamma}))\sinh\left(T\omega_{i,\gamma}\right)\big) & \mbox{if } \gamma>2/\sigma_i.
    \end{array}
\right.
\end{equation}
Choosing $\gamma=0$ yields Hamiltonian dynamics, with periodic ACFs: $\rho_{i,0}(T)=\cos(T/\sigma_i)$. When $\gamma>0$, the Langevin diffusion becomes ergodic and $\rho_{i,\gamma}(T)\rightarrow0$ as $T\rightarrow+\infty$. The exponential rate of convergence is optimized for the critical damping $\gamma=2/\sigma_i$. %This choice of friction corresponds to a phase transition. In the underdamped regime, i.e. when $0<\gamma<2/\sigma_i$, the complex eigenvalues of $\bfA_{i,\gamma}$ induce an oscillatory behavior in the convergence of $\rho_{i,\gamma}(T)$. In the overdamped regime, i.e. when $\gamma\ge 2/\sigma_i$, the eigen-values of $\bfA_{i,\gamma}$ are real numbers and the convergence of the ACFs is monotonic. 
In \Cref{graph_autocorr_s}, we plot the ACFs: $T\mapsto\rho_{i,\gamma}(T)$, for Hamiltonian dynamics ($\gamma=0$) and for the Langevin diffusion ($\gamma=2$). The different curves correspond to various scales $\sigma_i>0$.
%The Langevin diffusion is critically damped for a reference scale $\sigma_{\rm ref}=1$ corresponding to the solid blue line.
% \begin{figure}[!ht]
%   \centering
% \includegraphics[width=.99\linewidth]{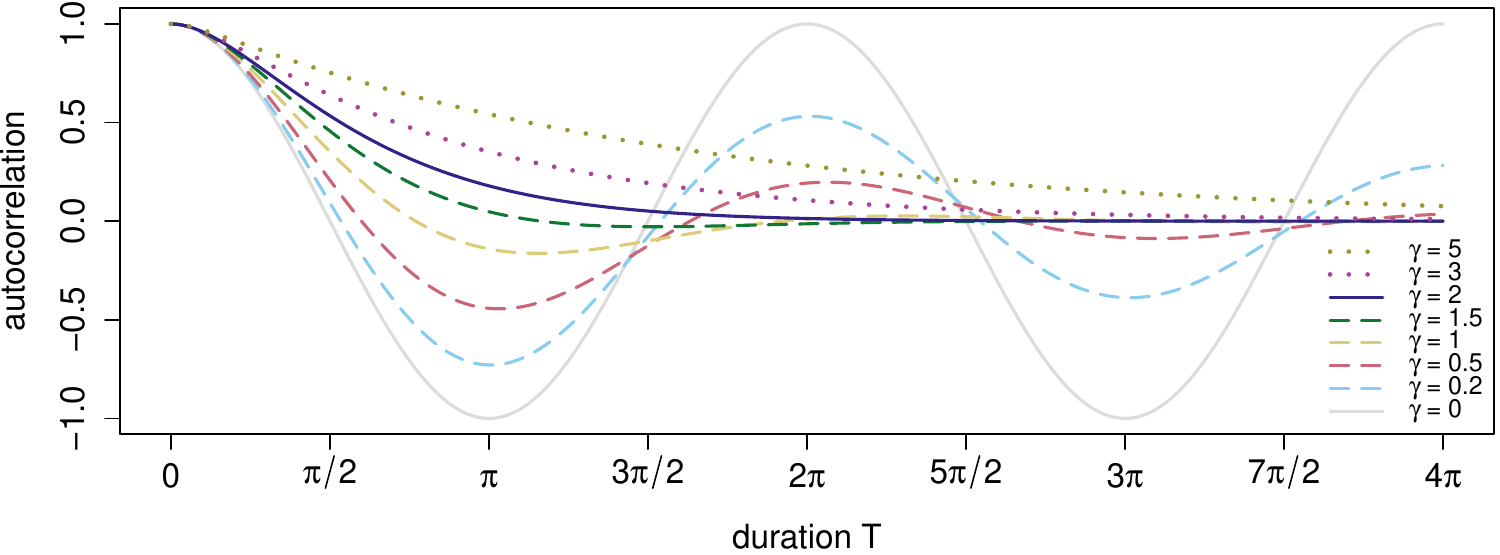}
% \caption{\it ACFs of Langevin dynamics for a fixed scale $\sigma_i=1$, and various damping parameters $\gamma\ge0$. The solid grey line corresponds to Hamiltonian dynamics ($\gamma=0$). The solid blue line corresponds to the fastest exponential rate of convergence ($\gamma=2$). Dashed lines correspond to the underdamped regime ($0<\gamma<2$). Dotted lines correspond to the overdamped regime ($\gamma>2$).
% \vspace{-0.5cm}}\label{graph_autocorr_g}
% \end{figure}
%\vspace{-0.5cm}
\begin{figure}[!ht]
  \centering\includegraphics[width=.99\linewidth]{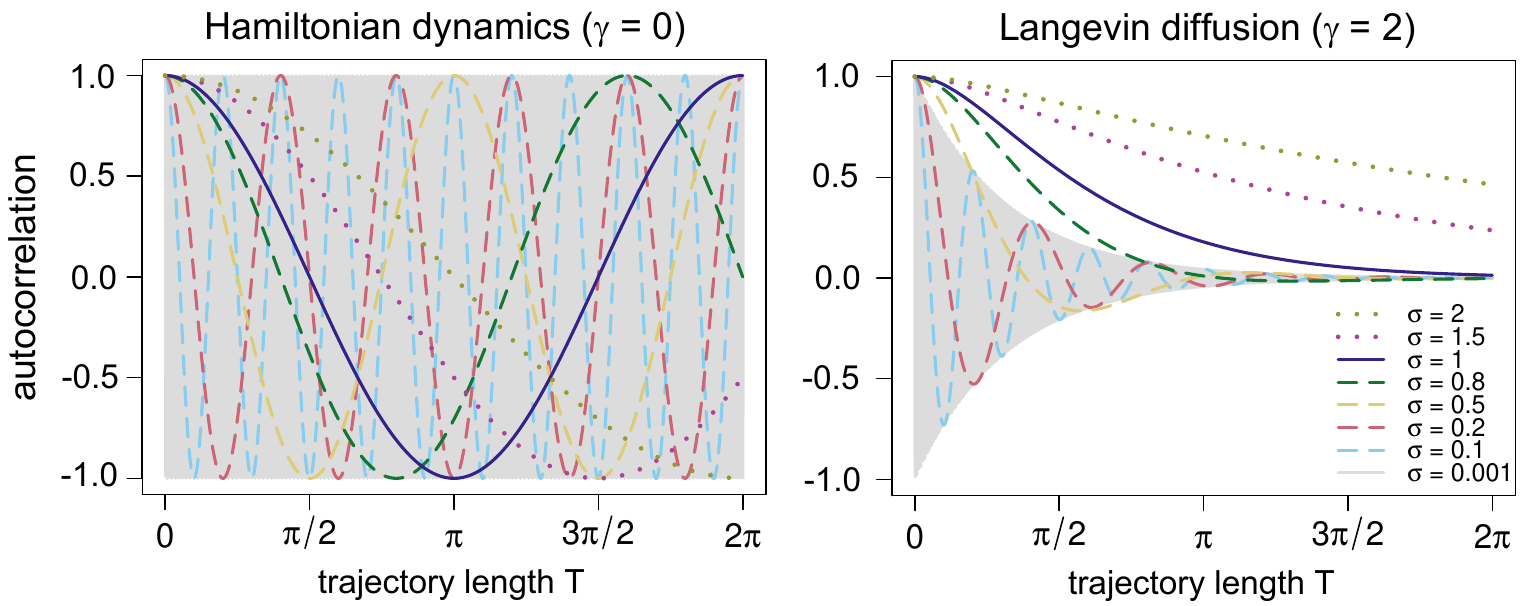}
\vspace{-0.2cm}\caption{\it ACFs of Hamiltonian dynamics ($\gamma=0$) and the Langevin diffusion ($\gamma=2$) depending on the trajectory length $T$, for various scales $\sigma_i>0$. The solid blue line corresponds to a reference scale $\sigma_{\rm ref}=1$. Dashed lines correspond to smaller scales ($\sigma_i<1$). Dotted lines correspond to larger scales ($\sigma_i>1$). The grey region corresponds to the oscillating ACF of a very thin scale ($\sigma_i=0.001$).
\vspace{-0.3cm}}\label{graph_autocorr_s}
\end{figure}
%\Cref{graph_autocorr_g} for a fixed scale $\sigma_i$, although the long term convergence to zero is optimized for $\gamma=2/\sigma_i$, on the short run the ACF of a Hamiltonian trajectory ($\gamma=0$) decays faster than any other choice of $\gamma>0$. In particular, it reaches zero the soonest, i.e. for $T=\sigma_i\pi/2$. Hamiltonian dynamics can therefore produce low correlated samples faster than Langevin dynamics for one particular component. Under heterogeneity of scales however, 

 For Hamiltonian dynamics, \Cref{graph_autocorr_s} shows that controlling the ACF of a particular scale can result in arbitrarily high correlations for other scales. The maximum of these ACFs is an erratic function, arbitrarily close to one. The presence of heterogeneous scales limits the efficiency and complicates the tuning of $T$. 
 For the Langevin diffusion, \Cref{graph_autocorr_s} illustrates the ACFs of the critical damping $\gamma=2/\sigma_{\rm ref}$ for a reference scale $\sigma_{\rm ref}=1$. We see that the ACF of any smaller scale $\sigma_i<\sigma_{\rm ref}$ is controlled by the ACF of the reference scale. This observation suggests a simple tuning rule: first choose $\gamma=2/\sigma_{\rm max}$ where $\sigma_{\rm max}\triangleq \max_{i}\sigma_i$, then select $T$ to optimize the sampling efficiency of the (principal) component corresponding to $\sigma_{\rm max}$. Adaptive tuning heuristics and algorithmic recommendations are further addressed in \cite{riou2023adaptive}.%For this choice of damping, the ACF corresponding to the largest scale yields a monotonically vanishing upper bound over all the correlations. This property highlights that the tuning problem of the integration time $T$ is easier to solve for Langevin dynamics than for Hamiltonian dynamics. 

%We also extend their robustness beyond the Gaussian framework by establishing uniform bounds on the correlations for strongly log-concave targets.

\subsection{Strongly log-concave targets: connection with Randomized HMC}\label{sec_mixing_rates}

The previous resonance phenomenon illustrates how a naive implementation of HMC with fixed integration time breaks down on an anisotropic Gaussian target.
% When properly damped, the Langevin diffusion yields a robust alternative to Hamiltonian dynamics. We propose a simple guideline for tuning the damping; see \cite{riou2023adaptive} for further recommendations. 
We highlight that robustness to anisotropic targets can be obtained by drawing $T$ randomly at each iteration to smooth the ACFs. The resulting process is often referred to as Randomized HMC (see \cite{bou2017randomized,deligiannidis2021randomized}). In this subsection, we connect the continuous time RHMC process with the Langevin diffusion and compare their exponential mixing rates for strongly log-concave targets. 
%To obtain simultaneous control of the ACFs of Hamiltonian dynamics, the main approach suggested in the literature consists of drawing at random the length of each Hamiltonian trajectory. With this approach, ACFs are averaged over multiple trajectories. The smoothing effect induced can enable control of the worst ACF, depending on the distribution chosen; see \cite{bou2017randomized, hoffman2021adaptive, neal2011mcmc}. In this work we consider exponentially distributed integration times with rate $\lambda>0$. 
 RHMC is a piecewise deterministic Markov process following Hamiltonian dynamics, with discrete momentum refreshments driven by a homogeneous Poisson process $(\bN_t)_{t\ge0}$ with rate $\lambda\ge0$. We allow for partial refreshments with persistence $\alpha\in[0,1)$, induced by independent standard Gaussian momentums $(\bxi_k)_{k\in \N}$. Under \Cref{assumption:grad_lipschitz}, it can be formally defined as the strong solution of the following SDE.
\begin{equation}\label{eq_randomized_hmc}
\dd\begin{bmatrix}\bX_{t}\\
\bV_{t}
\end{bmatrix}=\begin{bmatrix}\bV_{t}
\\
-\nabla \Phi(\bX_{t})
\end{bmatrix}\dd t+\begin{bmatrix}\bzero_d
\\
\big(\alpha\bV_{t-}+\sqrt{1-\alpha^2}\bxi_{\bN_{t-}}-\bV_{t-}\big)\,\dd\bN_{t}
\end{bmatrix}.
\end{equation}
Conditions ensuring the geometric ergodicity of the RHMC process in total variation have been established in \cite[Theorem 3.9]{bou2017randomized}. These are similar to the ones ensuring geometric ergodicity of the Langevin diffusion in \cite{mattingly2002ergodicity}. 
%More precisely, in \cite{bou2017randomized} the authors show that similar Lyapunov functions can be derived for Randomized HMC and for the Langevin diffusion. However, the minorization condition established for Randomized HMC relies on an alternative approach to account for discontinuities of the sample paths. 
Beyond this qualitative connection, to our knowledge, no quantitative comparison of the mixing rates of the two processes have been made prior to this work. Yet, explicit rates have been obtained in \cite{dalalyan2020sampling} for the Langevin diffusion, and in \cite{deligiannidis2021randomized} for RHMC, for strongly log-concave targets satisfying \Cref{assumption:strong_cvx}.
%beyond the Gaussian framework, namely
\begin{assumption}\label{assumption:strong_cvx}
The potential $\Phi\in C^2(\mathbb{R}^d)$, such that for some constants $M\ge m>0$
$$
m\bfI_d\preceq \nabla^2\Phi(\bx)\preceq M\bfI_d,\qquad \bx\in\mathbb{R}^d.
$$
\end{assumption} 
\noindent We note 
%Under \Cref{assumption:strong_cvx}, exponential mixing rates of the Langevin diffusion with respect to the $2$-Wasserstein distance have been established for various damping regimes in \cite[Theorem 1]{dalalyan2020sampling}. In particular, by choosing $\gamma=\sqrt{M+m}$, an exponential decay of the form $e^{-rt}$ with rate $r=m/\sqrt{M+m}$ is obtained. For Randomized HMC, exponential mixing rates for every persistence $\alpha\in[0,1)$ have been derived in \cite[Theorem 3]{deligiannidis2021randomized}.
% In particular, the following mixing rate and refreshment intensity are suggested:
%\begin{equation}\label{eq_rates_deligiannidis}
%r=\frac{(1+\alpha)m}{2\sqrt{M+m}}-\frac{\alpha m^{3/2}}{4(M+m)},\qquad \lambda=\frac{1}{1-\alpha^2}\left(2\sqrt{M+m}-\frac{(1-\alpha)m}{\sqrt{M+m}}\right).
%\end{equation} 
%In the sequel, we connect Randomized HMC to the Langevin diffusion, and compare the two in terms of exponential mixing rates with respect to the $2$-Wasserstein distance and the $\mathbb{L}_2(\Pi_*)$-norm. These metrics are defined as follows.
 the $2$-Wasserstein distance between two measures $\nu$ and $\nu'$ on $\R^d$ as
$$W_2(\nu,\nu')\triangleq \inf \{\E[|\bX-\bX'|^2]^{1/2}, \bX\sim\nu, \bX'\sim\nu'\}.$$
For any measure $\nu$ on $\R^{2d}$, we note $\nu_{\bx}$ its marginal measure defined on Borel sets $A$ of $\R^d$ by $\nu_{\bx}(A)\triangleq\nu(A\times \R^d)$. 
We note $\mathbb{L}_2(\Pi_*)\triangleq\{f:\R^{2d}\rightarrow \R,\, \int f^2\dd\Pi_*<\infty\}$ the set of square integrable functions with respect to $\Pi_*$. For any such $f$ and $g$, the scalar product and the norm on $\mathbb{L}_2(\Pi_*)$ are noted respectively $\langle f,g\rangle\triangleq\int f g\, \dd \Pi_*$ and $\|f\|\triangleq\langle f,f\rangle^{1/2}$. We also note $\mathbb{L}_2^0(\Pi_*)\triangleq\{f\in\mathbb{L}_2(\Pi_*),\, \int f\dd\Pi_*=0\}$ the set of centered functions in $\mathbb{L}_2(\Pi_*)$, and
$\mathbb{L}_2^0(\Pi)$
%$\mathbb{L}_2^0(\Pi)\triangleq\{f\in\mathbb{L}_2^0(\Pi_*), f(\bx,\bv)=f(\bx,\bfzero_d), (\bx,\bv)\in\R^{2d}\}$ 
the set of functions in $\mathbb{L}_2^0(\Pi_*)$ that depend only on the position. 
For any Markov process $(\bX_t,\bV_t)$ on $\R^{2d}$, we note $\bfP^t$ the transition kernel defined on Borel sets $A$ of $\R^{2d}$ by
%We use the notation $\bfP^t$ for any Markov kernel characterized by a SDE on $\R^{2d}$; e.g. solution of \eqref{eq_langevin} or \eqref{eq_randomized_hmc}. Any such kernel can be defined on Borel sets $A$ of $\R^{2d}$ by
$$
\bfP^t((\bx,\bv),A)\triangleq\mathbb{P}((\bX_t,\bV_t)\in A\,|\,(\bX_0,\bV_0)=(\bx,\bv)),\qquad (\bx,\bv)\in\R^{2d}.
$$
We note $\nu\bfP^t$ the law of $(\bX_t,\bV_t)$ starting from $(\bX_0,\bV_0)\sim\nu$. For any $f\in\mathbb{L}_2(\Pi_*)$, we note
$$
\bfP^tf(\bx,\bv)\triangleq \E[f(\bX_t,\bV_t)\,|\,(\bX_0,\bV_0)=(\bx,\bv)], \qquad (\bx,\bv)\in\R^{2d}.
$$
Mixing rates for the $2$-Wasserstein distance, and the $\mathbb{L}_2(\Pi_*)$-norm, have been obtained for the RHMC process in \cite[Theorem 3]{deligiannidis2021randomized}. In \Cref{thm:randomized_langevin_contraction}, we establish sharper rates for every persistence $\alpha\in(0,1)$ by refining their analysis.
\begin{theorem}\label{thm:randomized_langevin_contraction}
Let $\bfP^t$ be the transition kernel of the RHMC process, solution of \eqref{eq_randomized_hmc}, with persistence $\alpha\in[0,1)$ and refreshment intensity $\lambda=\frac{2\sqrt{M+m}}{1-\alpha^2}$. Suppose that \Cref{assumption:strong_cvx} holds.  Then there exist explicit constants $C,C'\le 1.56$ such that for any $t>0$
$$
W_2((\nu\bfP^t)_{\bx},\Pi)\le Ce^{-r t}W_2(\nu_{\bx}^{},\Pi),\qquad \nu=\nu_{\bx}\otimes\mathcal{N}_d(\bfzero_d,\bfI_d)
$$
%$$
%\|\bfP^tf\|\le C'e^{-r t}\|f\|,\qquad f\in\mathbb{L}_2^0(\Pi_*),\quad |f(\bx,\bv)|=|f(\bx,-\bv)|,\quad(\bx,\bv)\in\R^{2d}
%$$
$$
\|\bfP^tf\|\le C'e^{-r t}\|f\|,\qquad f\in\mathbb{L}_2^0(\Pi)
$$
where\vspace{-0.4cm}
$$
r=\frac{(1+\alpha)m}{2\sqrt{M+m}}.
$$
\end{theorem}

A proof of \Cref{thm:randomized_langevin_contraction} is derived in \Cref{proof_contraction}.  %while its limit coincides with the rate $m/\sqrt{M+m}$ established in \cite[Theorem 1]{dalalyan2020sampling} for the Langevin diffusion. 
Compared to \cite[Theorem 3]{deligiannidis2021randomized}, we strictly improve the rate $r$ obtained for any $\alpha\in(0,1)$, although the two rates get relatively close when the condition number $M/m\rightarrow\infty$. 
%More precisely, we study the distance between two copies of the process starting from different values, synchronized with the same Poisson process and Gaussian refreshments. This construction is relatively standard, although deriving sharp convergence rates relies on a carefully chosen twist of the metric to optimize the $2$-Wasserstein contraction. 
%The main argument in the proof relies on establishing uniform bounds on the coupling generator for which we obtain sufficient and necessary conditions in \Cref{proof_generator_bound}; see \eqref{eq_gen_bound}. 
Our improvement comes from the fact that the refreshment intensity $\lambda$ is chosen in order to saturate a sharper constraint; see \eqref{eq:solutions_existence}.  Restricting our attention to the actual target $\Pi$ rather than $\Pi_*$ is useful for obtaining small explicit constants $C,C'$ defined in \Cref{proof_contraction}; see \eqref{eq_C} and \eqref{eq_C'}. Similarly to \cite{dalalyan2020sampling, deligiannidis2021randomized}, our analysis is based on a synchronous coupling construction.
The main novelty of our result does not lie in the proof technique, but in the following interpretation. 
The mixing rate increases as $\alpha\rightarrow 1$, while $\lambda=2\sqrt{M+m}/(1-\alpha^2)\rightarrow\infty$. In other words, the RHMC process mixes faster when refreshments become more partial and frequent. In \Cref{prop:cv_generator}, we show that the Langevin diffusion is obtained as a limit when $\alpha\rightarrow1$. We establish convergence of the generator of the RHMC process towards the generator of the Langevin diffusion with respect to the supremum norm $\|f\|_{\infty}\triangleq\sup |f|$, for test functions in the set $C_c^{\infty}(\R^{2d})$ of infinitely differentiable functions with compact support on $\R^{2d}$.

    \begin{proposition}\label{prop:cv_generator}
   Let $\mathcal{L}_{\lambda,\alpha}^{\rm RH}$ be the generator of the RHMC process, solution of \eqref{eq_randomized_hmc}, with persistence $\alpha\in[0,1)$ and refreshment intensity $\lambda\ge0$. %with refreshment rate $\lambda\ge0$ and persistence $\alpha\in[0,1)$, 
    Let $\mathcal{L}_\gamma^{\rm LD}$ be the generator of the Langevin diffusion, solution of \eqref{eq_langevin}, with damping $\gamma\ge 0$.
    %, with damping $\gamma\ge0$.
     Suppose that $\lambda=\frac{2\gamma}{1-\alpha^2}$. Then for any $f\in C_c^{\infty}$ we have $\|\mathcal{L}_{\lambda,\alpha}^{\rm RH}f-\mathcal{L}_{\gamma}^{\rm LD}f\|_{\infty}\rightarrow 0$ as $\alpha\rightarrow 1$.
    \end{proposition}
    
 A proof of \Cref{prop:cv_generator} is derived in \Cref{proof_cv_generator}. Convergence of the generators is obtained as $\alpha\rightarrow 1$ while $\lambda=2\gamma/(1-\alpha^2)\rightarrow\infty$. %Intuitively, as the refreshments driven by the Poisson Process become more partial and frequent, they get closer to a continuous refreshment induced by the Brownian motion. 
 The convergence of infinitesimal generators is particularly useful to establish weak convergence of Markov processes; see \cite{ethier2009markov}. A formal proof of weak convergence is beyond the scope of this work. Nevertheless, \Cref{prop:cv_generator} and \Cref{thm:randomized_langevin_contraction} describe the Langevin diffusion as a limit of the RHMC process that achieves the fastest mixing rate for targets satisfying \Cref{assumption:strong_cvx}. Remarkably, the refreshment intensities coincide %\Cref{thm:randomized_langevin_contraction} 
 when $\gamma=\sqrt{M+m}$, which is known to yield the $2$-Wasserstein mixing rate $r=m/\sqrt{M+m}$ for the Langevin diffusion; see \cite[Theorem 1]{dalalyan2020sampling}.
For completeness, we extend this mixing rate to the $\mathbb{L}_2(\Pi_*)$-norm in \Cref{prop:langevin_contraction}. %This result is a direct consequence of the $2$-Wasserstein convergence previously derived in \cite[Theorem 1]{dalalyan2020sampling}. This extension is achieved by following similar arguments compared to the ones used for Randomized HMC in \Cref{thm:randomized_langevin_contraction}. 
%A sketch of proof is presented in \Cref{proof_contraction_langevin}.
\begin{proposition}\label{prop:langevin_contraction}
Let $\bfP^t$ be the transition kernel of the Langevin diffusion, solution of \eqref{eq_langevin}, 
%with friction $\gamma=\sqrt{M+m}$.
with friction $\gamma>\sqrt{M}$. Suppose that \Cref{assumption:strong_cvx} holds. 
Then there exist an explicit constant $C'\le 1.56$ such that for any $t>0$ 
$$
\|\bfP^tf\|\le C'e^{-r t}\|f\|\,,\qquad f\in\mathbb{L}_2^0(\Pi)
$$
where\vspace{-0.4cm}
$$
r=\frac{m\wedge(\gamma^2-M)}{\gamma}.
$$
%$$
%r=\frac{m}{\sqrt{M+m}}.
%$$
\end{proposition}
%This result is a direct consequence of the $2$-Wasserstein mixing rates obtained in \cite[Theorem 1]{dalalyan2020sampling}. %This extension is achieved by following similar arguments compared to the ones used for Randomized HMC in \Cref{thm:randomized_langevin_contraction}. 
A sketch of proof is presented in \Cref{proof_contraction_langevin}, using similar arguments to \Cref{proof_contraction}.
The mixing rate in \Cref{prop:langevin_contraction} is a continuous function of $\gamma^2$ that increases on $(M,M+m]$ and decreases on $[M+m,\infty)$. The optimum is achieved for $\gamma=\sqrt{M+m}$ and yields $r=m/\sqrt{M+m}$. This rate matches the limit of \Cref{thm:randomized_langevin_contraction} as $\alpha\rightarrow1$. Combined together, our results describe the Langevin diffusion as a limit of Randomized HMC achieving the fastest exponential mixing rate, for strongly log-concave targets with smooth enough densities. This observation motivates the construction of samplers based on the Langevin diffusion. Our study also enables a uniform control of the ACFs beyond the Gaussian framework, i.e. to any target $\Pi$ satisfying \Cref{assumption:strong_cvx}. Indeed for any $f:\R^d\rightarrow \R$ such that $\int f^2\dd \Pi\in(0,\infty)$, the map $g(\bx,\bv)\triangleq (f(\bx)-\int f \dd\Pi)/(\int f^2\dd\Pi)^{1/2}$ defined for $(\bx,\bv)\in\R^{2d}$ is such that $g\in\mathbb{L}_2^0(\Pi)$ and $\|g\|=1$, therefore \Cref{prop:langevin_contraction} yields
\begin{equation}\label{eq_l2_implies_correlations}
    {\rm Corr}(f(\bX_t),f(\bX_0))=\langle g,\bfP^t g\rangle\le\|g\|\|\bfP^t g\|\le  C'e^{-rt}.
\end{equation}
%In particular, we explain its foundations and compare it to previous approaches.

\section{Numerical experiments}\label{sec_comparisons}

In this section, we investigate the sampling performance of MALT when compared to several variations of HMC, when performing numerical integration. The sampling efficiency is measured by the effective sample size, either at stationarity of after a warm-up phase. We consider the problem of estimating moments of the marginal distributions of $\Pi$, such as the means and the variances. As discussed in \Cref{sec_robustness}, when using HMC on anisotropic targets, good mixing for one component can yield arbitrary poor mixing for other components. We use therefore the minimum Effective Sample Size (ESS) among the $d$ components as a measure of efficiency.
%This section is composed of two subsections. The first one explicitly describes the robustness and the performance of MALT and Randomized HMC on an anisotropic Gaussian model, when compared to standard HMC. Exact computations and numerical comparisons are conducted, which yield explicit illustrations of the problem and helps suggesting tuning heuristics. The second subsection compares MALT to GHMC, standard HMC and Randomized HMC on a Bayesian logistic regression model with real data. The posterior distribution is log-concave and anisotropic, we discuss in particular the tuning of the parameters as well as the numerical accuracies of MALT against GHMC as a function of the step-size. The number of gradient evaluations per sample collected is normalized to allow for a fair comparison between algorithms.

Let $(X_n)_{n\ge0}$ be a real random sequence such that that $\mathbb{E}[X_n]$ and $\mathbb{E}[X_nX_{n+k}]$ do not depend on $n$. Let $f$ be a square integrable function, and define the estimator of $\E[f(X_0)]$ as
$$
\Hat{f}_N=\frac{1}{N}\sum_{n=1}^Nf(X_n).
$$
We assume that the following IAC series converges absolutely
$$
{\rm IAC}_f=1+2\sum_{n=1}^\infty{\rm Corr}(f(X_n),f(X_0)).
$$
 The variance of $\Hat{f}_N$ is equivalent to $({\rm IAC}_f/N)\times\mathbb{V}(f(X_0))$ as $N\rightarrow\infty$. In our framework, $(X_n)_{n\ge 1}$ are one-dimensional functionals of successive trajectories with integration time $T>0$ (or mean integration time $T>0$ for Randomized HMC).
 %In our framework, we consider one-dimensional outputs $(X_n)_{n\ge 1}$ of successive trajectories with (mean) integration time $T>0$}.
 The unit cost of generating each variable is considered proportional to the number of gradient evaluations, therefore proportional to the integration time. To reduce the variance of $\Hat{f}_N$, one can either increase $N$ or increase $T$ to reduce ${\rm IAC}_f(T)/N$. Both solutions having linear costs, increasing $T$ is only profitable if it allows for decreasing ${\rm IAC}_f(T)$ at least linearly. This motivates the optimization of a rescaled ESS per integration time
\begin{equation*}
{\rm ESS}_{f}(T)\propto(T\times {\rm IAC}_f(T))^{-1}.
\end{equation*}
 In the sequel, we compare MALT, GHMC, standard HMC, and Randomized HMC in terms of worst ESS per integration time for various functions. We first provide a detailed explicit study of the anisotropic Gaussian model, before presenting the sampling performance of each algorithms on a Bayesian logistic regression model with real data. Experiments on a unimodal Gaussian mixture, and a multivariate Student's distribution are also included in \Cref{sec_add_experiments}.%\newline

%\textit{Example 1: Gaussian.}
\begin{subsection}{Gaussian distribution}
Assume that $\Phi(\bx)=\sum_{i=1}^d\bx(i)^2/(2\sigma_i^2)$, similarly to \Cref{sec_worst_ACF}. We first derive the explicit trajectories of MALT, RHMC and HMC, and consider the problem of estimating the means and variances of $\Pi$. The chains built upon these successive trajectories are obtained by letting $h\rightarrow 0$ in \Cref{alg:malt} for MALT and HMC. For RHMC we consider the sequence of positions evaluated at the jumping times of the continuous time process \eqref{eq_randomized_hmc} with full refreshments.

 From \eqref{eq_gaussian_langevin_trajectory}, there exist $c_{i,\gamma}(T)>0$ and IID vectors $(\bxi^n)_{n\ge 1}\sim \mathcal{N}_d(\bfzero_d,\bfI_d)$ such that successive Langevin trajectories are defined as follows, starting from $\bX^0\sim\Pi$
\begin{equation}
    \bX^n(i)=\rho_{i,\gamma}(T)\bX^{n-1}(i)+c_{i,\gamma}(T)\bxi^n(i)
\end{equation}
% where
% $$
% s_{i,\gamma}(T)\triangleq 2\gamma\int_0^T\left( e^{-(T-t)\bfA_{i,\gamma}}(2,1)\right)^2\,\dd\bW_{t}(i).
% $$
The sequence $(\bX^n(i))_{n\ge0}$ is an autoregressive process with root $\rho_{i,\gamma}(T)$. By Isserlis' theorem, any Gaussian vector $(X_1,X_2)\in \R^2$ is such that ${\rm Corr}(X_1^2,X_2^2)=({\rm Corr}(X_1,X_2))^2$ therefore
\begin{align}\label{eq_geom_corr_langevin}
\begin{split}
        {\rm Corr}(\bX^n(i),\bX^0(i))&=(\rho_{i,\gamma}(T))^n\\
    {\rm Corr}((\bX^n(i))^2,(\bX^0(i))^2)&=(\rho_{i,\gamma}(T))^{2n}
\end{split}
\end{align}
When $\gamma=2/\sigma_{\rm max}$ for MALT, the worst ACFs are achieved for the largest scale $\sigma_{\rm max}\triangleq \max_{i}\sigma_i$. These ACFs for the mean and variance correspond to $\rho_{\rm max}(T)$ and $\rho^2_{\rm max}(T)$ where
$$
\rho_{\rm max}(T)\triangleq e^{-T/\sigma_{\rm max}}(1+T/\sigma_{\rm max}).
$$
%For simplicity of exposition, we refer to MALT for the choice $\gamma=2/\sigma_{\rm max}$ and to HMC for the choice $\gamma=0$, respectively yielding the worst ACFs
% $\overline{\sigma}\triangleq \max_{i}\sigma_i$
% 
For HMC ($\gamma=0$), the worst ACFs for the mean and variance are respectively
$$
  u_{\rm max}(T)\triangleq \max_{1\le i\le d}\cos(T/\sigma_i),\qquad 
       w_{\rm max}(T)\triangleq \max_{1\le i\le d}\cos^2(T/\sigma_i).
$$
% $$
% \rho_{\rm max}(T)\triangleq e^{-T/\sigma_{\rm max}}(1+T/\sigma_{\rm max})\ge \max_{1\le i\le d}|\rho_{i,\gamma}(T)|.
% $$
We define the RHMC sequence as follows.
Let $(\tau_n)_{n\ge 1}$ be random integration times, drawn IID from the exponential distribution with rate $\lambda>0$. We denote $T=\mathbb{E}[\tau_1]=\lambda^{-1}$ the average duration of each Hamiltonian trajectory. Starting from $\bY^0\sim\Pi$, the solution is
\begin{equation}\label{eq:ar_rhmc}
    \bY^n(i)=\cos\left(\frac{\tau_n}{\sigma_i}\right)\bY^{n-1}(i)+\sigma_i\sin\left(\frac{\tau_n}{\sigma_i}\right)\bxi^n(i).
\end{equation}
\begin{proposition}
\label{prop:geom_corr_rhmc} The sequence $(\bY^{n}(i))_{n\ge 0}$ defined by \eqref{eq:ar_rhmc} is not jointly Gaussian. Moreover
\begin{align*}
    {\rm Corr}(\bY^{n}(i),\bY^0(i))=\E\left[\cos\left(\tau_1/\sigma_i\right)\right]^n=&(r_i(T))^n,\qquad r_i(T)\triangleq\frac{\sigma_i^2}{\sigma_i^2+T^{2}},\\
    {\rm Corr}((\bY^{n}(i))^2,(\bY^0(i))^2)=\E\left[\cos^2\left(\tau_1/\sigma_i\right)\right]^n=&(s_i(T))^n,\qquad s_i(T)\triangleq\frac{\sigma_i^2+2T^{2}}{\sigma_i^2+4T^{2}}.
\end{align*}
\end{proposition}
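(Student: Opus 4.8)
The plan is to unfold the autoregressive recursion \eqref{eq:ar_rhmc} and to exploit the mutual independence of the starting point, the refreshment noises and the random integration times. Writing $c_k\triangleq\cos(\tau_k/\sigma_i)$, iterating \eqref{eq:ar_rhmc} gives
\[
\bY^{n}(i)=\Big(\prod_{k=1}^n c_k\Big)\bY^0(i)+M_n,\qquad M_n\triangleq\sum_{j=1}^n\Big(\prod_{k=j+1}^n c_k\Big)\sigma_i\sin(\tau_j/\sigma_i)\,\bxi^j(i).
\]
First I would record stationarity: conditionally on $\tau_n$ the update preserves the $\mathcal N(0,\sigma_i^2)$ law, since $\cos^2(\tau_n/\sigma_i)\sigma_i^2+\sigma_i^2\sin^2(\tau_n/\sigma_i)=\sigma_i^2$, so $\bY^n(i)\sim\mathcal N(0,\sigma_i^2)$ for every $n$ provided $\bY^0\sim\Pi$. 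Both moment evaluations then reduce to the scalars $\E[c_1]$ and $\E[c_1^2]$, which I would obtain from the characteristic function of the exponential law, $\E[e^{\mathrm i\omega\tau_1}]=(1-\mathrm iT\omega)^{-1}$: taking real parts yields $\E[\cos(\tau_1/\sigma_i)]=r_i(T)$, and the half-angle identity $\cos^2=\tfrac12(1+\cos 2\,\cdot)$ together with $\mathrm{Re}\,(1-2\mathrm iT/\sigma_i)^{-1}=\sigma_i^2/(\sigma_i^2+4T^2)$ yields $\E[\cos^2(\tau_1/\sigma_i)]=s_i(T)$.

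For the first correlation I would multiply the unfolded expression by $\bY^0(i)$ and take expectations. The terms involving $M_n$ vanish because each $\bxi^j(i)$ is centred and independent of both $\bY^0(i)$ and the $c_k$, so only $\E[(\prod_k c_k)(\bY^0(i))^2]$ survives. Independence of $\bY^0(i)$ from the $\tau_k$ and the IID structure of the $\tau_k$ factorise this into $\sigma_i^2\prod_k\E[c_k]=\sigma_i^2\,r_i(T)^n$; dividing by $\V(\bY^0(i))=\V(\bY^n(i))=\sigma_i^2$ gives the claimed $r_i(T)^n$.

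For the correlation of the squares I would square the unfolded identity, multiply by $(\bY^0(i))^2$ and take expectations. The cross term $2(\prod_k c_k)(\bY^0(i))^3 M_n$ has zero mean (either by $\E[(\bY^0(i))^3]=0$, or by centredness and independence of the $\bxi^j$), leaving $\E[(\prod_k c_k)^2]\,\E[(\bY^0(i))^4]+\E[M_n^2]\,\E[(\bY^0(i))^2]$. Here $\E[(\prod_k c_k)^2]=\prod_k\E[c_k^2]=s_i(T)^n$ and $\E[(\bY^0(i))^4]=3\sigma_i^4$, while the noise variance telescopes: since $\E[\sin^2(\tau_j/\sigma_i)]=1-s_i(T)$ and the surviving squared cosine factors contribute $s_i(T)^{n-j}$, one gets $\E[M_n^2]=\sigma_i^2(1-s_i(T))\sum_{j=1}^n s_i(T)^{n-j}=\sigma_i^2(1-s_i(T)^n)$. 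Collecting terms gives $\E[(\bY^n(i))^2(\bY^0(i))^2]=\sigma_i^4(1+2s_i(T)^n)$, hence ${\rm Cov}((\bY^n(i))^2,(\bY^0(i))^2)=2\sigma_i^4 s_i(T)^n$; dividing by $\V((\bY^0(i))^2)=3\sigma_i^4-\sigma_i^4=2\sigma_i^4$ yields $s_i(T)^n$.

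Finally, non-Gaussianity follows as a corollary rather than by a separate argument, since it suffices to exhibit one non-Gaussian finite-dimensional marginal: a jointly Gaussian pair satisfies Isserlis' identity ${\rm Corr}(X^2,Y^2)={\rm Corr}(X,Y)^2$, so joint Gaussianity of $(\bY^0(i),\bY^1(i))$ would force $s_i(T)=r_i(T)^2$; a short computation shows this reduces to $T^4(5\sigma_i^2+2T^2)=0$, impossible for $T>0$, so the sequence is not jointly Gaussian. The only mildly delicate bookkeeping is the telescoping evaluation of $\E[M_n^2]$ and the careful cancellation of the odd cross term; everything else is a matter of organising the independence-induced factorisations, so I expect no serious obstacle.
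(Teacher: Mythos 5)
Your proof is correct. It differs from the paper's in organisation rather than substance: the paper argues by a one-step recursion, conditioning on the last update to obtain ${\rm Corr}(\bY^n(i),\bY^0(i))=\E[\cos(\tau_1/\sigma_i)]\,{\rm Corr}(\bY^{n-1}(i),\bY^0(i))$ and likewise $a_n=a_{n-1}\E[\cos^2(\tau_1/\sigma_i)]$ for the squares, and then iterates; you instead solve the autoregression in closed form and compute the relevant second and fourth moments directly. The recursive route is slightly leaner for the squared correlation because the accumulated noise variance never has to be evaluated, whereas your unfolded route requires the telescoping computation of $\E[M_n^2]=\sigma_i^2(1-s_i(T)^n)$; in exchange it makes the joint moment $\E[(\bY^n(i))^2(\bY^0(i))^2]=\sigma_i^4(1+2s_i(T)^n)$ explicit as a by-product. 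The evaluations $\E[\cos(\tau_1/\sigma_i)]=r_i(T)$ and $\E[\cos^2(\tau_1/\sigma_i)]=s_i(T)$ and the Isserlis-based non-Gaussianity argument coincide with the paper's; you go one step further by actually verifying $s_i(T)\neq r_i(T)^2$ for all $T>0$ via the identity reducing to $T^4(5\sigma_i^2+2T^2)=0$, a worthwhile addition since the paper asserts the inequality without computation.
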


\noindent\Cref{prop:geom_corr_rhmc} is proven in \Cref{sec:proof_geom_corr_rhmc}. It shows that RHMC's worst ACFs are respectively
% none of the ACFs decays exponentially fast with $T$
$$
r_{\rm max}(T)\triangleq \frac{1}{1+(T/\sigma_{\rm max})^{2}},\qquad s_{\rm max}(T)\triangleq\frac{1+2(T/\sigma_{\rm max})^{2}}{1+4(T/\sigma_{\rm max})^{2}}.
$$
% $$
% r_{\rm max}(T)\triangleq \frac{1}{1+(T/\sigma_{\rm max})^{2}}\ge \max_{1\le i\le d}|r_i(T)|,\qquad s_{\rm max}(T)\triangleq\frac{1+2(T/\sigma_{\rm max})^{2}}{1+4(T/\sigma_{\rm max})^{2}}\ge \max_{1\le i\le d}|s_i(T)|.
% $$
  \Cref{prop:geom_corr_rhmc} also highlights some similarities and discrepancies between RHMC and MALT. In both cases, the worst ACFs are controlled by the largest scale. Yet, we observe that $r_{\rm max}$ has a quadratic decay to $0$ while $s_{\rm max}\rightarrow 1/2$ as $T\rightarrow\infty$. This lack of convergence for the square function has been noticed in \cite{hoffman2021adaptive} for uniformly drawn integration times as well. In particular, both $r_{\rm max}$ and $s_{\rm max}$ decay slower than $\rho_{\rm max}$ and $\rho_{\rm max}^2$ as $T\rightarrow\infty$. This ordering is reversed for small values of $T$ however. Numerically, we obtain that $\rho_{\rm max} <r_{\rm max}$  for $T>5.1$ while $\rho_{\rm max}^2 <s_{\rm max}$ for $T>0.8$. We now consider the problem of optimizing the worst ESS per time over $d=50$ components with heterogeneous variances $\sigma_i^2=i/d$. In \Cref{ess_gauss_graph}, we compare MALT, RHMC, and HMC for estimating the mean with $\varrho_f\in\{\rho_{\rm max},r_{\rm max},u_{\rm max}\}$, and the variance with $\varrho_f\in\{\rho_{\rm max}^2,s_{\rm max},w_{\rm max}\}$. We normalize the worst ESS per time as a proportion of the efficiency achieved for an isotropic target with an ideal HMC sampler as $T=\pi/2$ (independent samples). For $f(x)=x$ and $f(x)=x^2$, \eqref{eq_geom_corr_langevin} and \Cref{prop:geom_corr_rhmc} show that the $n^{th}$ autocorrelations are of the form $\varrho_f^{n}$, the worst ESS corresponds to
$$
{\rm ESS}_{f}(T)=\frac{\pi}{2T}\left(\frac{1-\varrho_f(T)}{1+\varrho_f(T)}\right).
$$
% , \qquad f\in\{{\rm Id},x\mapsto x^2\}.
\vspace{-0.2cm}
\begin{figure}[!ht]
  \centering
\includegraphics[width=.49\linewidth]{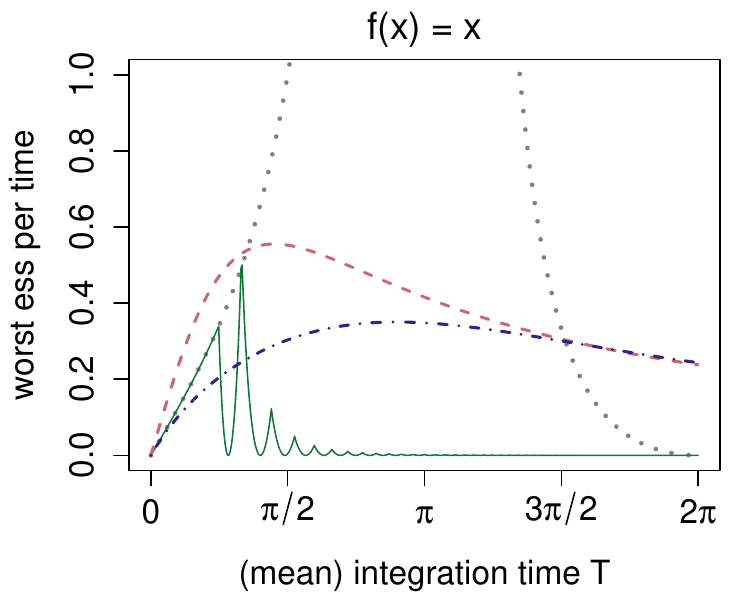}
\includegraphics[width=.49\linewidth]{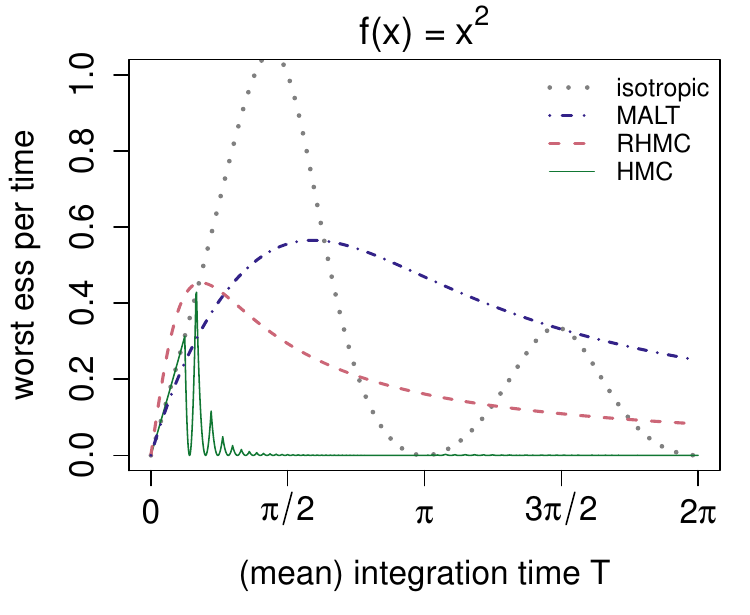}
\vspace{-0.3cm}
\caption{Gaussian distribution. \it Minimum ESS per integration time for estimating the mean and variance (resp. left and right). The dotted grey lines correspond to an ideally preconditioned HMC sampler (isotropic target). The dot-dashed blue lines correspond to MALT for $\gamma=2/\sigma_{\rm max}$. The dashed pink lines correspond to Randomized HMC with full refreshments. The solid green lines correspond to HMC.
\vspace{-0.3cm}}\label{ess_gauss_graph}
\end{figure}
\Cref{ess_gauss_graph} shows that MALT's and RHMC's worst efficiencies are smooth with respect to $T$. This observed smoothness does not depend on $d$ whereas the efficiency of HMC becomes more and more erratic as $d$ increases. This phenomenon is illustrated for $d=50$ components, as opposed to the ideal efficiency obtained for an isotropic target (independent of $d$). On $(\pi/2,3\pi/2)$, ideal HMC produces negatively correlated samples yielding super-efficient estimators of the mean, but sub-optimal estimators of the variance. We observe that RHMC achieves a better efficiency than MALT for estimating the mean whereas this ordering is reversed when estimating the variance. \Cref{prop:geom_corr_rhmc} indicates that the dashed pink lines of RHMC can be interpreted as smoothed versions of the dotted grey lines of ideal HMC, which explains intuitively the differences observed between $f(x)=x$ and $f(x)=x^2$. This discrepancy is illustrated more generally between odd and even functions in the sequel. 

The damping of MALT is tuned by grid search, although we observe that any $\gamma\in[1/\sigma_{\rm max},2/\sigma_{\rm max}]$ achieves a similar efficiency, roughly optimized for $\gamma\approx1.5/\sigma_{\rm max}$. Intuitively the worst ESS per time is optimized in a slightly underdamped regime to adapt to the finite length of the trajectories. We focus on the discrete approximation and consider the problem of choosing a (mean) number of steps $L$ to ensure a relatively good efficiency for every function. The time step $h=0.20$ is chosen to obtain acceptance rates slightly above $65\%$ for MALT for a friction $\gamma=1.5/\sigma_{\rm max}$. The same measure of efficiency is interpreted as ESS per gradient evaluation by setting $T=Lh$. In the sequel, $L$ is chosen to optimize the worst efficiency between $f(x)=x$ and $f(x)=x^2$. \Cref{ess_gauss_graph} indicates that the efficiency of HMC is quite sensitive to small variations of $T$ whereas these have little impact on the efficiency of MALT and RHMC. This problem is emphasized by the time discretization: in our example, the worst ESS for $f(x)=x^2$ is negligible for every value of $L>1$ with HMC. 

In \Cref{tab:odd_vs_even}, worst ESS per gradient evaluation are compared numerically between MALT, RHMC, and HMC for various choices of functions. The choice $L=1$ is optimal for HMC, which reduces to MALA. We also illustrate the efficiency of HMC for $L=3$. Monte Carlo estimates are computed on a sample of size $N=10^6$.
%an easy choice for $T$ is the average between the maximizers of ${\rm ESS}_f(T)$ for $f(x)=x$ and $f(x)=x^2$. We obtain numerically $T=2.3$ for MALT and $T=1.0$ for RHMC. This strategy breaks down for HMC, which efficiency is quite sensitive to small variations of $T$. Instead, its tuning relies on an accurate estimation of the maximizer for $f(x)=x^2$, achieved at $T=0.52$ for HMC.
% This strategy breaks down for HMC, which efficiency is sensitive to small variations of $T$. We consider instead choosing $T$ to be the efficiency maximizer for $f(x)=x^2$
\begin{table}[!ht]
    % \begin{center}
          \caption{\it
   Gaussian distribution. Minimum ESS per gradient evaluation for various odd/even functions.
    }\label{tab:odd_vs_even}
 \begin{tabular}{ %|m{2.5cm}||m{1cm}|m{1cm}|m{1cm}|m{1cm}|m{1cm}|m{1cm}|  }
%|p{2.5cm}||p{1cm}|p{1cm}|p{1cm}|p{1cm}|p{1cm}|p{1cm}|  }
%|c||c|c|c|c|c|c|c|c|}
|C{2.4cm}|C{0.95cm}|C{0.95cm}|C{0.95cm}|C{0.95cm}|C{0.95cm}|C{0.95cm}|C{0.95cm}|C{0.95cm}|}
 \hline
 &\multicolumn{4}{c|}{odd}&\multicolumn{4}{c|}{even} \\
 \hline
 $f(x)$& $x$& $x^3$ &${\rm sgn}(x)$&$\sin(x)$ & $x^2$ & $x^4$    &$e^{-|x|}$&   $\cos(x)$\\
 \hline
  MALT: $L=8$ & 0.25& 0.31& 0.31  & 0.27   &\textbf{0.40}& \textbf{0.42}    &\textbf{0.43}& \textbf{0.40}\\
 RHMC: $L=5$   &\textbf{0.40} & \textbf{0.43}& \textbf{0.45}    &\textbf{0.41}&   0.29& 0.31    &0.31&   0.29\\
 HMC: $L=3$ &  0.19& 0.25& 0.26&0.21  &0.00 & 0.00&0.00 &0.00\\
  MALA ($L=1$) &  0.06& 0.08& 0.09&0.07  &0.12 & 0.12&0.16 &0.13\\
 \hline
\end{tabular}    
    % \end{center}
    % \captionsetup{justification=justified}
    %  \caption{\it
    % \begin{flushleft}
    % Relative ESS per time for estimating the mean and variance (resp. left and right). The solid blue lines correspond to Langevin trajectories for $\gamma=2$. The dashed red lines correspond to Randomized Hamiltonian trajectories with full refreshments. The dotted grey lines correspond to Hamiltonian trajectories of fixed length.  The ESS per time is measured relatively to uncorrelated HMC achieved for $T=\pi/2$.
    % \end{flushleft}
    % }\label{tab:odd_vs_even}
\end{table}
%\vspace{-0.5cm}

%\begin{table}[H]
    % \begin{center}
 %         \caption{\it
  %  Worst ESS per time for various odd/even functions (numerical).
  %  }\label{tab:odd_vs_even}
% \begin{tabular}{ %|m{2.5cm}||m{1cm}|m{1cm}|m{1cm}|m{1cm}|m{1cm}|m{1cm}|  }
%|p{2.5cm}||p{1cm}|p{1cm}|p{1cm}|p{1cm}|p{1cm}|p{1cm}|  }
%|c||c|c|c|c|c|c|}
%|C{2.9cm}|C{0.9cm}|C{0.9cm}|C{1.1cm}|C{1cm}|C{0.9cm}|C{0.9cm}|C{1.1cm}|C{1cm}|}
 %\hline
 %&\multicolumn{4}{c|}{odd}&\multicolumn{4}{c|}{even} \\
% \hline
 %$f(x)$& $x$& $x^3$ &${\rm atan}(x)$&$\sin(x)$ & $x^2$ & $x^4$    &$e^{-|x|}$&   $\cos(x)$\\
 %\hline
  %MALT: $T=2.3$ & 0.34& 0.42& 0.35  & 0.37   &\textbf{0.55}& \textbf{0.58}    &\textbf{0.59}& \textbf{0.56}\\
% RHMC: $T=1.0$   &\textbf{0.52} & \textbf{0.58}& \textbf{0.62}    &\textbf{0.55}&   0.39& 0.42    &0.50&   0.40\\
 %HMC: $T=0.52$ &  0.21& 0.29& 0.24&0.24  &0.43 & 0.49&0.44 &0.44\\
 %\hline
%\end{tabular}    
    % \end{center}
    % \captionsetup{justification=justified}
    %  \caption{\it
    % \begin{flushleft}
    % Relative ESS per time for estimating the mean and variance (resp. left and right). The solid blue lines correspond to Langevin trajectories for $\gamma=2$. The dashed red lines correspond to Randomized Hamiltonian trajectories with full refreshments. The dotted grey lines correspond to Hamiltonian trajectories of fixed length.  The ESS per time is measured relatively to uncorrelated HMC achieved for $T=\pi/2$.
    % \end{flushleft}
    % }\label{tab:odd_vs_even}
%\end{table}
%\vspace{-0.5cm}
\Cref{tab:odd_vs_even} shows that MALT and RHMC achieve a better performance than standard HMC or MALA. Neither MALT nor RHMC dominates the other for every function: MALT achieves slightly higher ESS for even functions while the ordering is reversed for odd functions.
\end{subsection}

\begin{subsection}{Bayesian logistic regression}
    This subsection presents a numerical comparison between MALT, GHMC, standard HMC, and Randomized HMC, on a Bayesian logistic regression model with real data. The posterior distribution is both log-concave and anisotropic. We first compare the numerical accuracies of MALT and GHMC, and show that MALT allows for choosing a larger step-size than GHMC. We then show that MALT outperforms GHMC, and standard HMC in terms of sampling performance, while it is proven competitive with Randomized HMC. The sampling performance is measured empirically by the minimum ESS across dimensions when estimating the posterior mean and marginal variances.
    
The dataset is chosen from the Framingham Heart Study\footnote{see: https://github.com/GauravPadawe/Framingham-Heart-Study/tree/master; file=framingham.csv}. After removing the missing data, the resulting dataframe is composed of $n=3656$ observations, a binary response variable indicating the presence of a Coronary Heart Disease, and $d=15$ explanatory variables, composed of several risk factors; see hyperlink. The explanatory variables are centered and scaled before performing the logistic regression, in order to reduce the conditionning of the Hessian of the log-likelihood. This data pre-processing allows the condition number at the MLE to go from $\kappa=2.3\times 10^7$ to $\kappa=657$. This residual condition number is enough to impact severely the sampling performance of standard HMC, which is optimized for 1 leapfrog step. We ran MALT, GHMC, standard HMC and Randomized HMC, starting from a draw of the Laplace approximation of the posterior distribution. We let the chain warm for 100 iterations, after which we collect $N=100\, 000$ samples for each method. The number of leapfrog steps of each method optimizes its minimum ESS per gradient evaluation. We choose to thin the chains in order to normalize the amount of storage required between algorithms. Both standard HMC and GHMC use 1 leapfrog step per iteration, so we thin these chains by $L$, the number of MALT steps. Randomized HMC is optimized for a number of steps roughly equal to $L/2$, so we thin the RHMC chain by a factor 2. As a result, independently of the method, each of the samples collected has been obtained with $L$ gradient computations.

We first compare the accuracies of MALT and GHMC as a function of $h$, when approximating the Langevin diffusion. Based on the Gaussian heuristic of last subsection: we tune $\gamma$ as the inverse square root of the maximum eigen value of the covariance matrix of $\Pi$; see also \cite{riou2023adaptive}. On \Cref{plots_excel_graph}, we plot the acceptance rate and the minimum ESS among the components when estimating the mean, as a function of the step-size, for both MALT and GHMC.

\begin{figure}[!ht]
  \centering
\includegraphics[width=.99\linewidth,trim=0.5cm 0.3cm 0.7cm 0.5cm]{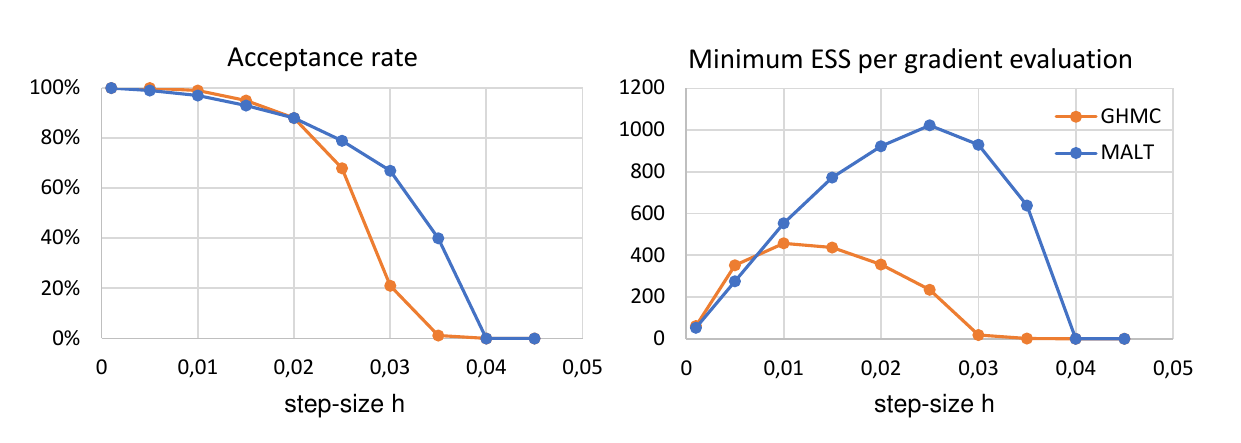}
\vspace{-0.3cm}
\caption{Bayesian logistic regression. \it Acceptance rates and minimum ESS per gradient evaluation (when estimating the posterior means) as a function of the step-size for MALT and GHMC.
\vspace{-0.3cm}}\label{plots_excel_graph}
\end{figure}
We observe that MALT dominates GHMC for both metrics. We see that MALT allows for choosing a larger step-size than GHMC to reach the same acceptance rate. Moreover, the minimum ESS of MALT is optimized when the acceptance rate is close to $80\%$, while for GHMC, the minimum ESS is optimized when the acceptance rate is much closer to one, which yields an even smaller step size and complicates its tuning. Both of these observations support MALT’s construction, which avoids momentum flips, but also puts less constraint on the energy error. An additional interpretation of this fact can be deduced from \Cref{sec_malt}: the overall numerical error of MALT unfolds as
$
\Delta=\sum_{i=1}^L\mathcal{E}_i
$,
where 
$$\mathcal{E}_i=\Phi(\bx_i)-\Phi(\bx_{i-1})+\big(|\bv_i'|^2-|\bv_{i-1}'|^2\big)/2$$ 
is the energy difference incurred by the $i^{th}$ leapfrog step. As a consequence, the probability of accepting an entire trajectory for MALT is higher than the probability of accepting $L$ consecutive leapfrog steps for GHMC. Indeed:
$$
\PP_{\rm GHMC}^{\rm accept}=e^{-\sum_{i=1}^L\max(0\,,\,\mathcal{E}_i)} \leq e^{-\max\left(0\,,\,\sum_{i=1}^L\mathcal{E}_i\right)} =\PP_{\rm MALT}^{\rm accept}.
$$

In \Cref{tab:logreg}, we compare the optimal sampling performances of MALT, GHMC, standard HMC and Randomized HMC. We compute both the minimum ESS when estimating the mean and the minimum ESS when estimating the marginal variances. Both metrics are normalized by the number of gradient computations. We also precise the tuning values for the step-sizes, acceptance rates, number of leapfrog steps per iteration and damping. 

\begin{table}[!ht]
    % \begin{center}
          \caption{\it
   Bayesian Logistic Regression. Minimum ESS per gradient evaluation when estimating the posterior means and the marginal variances, for MALT, GHMC, HMC and RHMC, and their corresponding tuning parameters.
    }\label{tab:logreg}
 \begin{tabular}{ %|m{2.5cm}||m{1cm}|m{1cm}|m{1cm}|m{1cm}|m{1cm}|m{1cm}|  }
%|p{2.5cm}||p{1cm}|p{1cm}|p{1cm}|p{1cm}|p{1cm}|p{1cm}|  }
%|c||c|c|c|c|c|c|c|c|}
|C{3cm}|C{1.5cm}|C{1.5cm}|C{1.2cm}|C{1.2cm}|C{0.8cm}|C{0.8cm}|}
 \hline
&\multicolumn{2}{c|}{Min ESS per grad}&\multicolumn{4}{c|}{Tuning parameters} \\
 \hline
Algorithm& means & variances &$h$& \% accept& $L$ &$\gamma$ \\
 \hline
  MALT & 1023& \textbf{1413}& 0.025  & 79\%& 36 & 2   \\
 GHMC   & 457 & 576 & 0.01     &99\% & 1 & 2\\
 Standard HMC &  54& 118& 0.025 &81\% & 1 & NA\\
  Randomized HMC &  \textbf{1237}& 989& 0.025 &85\%  & 18 & NA\\
 \hline
\end{tabular}    
\end{table}

We observe that MALT outperforms standard HMC, whose sampling performance is severely impacted by the heterogeneity of scales of the target distribution. We observe that the performance of MALT and Randomized HMC are competitive: none of the two algorithms dominates the other for estimating both functions. Akin to the last subsection, we see that MALT is preferable when estimating the variances while Randomized HMC is preferable when estimating the means. Numerical comparisons between MALT and variations of HMC are further discussed in \cite{riou2023adaptive}. In particular, the question of adaptive tuning is addressed, and multiple experiments are conducted on a benchmark of real data Bayesian models. 

\end{subsection}

\section{Discussion}Our work introduces MALT: a new Metropolis Adjusted sampler to approximate the kinetic Langevin diffusion. We have shown both theoretically and empirically that MALT preserves several useful properties of HMC whereas GHMC does not, which allows MALT to use a larger step-size than GHMC and improves its sampling performance. We have shown that MALT is significantly more robust, and offers a better sampling performance than standard HMC on anisotropic targets. We have compared MALT to another method praised for its robustness to anisotropic targets known as Randomized HMC. We have shown theoretically that the Langevin diffusion achieves a faster mixing rate for strongly log-concave measure, while we have demonstrated that both algorithms are competitive when performing numerical integration, on explicit toy models as well as on a Bayesian model with real data. Our work opens many prospects, among which the question of MALT’s adaptive tuning is addressed in \cite{riou2023adaptive}, with further numerical comparisons on multiple Bayesian models. Our work also provides a more efficient way to construct Metropolis adjusted samplers for skew-reversible Markov processes. This path will be investigated in future works.

\section{Acknowledgements} LRD was supported by the EPSRC grant EP/R034710/1. JV was supported by 
%a UK Engineering and Physical Sciences Research Council 
the EPSRC grants EP/R022100/1 and EP/T004134/1. We wish to thank the following people for helpful discussions at various stages of the project: Nicolas Chopin, George Deligiannidis, Samuel Livingstone, Jesús María Sanz-Serna and Giorgos Vasdekis. 

\newpage
\appendix
\section{Table of notations}\label{sec_notations}
%\subsection{Table of notations by order of first appearance - appendix}

%Preprint
\renewcommand{\arraystretch}{1.22}

\begin{table}[!ht]
    % \begin{center}
          %\caption{\it Table of notations by order of first appearance - manuscript}
          %\label{tab:notations_manuscript}
 \begin{tabular}{ %|m{2.5cm}||m{1cm}|m{1cm}|m{1cm}|m{1cm}|m{1cm}|m{1cm}|  }
%|p{2.5cm}||p{1cm}|p{1cm}|p{1cm}|p{1cm}|p{1cm}|p{1cm}|  }
%|c||c|c|c|c|c|c|c|c|}
|C{1.6cm}|C{1cm}|p{10.4cm}|}
 \hline
\textbf{Notation} & \textbf{Section} & \textbf{Definition}  \\
 \hline
  $\Pi$ & 1& Target density with respect to Lebesgue's measure on $\R^d$ \\
 $\Pi_*$   & 1 & Extended target density with respect to Lebesgue's measure on $\R^{2d}$\\
$\Phi$ &  1& Potential function $\Phi:\R^d\rightarrow\R$ such that $\Phi(\bx)\propto \exp\{-\Phi(\bx)\}$\\
  $|\bx|$ &  1 &  Euclidean norm of $\bx\in\R^d$\\
  $M$ & 1 & Lipschitz constant of $\nabla \Phi$\\
  $h$ & 1 & Step-size, $h>0$\\
  $T$ & 1 & Integration time, $T>0$\\
  $\btheta_h$ & 1& Störmer-Verlet update with step-size $h>0$, $\btheta_h:\R^{2d}\rightarrow\R^{2d}$\\
  $L$ & 1& Number of leapfrog steps, $L=\lfloor T/h\rfloor$\\
  $\gamma$ &1 & Damping (a.k.a. friction) of the Langevin diffusion\\
  $\varphi$ & 1 & Momentum flip function, $\varphi(\bx,\bv)=(\bx,-\bv)$\\
  $N$ & 1& Sample size\\
  $\bW_t$ & 1& Standard Brownian Motion on $\R^d$ at time $t>0$\\
  $\bxi$ & 2 & Standard Gaussian random vector on $\R^d$\\
  $\eta$ & 2 & Shorthand notation for $e^{-\gamma h/2}$\\
  $\mathbf{Q}_{h,\gamma}(\bz,.)$ & 2 & Markov kernel corresponding to the OBABO update from $z=(\bx,\bv)\in\R^{2d}$\\
  $q_{h,\gamma}(\bz,.)$ & 2 & Corresponding density with respect to Lebesgue's measure (when $\gamma>0$)\\
  $\delta_{\bx}$ & 2& Dirac measure at $\bx\in\R^d$\\
  $\mathcal E$ & 2& Energy error incurred by one leapfrog step\\
  $\Delta$ & 2& Total energy error (of an entire trajectory)\\
  $\bfP$ & 2 & Markov kernel corresponding to one iteration of MALT\\
  $\mu$ & 2 & Extended invariant distribution on the space of trajectories: $\R^{2d(L+1)}$\\
  $\|\nu-\nu'\|_{\rm TV}$ & 2 & Total Variation distance between $\nu$ and $\nu'$\\
  $\bx_{0:L}$  & 3 & Discrete trajectory of the positions $\bx_{0:L}=(\bx_0,\cdots, \bx_L)$, on $\R^{d(L+1)}$\\
  $\mu_{\bx}$ & 3 & Marginal distribution of $\mu$ with respect to the positions\\
  $\phi$ & 3 & Univariate potential function when $\Pi$ has product form\\
  $\Rightarrow$ & 3 & Weak convergence (of measure)\\
  $\ell$ & 3 & Re-scaled step-size, $h=\ell\times d^{-1/4}$\\
  $a(\ell)$ & 3 & Asymptotic acceptance rate of MALT, as $d\rightarrow\infty$\\
  $\ell^*$ & 3 & Asymptotically optimal (re-scaled) step-size, such that $a(\ell^*)=0.651$\\
  $\sigma_i$ & 4 & Scale of the $i^{th}$ component of $\Pi$\\
$\sigma_{\max}$ & 4 & Maximum scale among the components of $\Pi$\\
  $\rho_i$ &4& Autocorrelation function (wrt $T>0$) of the $i^{th}$ component of $\Pi$\\
    $\bN_t$ & 4& Homogeneous Poisson Process with intensity $\lambda>0$ at time $t>0$\\
   $m$ & 4 & Strong convexity constant of $\Phi$\\
   $W_2(\nu,\nu')$ & 4 & $2$-Wasserstein distance between $\nu$ and $\nu'$ wrt the Euclidean norm \\
   $\mathbb{L}_2(\Pi_*)$ & 4 & Set of square integrable functions with respect to $\Pi_*$\\
   $\langle f,g\rangle$ & 4 & Scalar-product between $f$ and $g$ on $\mathbb{L}_2(\Pi_*)$\\
   $\|f\|$ & 4 & Norm of $f$ on $\mathbb{L}_2(\Pi_*)$, defined as $\|f\|=\langle f,f\rangle^{1/2}$\\
   $\mathbb{L}_2^0(\Pi_*)$ &4& Set of centered functions in $\mathbb{L}_2(\Pi_*)$\\
$\mathbb{L}_2^0(\Pi)$&4&Set of functions in $\mathbb{L}_2^0(\Pi_*)$ that depend only on the position\\
$\bfP^t$ & 4& Markov transition kernel wrt $t>0$, either of RHMC or the Langevin diffusion\\
$\|f\|_{\infty}$ & 4 & Supremum norm of functions $\|f\|_{\infty}\triangleq\sup |f|$\\
$C_c^{\infty}(\R^{2d})$ & 4 & Set of infinitely differentiable functions with compact support on $\R^{2d}$\\
 $\mathcal{L}_{\lambda,\alpha}^{\rm RH}$ & 4 & Generator of RHMC, with persistence $\alpha\in[0,1)$ and refreshment rate $\lambda>0$ \\
 $\mathcal{L}_\gamma^{\rm LD}$& 4 & Generator of the Langevin diffusion, with damping $\gamma>0$\\
${\rm IAC}_{f}$& 5 & Integrated Auto-Correlation relative to the test function $f$\\
 ${\rm ESS}_{f}$&5 & Effective Sample Size relative to the test function $f$\\
 $\pi$& 5 & 3.14159265358979323846264338327950288419716939937510 (more or less)\\
 $\kappa$ & 5& Condition number of the Hessian of the log-likelihood at the MLE\\
 \hline
\end{tabular}    
\end{table}

\begin{table}[!ht]
    % \begin{center}
         % \caption{\it Table of notations by order of first appearance - appendix}\label{tab:notations_manuscript}
 \begin{tabular}{ %|m{2.5cm}||m{1cm}|m{1cm}|m{1cm}|m{1cm}|m{1cm}|m{1cm}|  }
%|p{2.5cm}||p{1cm}|p{1cm}|p{1cm}|p{1cm}|p{1cm}|p{1cm}|  }
%|c||c|c|c|c|c|c|c|c|}
|C{1.7cm}|C{1cm}|p{10.3cm}|}
 \hline
\textbf{Notation} & \textbf{Section} & \textbf{Definition}  \\
 \hline
  $\mathbb{E}^{\bz_0}[.]$ & B& Conditional expectation starting from $\bz_0\in\R^{2d}$ \\
 $\|.\|_{\mathbb{L}_2}^{\bz_0}$ & B& Corresponding $\mathbb{L}_2$-norm, defined as $\|.\|_{\mathbb{L}_2}^{\bz_0}=\mathbb{E}^{\bz_0}[(.)^2]^{1/2}$  \\
 $\bx^*$   &B & Minimizer of $\Phi$ on $\R^d$\\
$\mathcal{D}$ &  B& Euclidean norm of $\bx^*$\\
  $\Gamma$ &  B &  Standard Gaussian measure on $\R^d$\\
    $\bfG$ & B & Proposal Markov kernel on the space of trajectories (with initial refreshment)\\
  $\beta(\bz_{0:L})$ & B & Backward trajectory with flipped momentums, $(\varphi(\bZ_L),\cdots,\varphi(\bz_0))$\\
 $\bfM$ & B & Metropolis correction kernel on the space of trajectories (deterministic)\\
$\bfK$ & B & Markov kernel defined as $\bfK=\bfG\bfM\bfG$, coincides marginally with $\bfP$\\
  $\varepsilon_h$ & C & Univariate energy error incurred by one leafrog step\\
  $\Delta_h$ & C & Univariate energy error over a whole trajectory\\
    $\Delta_{h,j}$ & C & Energy error of the $j^{th}$ component over a whole trajectory, IID copy of $\Delta_h$\\
  $S$ & C &  Function defined on $\R^{2}$ as $S(x,v)=\frac{1}{12}v^3\phi^{(3)}(x)+\frac{1}{4}v\phi''(x)\phi'(x)$\\
 $\binom{n}{j_1,j_2,\dots,j_m}$ & C& Multinomial coefficient, defined as $\binom{n}{j_1,j_2,\dots,j_m}=\frac{n!}{j_1!j_2!\dots j_m!}$\\
  $\Psi$ & C & Standard Gaussian Cumulative Distribution Function\\
    $\psi$ & C & Standard Gaussian Probability Density Function\\
  $\mathcal{L}_{\lambda,\alpha}^{\rm Couple}$ & D & Generator of a synchronous coupling of RHMC processes, wrt $\alpha>0$ and $\lambda>0$\\
  $|z|_{\bfA}$& D& Weighted $\bfA$-norm of $\bz$ wrt a pos. def. matrix $\bfA$, defined as $|z|_{\bfA}=(z^\top\bfA z)^{1/2}$\\
  $W_{q,\bfA}(\nu,\nu')$ & D& $q$-Wasserstein distance between $\nu$ and $\nu'$ wrt the weighted $\bfA$-norm\\
    $(\bfP^t)^*$ & D & Adjoint of the semigroup $\bfP^t$\\
  $(\mathcal{L}_{\lambda,\alpha}^{\rm Couple})^*$ & D & Adjoint of the generator $\mathcal{L}_{\lambda,\alpha}^{\rm Couple}$\\
  $ \|f\|_{{\rm Lip},\bfA}$ & D & Lipschitz norm, defined as  $ \|f\|_{{\rm Lip},\bfA}=\sup_{\bz_1\neq\bz_2}|f(\bz_1)-f(\bz_2)|/|\bz_1-\bz_2|_{\bfA}$\\
  $\mathcal{L}^{\rm H}$ & D & Generator of Hamiltonian dynamics, a.k.a Liouville operator\\
  $\mathcal{R}_\alpha^{\rm PP}$ & D& Generator of discrete refreshments induced by a Poisson Process, wrt $\alpha\in[0,1)$\\
  $\mathcal{R}^{\rm BM}$ & D& Generator of continuous refreshments induced by a Brownian Motion \\
 \hline
\end{tabular}    
\end{table}

%Preprint
\renewcommand{\arraystretch}{1}

\section{Proofs of Section 2}
\subsection{Proof of \Cref{prop:StrongAccuracy}}\label{sec:StrongAccuracy}
We refer to a generic theorem established in \cite[Theorem 1.1]{milstein2013stochastic}. The claim of the proposition follows from checking that sufficient conditions to invoke this theorem are satisfied. This task reduces to establishing local accuracy for the numerical Langevin trajectories. We introduce a few notations before stating these conditions, established in \Cref{lem:local_accuracy}. 

For any $t\ge0$, $i\ge0$ and any function $f$ such that $f(\bZ_t,\bz_i)$ is integrable, we denote the conditional expectation starting from $\bz_0$ by $\mathbb{E}^{\bz_0}[f(\bZ_t,\bz_i)]\triangleq\mathbb{E}[f(\bZ_t,\bz_i)\,|\,\bZ_0=\bz_0]$. Whenever $f(\bZ_t,\bz_i)$ is square integrable, we denote the $\mathbb{L}_2$-norm from $\bz_0$ by $\|f(\bZ_t,\bz_i)\|_{\mathbb{L}_2}^{\bz_0}\triangleq \mathbb{E}^{\bz_0}[f(\bZ_t,\bz_i)^2]^{1/2}$.

% In the following
% $\bZ_t=(\bX_t,\bV_t)$ denotes the Langevin diffusion and $\bz_k=(\bx_k,\bv_k)$ denotes its numerical approximation for a given time-step $h>0$. TO BE DEFINED EXPLICITLY
% For any $0\le s\le t<\infty$ we define the stochastic flow
% \begin{equation}\label{eq:stochastic_flow}
%     \bpsi_{t,s}(\bx,\bv)=\left(\bx,e^{-\gamma(t-s)}\bv+\sqrt{2\gamma}\int_s^te^{-\gamma (t-u)}\dd \bW_u\right).
% \end{equation}
% We denote $\btheta_h$ the one-step St\"ormer-Verlet update
% $$
% \btheta_h(\bx,\bv)=\begin{pmatrix*}\bx+h\bv-h^2/2\nabla\Phi(\bx)\\\bv-(h/2)(\nabla\Phi(\bx)+\nabla\Phi(\bx+h\bv-(h^2/2)\nabla\Phi(\bx)))\end{pmatrix*}
% $$
% We introduce the numerical approximation as follows
% \begin{equation*}
% \bz_k=\bpsi_{(k+1)h,kh+h/2}\circ \btheta_h\circ\bpsi_{kh+h/2,kh}(z_0)
% \end{equation*}

\begin{lemma}\label{lem:local_accuracy} (Local accuracy) Suppose that \Cref{assumption:grad_lipschitz} holds. For any fixed $d\ge 1$, $T>0$ and $\gamma\ge0$,
there exists $C>0$ such that for any $h\in (0,T]$ and $\bz_0=(\bx_0,\bv_0)\in\R^{2d}$
\begin{align*}
\left|\E^{\bz_0}[\bZ_h-\bz_1]\right|&\le C(1+|\bz_0|^2)^{1/2}h^{2}\\
    \left(\E^{\bz_0}[|\bZ_h-\bz_1|^2]\right)^{1/2}&\le C(1+|\bz_0|^2)^{1/2}h^{3/2}
\end{align*}
\end{lemma}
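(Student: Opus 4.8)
The plan is to compare explicit representations of the exact flow $\bZ_h=(\bX_h,\bV_h)$ and of the single OBABO step $\bz_1=(\bx_1,\bv_1)$, both driven by the synchronized Brownian motion of \eqref{eq_synchronization_langevin_obabo}, and to read off the two bounds from an expansion in $h$ in which every gradient increment is controlled by \Cref{assumption:grad_lipschitz} alone. Writing $\eta=e^{-\gamma h/2}$ and using variation of constants for \eqref{eq_langevin} gives $\bV_h=\eta^2\bv_0-\int_0^h e^{-\gamma(h-u)}\nabla\Phi(\bX_u)\,\dd u+\bG_h$ and $\bX_h=\bx_0+\int_0^h\bV_u\,\dd u$, with velocity noise $\bG_h\triangleq\sqrt{2\gamma}\int_0^h e^{-\gamma(h-u)}\,\dd\bW_u$. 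Unfolding the OBABO map gives $\bx_1=\bx_0+h\eta\bv_0+h\bN_1-\tfrac{h^2}{2}\nabla\Phi(\bx_0)$ and $\bv_1=\eta^2\bv_0-\tfrac{\eta h}{2}(\nabla\Phi(\bx_0)+\nabla\Phi(\bx_1))+(\eta\bN_1+\bN_2)$, where $\bN_1,\bN_2$ are the Gaussian noises injected at the two velocity-refreshment half-steps. The key observation is that, by \eqref{eq_synchronization_langevin_obabo}, $\eta\bN_1+\bN_2=\bG_h$: the scheme reproduces the velocity noise of the diffusion \emph{exactly}, which removes the leading $O(h^{1/2})$ randomness from the velocity error and leaves only gradient terms there. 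When $\gamma=0$ every noise term vanishes and $\bz_1=\btheta_h(\bz_0)$, so the estimates below reduce to the deterministic local error of the leapfrog map; we therefore focus on $\gamma>0$.

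I would first record the a priori moment bounds $\E^{\bz_0}[\sup_{u\le h}|\bZ_u|^2]\le C(1+|\bz_0|^2)$, obtained from Gr\"onwall and the Burkholder--Davis--Gundy inequality using the linear growth $|\nabla\Phi(\bx)|\le|\nabla\Phi(\bfzero_d)|+M|\bx|$, together with the analogous $\E^{\bz_0}[|\bz_1|^2]\le C(1+|\bz_0|^2)$. These supply the $(1+|\bz_0|^2)^{1/2}$ factor in both claims, in particular $\|\bX_u-\bx_0\|_{\mathbb{L}_2}^{\bz_0}\le C(1+|\bz_0|^2)^{1/2}u$ and $\|\bx_1-\bx_0\|_{\mathbb{L}_2}^{\bz_0}\le C(1+|\bz_0|^2)^{1/2}h$. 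For the mean bound I would apply $\E^{\bz_0}$, which annihilates all martingale terms, and compare the surviving deterministic parts: the $\bv_0$-coefficients agree to $O(h^3)$ because $\tfrac{1-e^{-\gamma h}}{\gamma}$ and $he^{-\gamma h/2}$ coincide to second order in $h$, while the gradient contributions, written as $\nabla\Phi(\bX_u)-\nabla\Phi(\bx_0)$ and $\nabla\Phi(\bx_1)-\nabla\Phi(\bx_0)$ and bounded through the Lipschitz estimate and the moment bounds above, integrate to $O((1+|\bz_0|^2)^{1/2}h^2)$. This yields the first inequality.

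For the mean-square bound I would treat the velocity and position components separately. In the velocity the noises cancel, so the error equals the drift discrepancy $-\int_0^h e^{-\gamma(h-u)}\nabla\Phi(\bX_u)\,\dd u+\tfrac{\eta h}{2}(\nabla\Phi(\bx_0)+\nabla\Phi(\bx_1))$, which is $O((1+|\bz_0|^2)^{1/2}h^2)$ in $\mathbb{L}_2$ by the same Lipschitz-plus-moment estimates. In the position the deterministic and drift parts are again $O((1+|\bz_0|^2)^{1/2}h^3)$, and the binding term is the discrepancy between the exact position noise $\sqrt{2\gamma}\int_0^h\tfrac{1-e^{-\gamma(h-s)}}{\gamma}\,\dd\bW_s$ (obtained from $\int_0^h\bV_u\,\dd u$ by stochastic Fubini) and $h\bN_1$. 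By the It\^o isometry and $\tfrac{1-e^{-\gamma r}}{\gamma}\le r$, each of these has $\mathbb{L}_2$-norm $O(h^{3/2})$ with a constant depending only on $\gamma,d$, so their difference is $O(h^{3/2})$. Collecting the terms and using $h^2\le T^{1/2}h^{3/2}$ and $h^3\le Th^2$ for $h\in(0,T]$ gives both stated inequalities.

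The main obstacle is that \Cref{assumption:grad_lipschitz} gives no handle on $\nabla^2\Phi$, so the usual It\^o--Taylor expansion of $\nabla\Phi(\bX_u)$ is unavailable; the two orders $h^2$ and $h^{3/2}$ must instead be extracted purely from Lipschitz increments of $\nabla\Phi$ combined with the a priori moment bounds. One must also verify that the constants are uniform over the whole range $h\in(0,T]$ rather than merely asymptotic as $h\to0$, which follows because the relevant scalar functions of $h$ (such as $\tfrac{1-e^{-\gamma h}}{\gamma}-he^{-\gamma h/2}$) vanish to the required order at the origin and are bounded on $[0,T]$.
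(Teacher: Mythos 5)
Your proposal is correct and follows essentially the same route as the paper's proof of this lemma: a priori second-moment bounds via Gr\"onwall, explicit Duhamel representations of both the exact flow and the OBABO step under the synchronization \eqref{eq_synchronization_langevin_obabo}, exact cancellation of the velocity noise ($\bG_h=\sqrt{1-\eta^2}(\eta\bxi+\bxi')$), and the centered $O(h^{3/2})$ position-noise discrepancy as the binding term, with all drift terms controlled by the Lipschitz bound $|\nabla\Phi(\bx)|\le M(|\bx|+|\bx^*|)$. The only notable (minor) deviation is that the paper integrates the drift by parts, introducing $\nabla^2\Phi(\bX_s)\bV_s$ (bounded a.e.\ by $M$), whereas you keep the exponential kernels and use only Lipschitz increments of $\nabla\Phi$, which sits slightly more comfortably with \Cref{assumption:grad_lipschitz}.
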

 Combined with \cite[Theorem 1.1]{milstein2013stochastic}, this result yields the claim of the proposition. The proof of \Cref{lem:local_accuracy} is derived hereafter.
The density $\Pi\propto e^{-\Phi}$ is integrable, therefore the potential $\Phi\in C^1(\mathbb{R}^d)$ has a minimum, which is also a zero of $\nabla \Phi$. In other words there exists $\bx^*\in\arg\min \Phi$ such that $\nabla\Phi(\bx^*)=\bfzero_d$, and we denote $\mathcal{D}\triangleq|\bx^*|$. \Cref{assumption:grad_lipschitz} ensures that for any $\bx\in \R^d$, we have $|\nabla \Phi (\bx)|\le M(|\bx|+\mathcal{D})$. We make repeated use of this bound in the sequel. We recall that $\mathcal{L}_\gamma^{\rm LD}$ stands for the generator of the Langevin diffusion; see \Cref{sec_mixing_rates}.

We set $d\ge 1$, $T>0$ and $\gamma\ge0$ to be fixed for the rest of the proof.
We first show that the constant $A\triangleq(1+2T(d\gamma+M^2\mathcal{D}^2))e^{(2M^2+2)T}$ is such that for any $\bz_0=(\bx_0,\bv_0)\in\R^{2d}$ and $t\in(0,T]$
\begin{equation}\label{eq:second_moment_langevin}
    \E^{\bz_0}[|\bZ_t|^2]\le A(1+|\bz_0|^2).
\end{equation}
For $f(\bx,\bv)=|\bx|^2+|\bv|^2$, Young's inequality applied to the generator $\mathcal{L}_\gamma^{\rm LD}$ yields
\begin{align*}
    \mathcal{L}_\gamma^{\rm LD}f(\bx,\bv)&=-2\langle\gamma \bv +\nabla\Phi(\bx),\bv\rangle+2\langle \bv,\bx\rangle +2d\gamma\\
    &\le -2\gamma |\bv|^2+(|\nabla\Phi(\bx)|^2+|\bv|^2)+(|\bx|^2+|\bv|^2)+2d\gamma\\
    &\le (2M^2+2) f(\bx,\bv)+ 2(d\gamma+M^2\mathcal{D}^2)
\end{align*}
Dynkin's formula applies to the norm-like function $f$; see \cite[Theorem 1.1]{meyn_tweedie_1993} for a justification. For $a=2M^2+2$ and $b=2(d\gamma+M^2\mathcal{D}^2)$ this bound yields
$$
    \E^{\bz_0}[f(\bZ_t)]=f(\bz_0)+\int_0^t\E^{\bz_0}[\mathcal{L}_\gamma^{\rm LD}f(\bZ_s)]\dd s\le (bt+ f(\bz_0))+a\int_0^t\E^{\bz_0}[f(\bZ_s)]\dd s.
$$
Gr\"onwall's inequality in its integral form yields
$
\E^{\bz_0}[f(\bZ_t)]\le (bt+ f(\bz_0))e^{at}
$, and \eqref{eq:second_moment_langevin} follows for $A=(1+bT)e^{aT}$.

We apply the inequality \eqref{eq:second_moment_langevin} in the sequel to bound the local errors between $\bZ_h$ and $\bz_1$. To this end, we write down their respective explicit solutions. Integrating the Langevin SDE defined in \eqref{eq_langevin} yields the following solution; see \cite[Lemma 7.1, eqs 43, 44]{bou2010long}.
\begin{align*}
    \bV_h&=e^{-\gamma h}\bv_0-h\nabla\Phi(\bx_0)-\int_0^h((h-s)\nabla^2\Phi(\bX_s)\bV_s+(1-e^{-\gamma (h-s)})\nabla \Phi (\bX_s))\dd s+\bG_h\\
    \bX_h&=\bx_0+h\bv_0-\int_0^h(h-s)(\nabla\Phi(\bX_s)+\gamma \bV_s)\dd s+\sqrt{2\gamma}\int_0^h(h-s)\dd \bW_s
\end{align*}
where 
$$
\bG_h\triangleq\sqrt{2\gamma}\int_0^he^{-\gamma(h-s)}\dd \bW_s.
$$
% From \eqref{eq:stochastic_flow}, we recall that there exist IID random vectors  $\bxi_k\sim\mathcal{N}_d(\bfzero_d,\bfI_d)$ such that for any $k\in\mathbb{N}$
% $$
% \bpsi_{(k+1)h/2,kh/2}(\bx,\bv)=(\bx,\eta \bv\sqrt{1-\eta^2}\bxi_k).
% $$
% The numerical approximation unfolds as
Define $\bxi=\bxi_{0,h/2}$ and $\bxi'=\bxi_{h/2,h}$ with respect to \eqref{eq_synchronization_langevin_obabo}.
The OBABO update unfolds as
\begin{align}\label{eq:OBABOupdate}
    \bv_1&=\eta\left(\eta \bv_0 +\sqrt{1-\eta^2}\bxi-h\nabla\Phi(\bx_0)-(h/2)(\nabla\Phi(\bx_1)-\nabla\Phi(\bx_0))\right)+\sqrt{1-\eta^2}\bxi'\nonumber\\
    \bx_1&=\bx_0+h(\eta \bv_0 +\sqrt{1-\eta^2}\bxi)-(h^2/2)\nabla\Phi(\bx_0)
\end{align}
From \eqref{eq_synchronization_langevin_obabo} we obtain that $\bG_h=\sqrt{1-\eta^2}(\eta\bxi+\bxi')$.
As a result, the difference between the Langevin solution and the OBABO update yields
\begin{align*}
    \bV_h-\bv_1=&-h(1-\eta)\nabla\Phi(\bx_0)+\frac{h}{2}(\nabla\Phi(\bx_1)-\nabla\Phi(\bx_0))-\int_0^h(h-s)\nabla^2\Phi(\bX_s)\bV_s\dd s\\&+\int_0^h(1-e^{-\gamma (h-s)})\nabla \Phi (\bX_s)\dd s\\
    \bX_h-\bx_1=&\,h(1-\eta) \bv_0 +\int_0^h(h-s)[\nabla\Phi(\bx_0)-\nabla\Phi(\bX_s)-\gamma \bV_s]\dd s+\bG_h^{\prime}
\end{align*}
where
$$
\bG_h^{\prime}\triangleq\sqrt{2\gamma}\int_0^h(h-s)\dd \bW_s-h\sqrt{1-\eta^2}\bxi.
$$
Young's inequality yields
\begin{equation*}
    \E[|\bG_h^{\prime}|^2]\le 2d(2\gamma h^3/3+\gamma h^3)\le 4d\gamma h^3.
\end{equation*}
The vector $\bG_h^{\prime}$ is centered. Applying Jensen's and Young's inequalities yields the decompositions
\begin{align}
    \left|\E^{\bz_0}[\bZ_h-\bz_1]\right|&\le\|\bV_h-\bv_1\|_{\mathbb{L}_2}^{\bz_0}+\|\bX_h-\bx_1-\bG_h^{\prime}\|_{\mathbb{L}_2}^{\bz_0}\label{eq:decomposition_1}\\
    \left(\E^{\bz_0}[|\bZ_h-\bz_1|^2]\right)^{1/2}&\le \sqrt{2}\left(\|\bV_h-\bv_1\|_{\mathbb{L}_2}^{\bz_0}+\|\bX_h-\bx_1-\bG_h^{\prime}\|_{\mathbb{L}_2}^{\bz_0}+h^{3/2}\sqrt{4d\gamma}\right)\label{eq:decomposition_2}
\end{align}
The claim of \Cref{lem:local_accuracy} follows from bounding the two norms on the right hand sides. For any $s\in (0,h]$, we have $\|\bZ_s\|_{\mathbb{L}_2}^{\bz_0}\le \sqrt{A}(1+|\bz_0|^2)^{1/2}$ by \eqref{eq:second_moment_langevin}. Minkowski's inequality yields
\begin{align*}
\|\bV_h-\bv_1\|_{\mathbb{L}_2}^{\bz_0}&\le(\gamma h^2M/2)(|\bx_0|+\mathcal{D})+(h^2M/2)\left(|\bv_0|+\|\bxi\|_{\mathbb{L}_2}+(h/2)(|\bx_0|+\mathcal{D})\right)\\&+\int_0^h(h-s)M\|\bV_s\|_{\mathbb{L}_2}^{\bz_0}\dd s+\gamma\int_0^h(h-s)M(\|\bX_s\|_{\mathbb{L}_2}^{\bz_0}+\mathcal{D})\dd s\\
&\le(h^2M/2)\bigg[(\gamma+\sqrt{T}/2)(|\bx_0|+\mathcal{D})+|\bv_0|+\sqrt{d}+(1+\gamma)\underset{0\le s\le h}{\sup}\|\bZ_s\|_{\mathbb{L}_2}^{\bz_0}\bigg]\\
&\le(M/2)\left[(2\gamma+\sqrt{T}) (1+\mathcal{D})+2(1+\sqrt{d})+(1+\gamma)\sqrt{A}\right](1+|\bz_0|^2)^{1/2}h^2
\end{align*}
\begin{align*}
\|\bX_h-\bx_1-\bG_h^{\prime}\|_{\mathbb{L}_2}^{\bz_0}&\le(\gamma h^2/2)|\bv_0|+\int_0^h(h-s)\left(M|\bx_0|+M\|\bX_s\|_{\mathbb{L}_2}^{\bz_0}+\gamma\|\bV_s\|_{\mathbb{L}_2}^{\bz_0}\right)\dd s\\
&\le(h^2/2)\left[\gamma  |\bv_0|+ M|\bx_0|+(M+\gamma)\underset{0\le s\le h}{\sup}\|\bZ_s\|_{\mathbb{L}_2}^{\bz_0}\right]\\
&\le (1/2)\left[(\gamma+M)(1+A^{1/2})\right](1+|\bz_0|^2)^{1/2}h^2
\end{align*}
As a result, for $B\triangleq2M(\gamma+\sqrt{T}/2) (1+\mathcal{D})+2M(1+\sqrt{d})+(\gamma+\gamma M+ M)A^{1/2}$ we obtain
$$
\|\bV_h-\bv_1\|_{\mathbb{L}_2}^{\bz_0}+\|\bX_h-\bx_1-\bG_h^{\prime}\|_{\mathbb{L}_2}^{\bz_0}\le B(1+|\bz_0|^2)^{1/2}h^2.
$$
Plugging this bound into the decompositions \eqref{eq:decomposition_1} and \eqref{eq:decomposition_2}, we get  for any $h\in (0,T]$
\begin{align*}
    \left|\E^{\bz_0}[\bZ_h-\bz_1]\right|&\le B(1+|\bz_0|^2)^{1/2}h^2\\
    \left(\E^{\bz_0}[|\bZ_h-\bz_1|^2]\right)^{1/2}&\le (\sqrt{2T}B+\sqrt{8d\gamma})(1+|\bz_0|^2)^{1/2}h^{3/2}
\end{align*}
This yields the claim of \Cref{lem:local_accuracy} for $C=B\vee(\sqrt{2T}B+\sqrt{8d\gamma})$.
\subsection{Proof of \Cref{prop_malt_reversible}}\label{sec:proof_reversibility}
% For $\gamma=0$, MALT reduces to HMC, and the claim of the proposition follows from \cite[Remark 13]{andrieu2020general}. We now suppose that $\gamma>0$. The distribution $\mu$ admits a density with respect to Lebesgue's measure on $\R^{2d(L+1)}$ such that
% $$
% \mu(\bz_{0:L})=\Pi_*(\bz_0)\prod_{i=1}^Lq_{h,\gamma}(\bz_{i-1},\dd \bz_i).
% $$
% Noting the backward trajectory $\beta(\bz_{0:L})\triangleq(\varphi(\bz_L),\varphi(\bz_{L-1}),\cdots,\varphi(\bz_0))$, we obtain from \eqref{eq_total_error} and \cite[Eq 5.13]{bou2010pathwise} that
% \begin{equation}\label{eq_sum_log_ratio}
%     -\Delta(\bz_{0:L})=\sum_{i=1}^L\log\left(\frac{\Pi_*(\bz_i)q_{h,\gamma}(\varphi(\bz_i),\varphi(\bz_{i-1}))}{\Pi_*(\bz_{i-1})q_{h,\gamma}(\bz_{i-1},\bz_i)}\right)=\log\left(\frac{\mu(\beta(\bz_{0:L}))}{\mu(\bz_{0:L})}\right).
% \end{equation} 
% Denote the corresponding kernel $$\bfF(A)\triangleq\delta_{\beta(\bz_{0:L})}(A)$$
% We introduce the stationary measure of the trajectory $\bz_{0:L}$,  defined by
% \[
% \mu(\dd\bz_{0:L})
% =
% \Pi_*(\dd\bz_0)\prod_{i=1}^L Q_{h,\gamma}(\bz_{i-1},\dd\bz_i)\,.
% \]
% We note $\mu_{\bx}(A)\triangleq\mu(\{\bx_{0:L}\in A\})$ the marginal measure of $\bx_{0:L}$
% defined on Borel sets $A$ of $\R^{d(L+1)}$. 
% $$
% \mu_{\bx}(A)\triangleq\mu\left(\left\{\bx_{0:L}\in A\right\}\,,\, \bv_{0:L}\in \R^{d(L+1)}\right)
% $$
We introduce $\Gamma(B)\triangleq\PP(\bxi\in B)$, where $\bxi\sim\mathcal{N}_d(\bfzero_d,\bfI_d)$, defined for any Borel set $B$ of $\R^d$. We also define the Gibbs update, corresponding to the conditional distribution of $\mu$ given $\bx_0$.
\begin{equation}\label{eq_gibbs}
    \bfG(\bz_{0:L}, \dd \bz_{0:L}')\triangleq\delta_{\bx_0}(\dd \bx_0')\Gamma(\dd \bv_0')\prod_{i=1}^L\bfQ_{h,\gamma}(\bz_{i-1},\dd \bz_i)
\end{equation}
The Gibbs kernel $\bfG$ is reversible with respect to $\mu$ by construction. Built upon a deterministic proposal of the backward trajectory $\beta(\bz_{0:L})\triangleq(\varphi(\bz_L),\varphi(\bz_{L-1}),\cdots,\varphi(\bz_0))$, we introduce a Metropolis update defined for any Borel set $A$ of $\R^{2d(L+1)}$
$$
\bfM(\bz_{0:L}, A)\triangleq(1\wedge e^{-\Delta(x_{0:L})})\delta_{\beta(\bz_{0:L})}(A)+(1-1\wedge e^{-\Delta(x_{0:L})})\delta_{\bz_{0:L}}(A).
$$
The Metropolis kernel $\bfM$ is also reversible with respect to $\mu$. For $\gamma=0$, this follows from \eqref{eq_total_error}, \eqref{eq_energy_difference} and \cite[Theorem 3]{andrieu2020general}. For $\gamma>0$ the distribution $\mu$ admits a density with respect to Lebesgue's measure. From \eqref{eq_total_error} and \cite[Eq 5.13]{bou2010pathwise} we obtain that the corresponding density $\mu$ is such that
\begin{equation}\label{eq_sum_log_ratio}
    -\Delta(\bx_{0:L})=\sum_{i=1}^L\log\left(\frac{\Pi_*(\bz_i)q_{h,\gamma}(\varphi(\bz_i),\varphi(\bz_{i-1}))}{\Pi_*(\bz_{i-1})q_{h,\gamma}(\bz_{i-1},\bz_i)}\right)=\log\left(\frac{\mu(\beta(\bz_{0:L}))}{\mu(\bz_{0:L})}\right).
\end{equation} 
Finally, we consider the cycle $\bfK=\bfG\bfM\bfG$ defined on Borel sets $A$ of $\R^{2d(L+1)}$
$$
\bfK(\bz_{0:L}, A)\triangleq\int\bfG(\bz_{0:L}, \dd \bz_{0:L}')\bfM(\bz_{0:L}', \dd \bz_{0:L}'')\bfG(\bz_{0:L}'', A).
$$
The palindromic structure of $\bfK=\bfG\bfM\bfG$ ensures reversibility with respect to $\mu$. Since the transition $\bfG(\bz_{0:L},.)$ only depends on the starting position $\bx_0\in \R^d$ and $\Pi$ is the marginal of $\mu$, we obtain that $\bfP(\bx_0,A)\triangleq\bfK(\bz_{0:L},A\times \R^{d(2L+1)})$ defines marginally a Markov kernel on $\R^d$, reversible with respect to $\Pi$. In particular, the distribution of $(\bX_n)_{n\ge0}$ in \Cref{alg:malt} coincides with the distribution of a Markov chain generated by $\bfP$.

\subsection{Proof of \Cref{prop:ErgodicityMALT}}\label{sec:proof_ergodicity}
The claim of the proposition follows from establishing that $\bfP$ is both $\Pi$-irreducible and aperiodic; see \cite[Theorem 4]{roberts2004general}. We prove these two results here. For any $\gamma>0$ and any position $\bx_0\in\R^d$, the kernel $\bfG$ in \eqref{eq_gibbs} defines a density with respect to Lebesgue's measure on $\R^{d(2L+1)}$
$$
g(\bx_0,(\bv_0,\bz_{1:L}))\triangleq\psi(\bv_0)\prod_{i=1}^Lq_{h,\gamma}(\bz_{i-1},\bz_i).
$$
By definition, this conditional density is positive everywhere, and $\Delta(\bx_{0:L})<\infty$ for any $\bx_{0:L}\in\R^{d(L+1)}$. Subsequently, for any Borel set $A$ of $\R^d$ such that $\Pi(A)>0$, we have
\begin{equation}\label{eq_irreducibility}
    \bfP(\bx_0,A)\ge \int g(\bx_0,(\bv_0,\bz_{1:L}))(1\wedge e^{-\Delta(x_{0:L})})\dd (\bv_0,\bz_{1:L})>0
\end{equation}
Since $\bfP(\bx_0,A)>0$ for any $\bx_0\in\R^d$ and any $A$ such that $\Pi(A)>0$, we conclude that $\bfP$ is $\Pi$-irreducible. Suppose now that $\bfP$ is periodic. Then there exist two disjoint sets $A_1,A_2$ with $\Pi(A_1)>0$ and $\Pi(A_2)>0$ such that $\bfP(\bx_0,A_2)=1$ for any $\bx_0\in A_1$. However \eqref{eq_irreducibility} implies that $\bfP(\bx_0,A_1)>0$ for any such $\bx_0\in A_1$, yielding a contradiction. We conclude that $\bfP$ is aperiodic.
\section{Proofs of Section 3}

\subsection{Proof of \Cref{thm:MALT_CLT}}\label{sec:proofs:Theorem2}

Analogously to \eqref{eq_local_error} and \eqref{eq_total_error} denote for $x,y\in\R$
\begin{equation}\label{eq:LocalErrorCoordinate}
    \varepsilon_h(x,y)
    ~=~
    \phi(y)-\phi(x)-\frac{y-x}{2} \left(\phi'(y)+\phi'(x\right))
    -
    \frac{h^2}{8}\left(\phi'(y)^2-\phi'(x)^2\right)
\end{equation}
and for $x_{0:L}\in\R^{L+1}$
\begin{equation}\label{eq:TotalErrorCoordinate}
    \Delta_{h}(x_{0:L})
    ~=~
    \sum_{i=1}^L\varepsilon_h(x_{i-1},x_i)\,.
\end{equation}
The product structure of the potential assumed in \Cref{Assumption:ProductForm} enforces a product structure also on the associated total energy differences in the following sense
\begin{equation}\label{eq:ProductFormDelta}
\Delta(\bx_{0:L})
~=~
\sum_{j=1}^d\Delta_{h,j}(\bx_{0:L}(j))\,,
\end{equation}
where $\Delta_{h,j}(\bx_{0:L}(j))$ are IID copies of a random variable $\Delta_h(\bx_{0:L})$ given by a single component of the Langevin trajectory $x_{0:L}$, with respect to the one-dimensional potential $\phi$ (initiated in stationarity).

To prove that total energy difference random variables $\Delta(\bx_{0:L})$ satisfy a form of CLT, we use an extension of the framework introduced in \cite[Section~3]{vogrinc2021counterexamples}. More precisely, \Cref{thm:MALT_CLT} is as a direct application of \cite[Theorem~8]{vogrinc2021counterexamples}. To invoke it, we need to understand the asymptotic behavior of a single component of the Langevin trajectory as the dimension increases. Specifically, we need to verify the following two conditions:

\begin{proposition}\label{prop:DeltaMALT}
Let $T>0$, $\gamma\geq0$ and $\ell>0$ and take a sequence of time-steps $h\to 0$ and $L=\lfloor T/h\rfloor$. Assume the one-dimensional potential $\phi$ satisfies \Cref{Assumption:ProductForm} and let $x_0\sim\Pi$. Then
\begin{enumerate}
    \item[(i)]
\[
    \frac{1}{h^4}\E\left[\Delta_h^2(x_{0:L})\right]
    \quad\xrightarrow{h\to0}\quad
    \Sigma
    ~=~
    \E\left[\left(\int_0^TS(X_t,V_t)\dd t\right)^2\right]
\]
    \item[(ii)]
\[
    \frac{1}{h^4}\E\left[\Delta_h^2(x_{0:L})\mathds{1}_{\Delta_h>h}\right]
    \quad\xrightarrow{h\to0}\quad
    0\,.
\]
\end{enumerate}
\end{proposition}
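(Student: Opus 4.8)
The plan is to prove the stronger statement that $\Delta_h(x_{0:L})/h^2$ converges in $\mathbb{L}_2$ to $-\int_0^T S(X_t,V_t)\dd t$, from which both (i) and (ii) follow. The starting point is a Taylor expansion of the single-step local error \eqref{eq:LocalErrorCoordinate}. Writing $\delta_i=x_i-x_{i-1}=h\,v_{i-1/2}$ for the leapfrog displacement of step $i$ (with $v_{i-1/2}$ the mid-step velocity), and expanding $\phi(x_i)-\phi(x_{i-1})$, $\phi'(x_i)+\phi'(x_{i-1})$ and $\phi'(x_i)^2-\phi'(x_{i-1})^2$ to fourth order, the cubic-in-$\delta_i$ contributions organise exactly into $-h^3 S(x_{i-1},v_{i-1/2})$, while the quartic and higher contributions form a remainder $r_i$. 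Using the uniform bounds on $\phi^{(2)},\phi^{(3)},\phi^{(4)}$ from \Cref{Assumption:ProductForm} (and the at-most-linear growth of $\phi'$ these imply), the remainder admits a bound of the form $|r_i|\le C h^4 P(|x_{i-1}|,|v_{i-1/2}|)$ for a fixed polynomial $P$.

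Summing over $i$ and dividing by $h^2$ gives
\begin{equation*}
\frac{\Delta_h(x_{0:L})}{h^2}=-h\sum_{i=1}^L S(x_{i-1},v_{i-1/2})+\frac{1}{h^2}\sum_{i=1}^L r_i\,.
\end{equation*}
Since $L=\lfloor T/h\rfloor=O(h^{-1})$ and each $r_i$ is $O(h^4)$ times a polynomial in $(|x_{i-1}|,|v_{i-1/2}|)$, the triangle inequality bounds $\|h^{-2}\sum_i r_i\|_{\mathbb{L}_2}$ by $O(h)$ once the trajectory has uniformly (in $i$ and $h\in(0,T]$) bounded moments of sufficiently high order (up to the eighth, which the velocities possess by Gaussianity and the positions by the moment condition in \Cref{Assumption:ProductForm}). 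I would obtain these uniform moment bounds by a discrete Lyapunov/Gr\"onwall argument on the OBABO chain, mirroring the continuous estimate \eqref{eq:second_moment_langevin} in the proof of \Cref{prop:StrongAccuracy} and using the at-most-linear growth of $\phi'$ together with the Gaussianity of the refreshments. Thus the remainder vanishes in $\mathbb{L}_2$.

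The core of the argument is to show $h\sum_{i=1}^L S(x_{i-1},v_{i-1/2})\to\int_0^T S(X_t,V_t)\dd t$ in $\mathbb{L}_2$, and this is the hard part: $S$ is cubic in the velocity, hence not globally Lipschitz, so \Cref{prop:StrongAccuracy} cannot be applied to it directly. Following the strategy indicated in the text, I would truncate $S=S_K+(S-S_K)$, with $S_K$ a Lipschitz modification obtained by capping the velocity dependence at level $K$. For fixed $K$, replacing the discrete state $(x_{i-1},v_{i-1/2})$ by the synchronized continuous state $(X_{(i-1)h},V_{(i-1)h})$ costs, after weighting by $h$ and summing, at most $\mathrm{Lip}(S_K)\cdot h\cdot L\cdot\sup_i\|\bz_{i-1}-\bZ_{(i-1)h}\|_{\mathbb{L}_2}=O(\mathrm{Lip}(S_K)\,h)$ by \Cref{prop:StrongAccuracy}, which vanishes as $h\to0$; the remaining continuous Riemann sum $h\sum_i S_K(X_{(i-1)h},V_{(i-1)h})$ converges in $\mathbb{L}_2$ to $\int_0^T S_K(X_t,V_t)\dd t$ by continuity of the stationary integrand and dominated convergence. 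The tail $S-S_K$ is controlled uniformly in $h$ by the high-order moment bounds above and vanishes as $K\to\infty$, so a standard $\varepsilon/3$ argument (first $K$ large, then $h$ small) closes the convergence. A subsidiary point is that the leapfrog-start velocity $v_{i-1/2}$ differs from the grid velocity $v_{i-1}$ by the half-refreshment of the OBABO scheme \eqref{eq_synchronization_langevin_obabo}; its deterministic part is $O(h)$ and its Gaussian part is $O(\sqrt h)$ but mean-zero and independent across steps, so after weighting by $h$ and summing its variance is $O(h^2)$, rendering this discrepancy negligible in $\mathbb{L}_2$.

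Combining the two displays yields $\mathbb{L}_2$-convergence of $\Delta_h/h^2$ to $-\int_0^T S(X_t,V_t)\dd t$, which gives convergence of second moments and hence part (i). For part (ii), $\mathbb{L}_2$-convergence of $\Delta_h/h^2$ implies that $\{(\Delta_h/h^2)^2\}$ is uniformly integrable; since $\{\Delta_h>h\}\subseteq\{(\Delta_h/h^2)^2>h^{-2}\}$ and $h^{-2}\to\infty$, uniform integrability forces $h^{-4}\E[\Delta_h^2\mathds{1}_{\Delta_h>h}]=\E[(\Delta_h/h^2)^2\mathds{1}_{\Delta_h>h}]\to0$. The main obstacle throughout is the non-Lipschitz Riemann-sum convergence of the third paragraph, namely transferring strong accuracy through the cubic function $S$ via truncation while keeping the moment bookkeeping uniform in $h$.
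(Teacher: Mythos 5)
Your proposal is essentially correct and reaches the same intermediate statement as the paper ($\mathbb{L}_2$-convergence of $\Delta_h/h^2$ to $\pm\int_0^T S(X_t,V_t)\,\dd t$; the sign is immaterial since $S$ is odd in $v$ and the stationary law is invariant under velocity flip, so $\Sigma$ is unchanged), but the route through the key technical step is genuinely different. You treat all of $S$ uniformly: Taylor-expand each local error to produce $-h^3 S(x_{i-1},v_{i-1/2})$ plus an $O(h^4)$ remainder, then push the non-Lipschitz Riemann sum through \Cref{prop:StrongAccuracy} by truncating $S$ to a Lipschitz $S_K$ and controlling the tail with uniform eighth moments. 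The paper instead splits $S$ into its two terms and never confronts a non-Lipschitz Riemann sum for the $\tfrac14 v\phi''\phi'$ part: that part telescopes exactly to the boundary term $\phi'(x_L)^2-\phi'(x_0)^2$, matched to $\phi'(X_T)^2-\phi'(X_0)^2$ via It\^o's lemma and a single endpoint comparison (\Cref{lem:VarianceForm}(ii)); the cubic part is handled not by truncation but by the explicit interpolation inequalities $(a^3-b^3)^2\le 3(a-b)^6+27(a-b)^4b^2+27(a-b)^2b^4$ of \Cref{lem:VarianceForm}(iii)--(iv), which convert strong accuracy of $a-b$ plus moments of $b$ directly into strong accuracy of $a^3-b^3$. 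Your truncation argument buys uniformity (one mechanism for the whole of $S$, and it would extend to other integrands with polynomial growth); the paper's buys sharper, fully quantitative error rates and avoids the $\varepsilon/3$ bookkeeping. For part (ii) your uniform-integrability argument is cleaner than the paper's Cauchy--Schwarz--Markov route, which additionally requires checking $\E[(\int_0^T S)^4]<\infty$.

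Three small repairs are needed, none fatal. First, capping only the velocity does not make $S_K$ globally Lipschitz: the term $v\,\phi''(x)\phi'(x)$ has $x$-derivative $v(\phi^{(3)}\phi'+(\phi'')^2)$, which grows linearly in $x$ since $\phi'$ does; you must cap $\phi'(x)$ (equivalently $|x|$) as well, and the corresponding tail is still killed by the uniform eighth moments of the positions. Second, the chain velocities $v_i$, $i\ge1$, are not Gaussian (they accumulate drift terms $\phi'(x_j)$), so their uniform eighth moments also come from the discrete Gr\"onwall argument rather than ``Gaussianity''; you do invoke that argument, which is exactly \Cref{lem:VarianceForm}(i), so this is only a mislabelled justification. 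Third, after composing with $S_K$ the half-refreshment contributions are neither mean-zero nor independent across steps (they are filtered through $x_{i-1}$), so the claimed $O(h^2)$ variance cancellation is not available; but the crude bound $h\sum_{i\le L}\mathrm{Lip}(S_K)\,O(\sqrt h)=O(\sqrt h)$ already suffices for fixed $K$, so the conclusion stands.
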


A proof is given is \Cref{sec:DeltaMALTproof}.

Some care is required to justify that the results of \cite[Section~3]{vogrinc2021counterexamples} do actually carry over from the classical Metropolis-Hastings setting to MALT.
The results of \cite[Section~3]{vogrinc2021counterexamples} concern objects called \emph{log Metropolis-Hastings random variables} that are related to the Kullback-Leibler divergence between the forward $\Pi(\dd x)q(x,\dd y)$ and reverse $\Pi(\dd y)q(y,\dd x)$ transition kernels. The total energy difference random variables defined here are a generalisation, they can also be related to the Kullback-Leibler divergence between the forward distribution $\mu(z_{0:L})$ and the skew-backward distribution $\mu(\beta (z_{0:L}))$.

The proofs of \cite[Section~3]{vogrinc2021counterexamples} depend entirely on the symmetry property of log-MH random variables featured in the following proposition. They do not rely on the exact definition of log-MH random variables, only on their symmetric property. Therefore, the only thing we need to do in order to extend the results to the trajectory setting is to verify this symmetry is still satisfied, the proofs from that point on remain literally the same.

\begin{proposition}\label{prop:log-AR_properties}
Let $\Delta(\bx_{0:L})$ be a total energy difference function of MALT and abbreviate $\bx_{L:0}=(\bx_L,\dots,\bx_0)$. Then 
\begin{enumerate}
\item[(i)] $\Delta(\bx_{0:L})~=~ -\Delta(\bx_{L:0})$.
\item[(ii)] Let $f: \R\to\R$ be a measurable functions such that $f\circ\Delta$ is integrable with respect to
$\mu$. Then 
% (with respect to $\mu(\dd z_{0:L})$)
$\E_\mu[f(-\Delta )e^{-\Delta }]=\E_\mu[f(\Delta)]$.
\end{enumerate}
\end{proposition}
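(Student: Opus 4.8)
First I would prove (i) by checking that the per-step local error \eqref{eq_local_error} is antisymmetric, $\mathcal{E}_h(\by,\bx)=-\mathcal{E}_h(\bx,\by)$. Inspecting the three terms of \eqref{eq_local_error}: the potential difference $\Phi(\by)-\Phi(\bx)$ flips sign under $\bx\leftrightarrow\by$; in the bilinear term $\tfrac12(\by-\bx)^\top(\nabla\Phi(\by)+\nabla\Phi(\bx))$ the symmetric factor $\nabla\Phi(\by)+\nabla\Phi(\bx)$ is invariant while $\by-\bx$ changes sign; and the gradient-norm term $\tfrac{h^2}{8}(|\nabla\Phi(\by)|^2-|\nabla\Phi(\bx)|^2)$ flips sign as well. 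Reindexing the sum in \eqref{eq_total_error} by $j=L-i+1$ then gives $\Delta(\bx_{L:0})=\sum_{i=1}^L\mathcal{E}_h(\bx_{L-i+1},\bx_{L-i})=\sum_{j=1}^L\mathcal{E}_h(\bx_j,\bx_{j-1})=-\Delta(\bx_{0:L})$, which is (i). Since $\Delta$ depends only on the positions, and the positions of $\beta(\bz_{0:L})=(\varphi(\bz_L),\dots,\varphi(\bz_0))$ are exactly the reversed sequence $\bx_L,\dots,\bx_0$, this is equivalent to the trajectory-level identity $\Delta\circ\beta=-\Delta$.

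For (ii) the plan is to reduce the statement to the single change-of-variables identity
\[
\E_\mu\!\left[G(\bz_{0:L})\,e^{-\Delta(\bz_{0:L})}\right]=\E_\mu\!\left[G(\beta(\bz_{0:L}))\right],\qquad(\star)
\]
valid for every bounded measurable $G$ on trajectory space. Granting $(\star)$, I would take $G=f(-\Delta)$ and apply part (i) in the form $\Delta\circ\beta=-\Delta$ to obtain $\E_\mu[f(-\Delta)e^{-\Delta}]=\E_\mu[f(-\Delta(\beta(\bz_{0:L})))]=\E_\mu[f(\Delta)]$, which is exactly (ii). The integrability of $f\circ\Delta$ under $\mu$ is what licenses this manipulation, since applying $(\star)$ to $|f(-\Delta)|$ shows $\E_\mu[|f(-\Delta)|e^{-\Delta}]=\E_\mu[|f(\Delta)|]<\infty$.

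Establishing $(\star)$ is the crux. For $\gamma>0$ the measure $\mu$ has a positive Lebesgue density, and the identity \eqref{eq_sum_log_ratio} derived in the proof of \Cref{prop_malt_reversible} reads $e^{-\Delta(\bz_{0:L})}\mu(\bz_{0:L})=\mu(\beta(\bz_{0:L}))$. Since $\beta$ is a volume-preserving involution — it is the composition of the velocity flip $\varphi$ applied blockwise with a reordering of the trajectory coordinates, both of unit absolute Jacobian — the substitution $\bw_{0:L}=\beta(\bz_{0:L})$ turns $\int G(\bz_{0:L})\,\mu(\beta(\bz_{0:L}))\,\dd\bz_{0:L}$ into $\int G(\beta(\bw_{0:L}))\,\mu(\bw_{0:L})\,\dd\bw_{0:L}$, which is $(\star)$.

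The main obstacle is the degenerate case $\gamma=0$, where $\mu$ is singular — concentrated on the single Hamiltonian trajectory determined by $\bz_0$ — so that \eqref{eq_sum_log_ratio} does not apply verbatim. Here I would work directly at the level of the initial law $\Pi_*$: a $\mu$-trajectory has the form $T(\bz_0)=(\bz_0,\btheta_h(\bz_0),\dots,\btheta_h^L(\bz_0))$ with $\bz_0\sim\Pi_*$, the total error reduces by \eqref{eq_energy_difference} to the energy gap $\Delta=-\log(\Pi_*(\btheta_h^L\bz_0)/\Pi_*(\bz_0))$, and time reversibility of the leapfrog integrator gives $\beta(T(\bz_0))=T(\Psi(\bz_0))$ for the involution $\Psi=\varphi\circ\btheta_h^L$. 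Then $e^{-\Delta}\Pi_*(\bz_0)=\Pi_*(\btheta_h^L\bz_0)=\Pi_*(\Psi(\bz_0))$ by the $\varphi$-invariance of $\Pi_*$, and the substitution $\bw=\Psi(\bz_0)$ — again volume-preserving and involutive — yields $(\star)$. This recovers the identity uniformly in $\gamma\ge0$ and completes (ii).
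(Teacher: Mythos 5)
Your proposal is correct and takes essentially the same route as the paper: part (i) via antisymmetry of the local error $\mathcal{E}_h$, and part (ii) via the change-of-variables identity $\mu(\bz_{0:L})e^{-\Delta}=\mu(\beta(\bz_{0:L}))$ from \eqref{eq_sum_log_ratio} for $\gamma>0$, with the degenerate case $\gamma=0$ handled separately through the volume-preserving involution $\varphi\circ\btheta_h^L$ acting on the initial point under $\Pi_*$. The only cosmetic difference is that you package the argument as an abstract identity $(\star)$ and handle integrability by first applying it to $|f(-\Delta)|$, which is exactly what the paper does by running the computation with absolute values before dropping them.
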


\begin{proof} 
Part (i) of the proposition
is immediate from \eqref{eq_local_error} and \eqref{eq_total_error}. To establish Part (ii) we split the cases $\gamma>0$ and $\gamma=0$.
In the positive friction case, by \eqref{eq_sum_log_ratio} the measure $\mu$ has a positive density with respect to Lebesgue measure on $\R^{2d(L+1)}$ such that
\[
\mu(\bz_{0:L})e^{-\Delta(\bx_{0:L})}
~=~
\mu(\beta(\bz_{0:L}))\,.
\]
Combining this equality to the change of variable $\bz'_{0:L}=\beta(\bz_{0:L})$ with unit Jacobian, we obtain
\begin{align*}
 \E_\mu[|f(-\Delta(\bx_{0:L}))|e^{-\Delta(\bx_{0:L})} ]
 ~&=~
 {\int}_{\R^{2d(L+1)}} |f(-\Delta(\bx_{0:L}))|e^{-\Delta(\bx_{0:L})} \mu(\bz_{0:L})\dd \bz_{0:L}
 \\&=~
 {\int}_{\R^{2d(L+1)}} |f(\Delta(\bx_{L:0}))|\mu(\beta(\bz_{0:L}))\dd \bz_{0:L}
 \\&=~
  {\int}_{\R^{2d(L+1)}}  |f(\Delta(\bx'_{0:L}))|\mu(\bz'_{0:L})\dd\bz'_{0:L}
  ~<~
  \infty\,.
\end{align*}
% For the third equality we have used change of variables $\bw_i=(\bu_i,\bv_i)=\varphi(\bz_{L-i})$ for $i=0,\cdots,L$ with unit Jacobian determinant. 
This ensures that the quantity $\E[f(-\Delta(\bx_{0:L}))e^{-\Delta(\bx_{0:L})}]$ is integrable. Repeating the calculation without the absolute values yields the desired identity. Now, when $\gamma=0$ the measure $\mu$ is characterized by the density $\Pi_*$ on $\R^{2d}$. Each trajectory is a deterministic function of its starting point $\bz_{0:L}=(\bz_0,\cdots,\btheta_h^L(\bz_0))$. 
% Denote with $\btheta^i_{h,x}$ the position part of the function $\btheta^i_h$. 
We apply the change of variable $\bz'_0=\varphi\circ\btheta^{L}_h(\bz_0)$ with unit Jacobian. Denote $\bz'_{0:L}=(\bz'_0,\cdots,\btheta_h^L(\bz'_0))$ and remark that time-reversibility of the Leapfrog update yields $\bz'_{0:L}=(\varphi\circ\btheta_h^L(\bz_0),\cdots,\varphi(\bz_0))$. Since $\varphi$ leaves $\Pi_*$ invariant, we obtain
%We make use of equations \eqref{eq_total_error} and \eqref{eq_energy_difference} together wih part (i) for the second equality and change of variables $\bw_0=(\bu_0,\bv_0)=\varphi(\btheta^{L}_h(\bz_0))$ with unit Jacobian together with $\Pi_*(\varphi(\bw_0))=\Pi_*(\bw_0)$ for the next one:
\begin{align*}
 \E_\mu[|f(-\Delta(\bx_{0:L}))|e^{-\Delta(\bx_{0:L})} ]&\quad=~
 {\int}_{\R^{2d}} \left|f\left(-\Delta\left(\bx_{0:L}\right)\right)\right|e^{-\Delta\left(\bx_{0:L}\right)} \Pi_*(\bz_{0})\dd \bz_0
 \\&\quad=~
 {\int}_{\R^{2d}} \left|f\left(\Delta\left(\bx_{L:0}\right)\right)\right|\Pi_*(\btheta^L_h(z_0))\dd \bz_0
 \\&\quad=~
  {\int}_{\R^{2d}}  \left|f\left(\Delta\left(\bx'_{0:L}\right)\right)\right|\Pi_*(\bz'_0)\dd\bz'_0
  ~<~
  \infty\,.
\end{align*}
% \begin{align*}
%  &\E[|f(-\Delta(\bx_{0:L}))|e^{-\Delta(\bx_{0:L})} ]
%  \\&\quad=~
%  {\int}_{\R^{2d}} \left|f\left(-\Delta\left(\bx_0,\btheta^1_{h,x}(\bz_0),\dots,\btheta^L_{h,x}(\bz_0)\right)\right)\right|e^{-\Delta\left(\bx_0,\btheta^1_{h,x}(\bz_0),\dots,\btheta^L_{h,x}(\bz_0)\right)} \Pi_*(\bz_{0})\dd \bz_0
%  \\&\quad=~
%  {\int}_{\R^{2d}} \left|f\left(\Delta\left(\btheta^L_{h,x}(\bz_0),\dots,\btheta^1_{h,x}(\bz_0),\bx_0\right)\right)\right|\Pi_*(\btheta^L_h(z_0))\dd \bz_0
%  \\&\quad=~
%   {\int}_{\R^{2d}}  \left|f\left(\Delta\left(\bu_0,\btheta^1_{h,x}(\bw_0),\dots,\btheta^L_{h,x}(\bw_L)\right)\right)\right|\Pi_*(\boldsymbol{w}_{0})\dd\bw_0
%  \\&\quad=~
%   \E[|f(\Delta(\bx_{0:L}))| ]
%   ~<~
%   \infty\,.
% \end{align*}
 Again, the same calculation without absolute values establishes the desired identity.
\end{proof}

\subsection{Proof of \Cref{prop:DeltaMALT}}\label{sec:DeltaMALTproof}

The center of the proof for both (i) and (ii) is showing that
\begin{equation}\label{eq:logAR_ConditionsAux1}
\frac{1}{h^2}\Delta_h(x_{0:L})
~=~
\int_0^TS(X_t,V_t) \dd t
~+~R_h\,,
\end{equation}
where the reminder $R_h$ satisfies $\E[R_h^2]\to 0$ as $h\to\infty$. This is done by approximating the leading terms of a Taylor's expansion of $\Delta_h(x_{0:L})$ over the entire Langevin trajectory and controlling the square expectations of the error terms. For this task we need to extend strong accuracy results (see \Cref{prop:StrongAccuracy}) beyond Lipschitz functions. This is enabled by the following lemma proven in \Cref{sec:proofTechnicalLemma}.

\begin{lemma}\label{lem:VarianceForm}
Let $T>0$, $\gamma\geq0$ and $h\in (0,1]$ and take $L$ so that $hL\leq T$. Assume the one-dimensional potential $\phi$ satisfies \Cref{Assumption:ProductForm} and let $x_0\sim\Pi$. There exists  constants $A_1,A_2,A_3,A_4>0$ independent of the choices of $h$ and $L$ such that following statements hold:
\begin{enumerate}
\item[(i)]
 \[
 \max_{0\leq i\leq L}\max( \E[x_{i}^8],\E[v_{i}^8])~\leq~A_1\,,
 \]

\item[(ii)]
    \[
    \E\left[\left(\phi'(x_{L})^2-\phi'(X_{h L})^2\right)^2\right]
    ~\leq~
    A_2h\,,
    \]
\item[(iii)]
    \[
    \sup_{1\leq i\leq L}\E\left[\left(\left(\frac{x_{i}-x_{i-1}}{h}\right)^3-\hat     V_{i-1}^3\right)^2\right]
    ~\leq~
    A_3h\,,
    \]
\item[(iv)]
    \[
    \sup_{1\leq i\leq L}\E\left[\left( V_{(i-1)h}^3-v_{i-1}^3\right)^2\right]
    ~\leq~
    A_4h^{2/3}\,.
\]
\end{enumerate}
\end{lemma}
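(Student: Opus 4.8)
The plan is to reduce every statement to two ingredients: uniform moment bounds along the numerical and exact trajectories, and the strong $\mathbb{L}_2$ accuracy of \Cref{prop:StrongAccuracy}, combined through elementary difference-of-powers factorizations and H\"older/interpolation inequalities. Throughout I work with a single component, for which \Cref{Assumption:ProductForm} gives a potential $\phi$ with bounded $\phi''$, hence a globally Lipschitz $\phi'$ of at most linear growth, and a stationary start $(x_0,v_0)\sim e^{-\phi}\otimes\mathcal N(0,1)$.

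First I would prove (i) by a direct Lyapunov/Gr\"onwall argument on the one-dimensional form of the OBABO recursion \eqref{eq:OBABOupdate}. Writing the one-step map explicitly, the position increment is $O(h)$ (a velocity term, an $O(h^2)$ gradient kick, and an $O(h^{3/2})$ Gaussian term, since $\sqrt{1-\eta^2}=\sqrt{1-e^{-\gamma h}}=O(\sqrt{h})$), while the velocity is contracted by $\eta^2=e^{-\gamma h}\leq 1$ and perturbed by an $O(\sqrt h)$ Gaussian and $O(h)$ gradient terms. Using the linear growth $|\phi'(x)|\leq|\phi'(0)|+\|\phi''\|_\infty|x|$ and that odd moments of the fresh Gaussians vanish, one obtains for $W(x,v)=(1+x^2+v^2)^4$ a one-step bound $\E[W(x_i,v_i)\mid x_{i-1},v_{i-1}]\leq(1+Ch)W(x_{i-1},v_{i-1})+Ch$ with $C$ depending only on $T,\gamma$ and $\|\phi''\|_\infty$. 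Iterating over $i\leq L\leq T/h$ gives $\E[W(x_i,v_i)]\leq e^{CT}(\E[W(x_0,v_0)]+T)$, finite and $h$-independent because $\int x^8e^{-\phi}\,\mathrm dx<\infty$ controls $\E[W(x_0,v_0)]$; this yields (i). I would also record that, since the synchronised exact process starts at stationarity and the Langevin diffusion preserves $\Pi_*$, one has $(X_t,V_t)\sim e^{-\phi}\otimes\mathcal N(0,1)$ for every $t$, so all moments of $X_{hL}$ and $V_{(i-1)h}$ appearing below are finite and bounded by constants, with no extra argument needed.

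The estimates (ii)--(iv) then share one scheme. Each has the form $\E[(P^k-Q^k)^2]$ for $k\in\{2,3\}$, and I factor $P^k-Q^k=(P-Q)\sum_{j=0}^{k-1}P^jQ^{k-1-j}$, where $P-Q$ is small in $\mathbb{L}_2$ and the remaining factor has bounded moments. For (ii) I write $\phi'(x_L)^2-\phi'(X_{hL})^2=(\phi'(x_L)-\phi'(X_{hL}))(\phi'(x_L)+\phi'(X_{hL}))$, bound the first factor by $\|\phi''\|_\infty|x_L-X_{hL}|$, and control $\E[|x_L-X_{hL}|^2]=O(h^2)$ by \Cref{prop:StrongAccuracy}; a H\"older split with interpolation (detailed for the harder case (iv)), together with the moment bounds of (i) and stationarity, then gives $O(h)$ (in fact better). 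For (iii), the forward difference $\tfrac{x_i-x_{i-1}}{h}$ differs from $\hat V_{i-1}$ only by the half-step gradient kick $-\tfrac h2\phi'(x_{i-1})$, so $P-Q=O(h)$ in every $\mathbb{L}_p$ and the difference of cubes is $O(h^2)$ in $\mathbb{L}_2$, hence $\leq A_3h$ for $h\leq1$; this uses only (i).

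The delicate case is (iv), where $P-Q=V_{(i-1)h}-v_{i-1}$ is the genuine velocity strong error, known only in $\mathbb{L}_2$ at order $h$, while the companion factor $V_{(i-1)h}^2+V_{(i-1)h}v_{i-1}+v_{i-1}^2$ is quadratic. Here a plain Cauchy--Schwarz split is unaffordable, since $\E[(P-Q)^4]$ is not available and the quadratic factor raised to the fourth power already consumes eight moments. Instead I interpolate: the strong error is $O(h)$ in $\mathbb{L}_2$ and $O(1)$ in $\mathbb{L}_8$ (by (i) and stationarity), so $\|V_{(i-1)h}-v_{i-1}\|_4\leq\|\cdot\|_2^{1/3}\|\cdot\|_8^{2/3}=O(h^{1/3})$, and H\"older with conjugate exponents $(2,2)$ against $\E[(V^2+Vv+v^2)^4]^{1/2}=O(1)$ yields $\E[(P^3-Q^3)^2]=O(h^{2/3})$. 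This interpolation, forced by the cubic nonlinearity meeting the ceiling of eight moments afforded by \Cref{Assumption:ProductForm}, is the main obstacle and is exactly what produces the weaker exponent $h^{2/3}$ rather than $h$; closing the bookkeeping of exponents within the eight available moments is the crux of the argument.
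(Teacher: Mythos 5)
Your overall strategy — a discrete Gr\"onwall/Lyapunov argument for (i), then difference-of-powers factorizations combined with H\"older and $\mathbb{L}_p$-interpolation against the eight-moment ceiling for (ii)--(iv) — is exactly the paper's strategy, and your treatment of (i) and (iv) is essentially identical to it (the paper's pointwise inequality $(a-b)^2(|a|+|b|)^4\le(a-b)^{2/3}(|a|+|b|)^{16/3}$ followed by H\"older$(3,3/2)$ is just your $\mathbb{L}_2$--$\mathbb{L}_8$ interpolation of $\|a-b\|_4$ in disguise). However, two of your intermediate claims do not survive scrutiny. First, in (iii) you assert that $h^{-1}(x_i-x_{i-1})$ differs from the reference velocity \emph{only} by the kick $-\tfrac h2\phi'(x_{i-1})$, hence is $O(h)$ in every $\mathbb{L}_p$ and the squared $\mathbb{L}_2$ error of the cubes is $O(h^2)$. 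The quantity actually needed downstream (in the remainder $R_{i,1}$ of the CLT proof) is the difference from $v_{i-1}^3$, the velocity \emph{before} the O half-step; for $\gamma>0$ the increment then also contains $(e^{-\gamma h/2}-1)v_{i-1}+\sqrt{1-e^{-\gamma h}}\,\xi$, and the Gaussian refresh term is only $O(h^{1/2})$ in $\mathbb{L}_p$. The correct order for $\E[(P^3-Q^3)^2]$ is therefore $O(h)$, not $O(h^2)$ — the stated bound $A_3h$ still holds, but for a different reason, and your argument as written only covers the case $\gamma=0$.

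Second, and more seriously, your (ii) defers to "the H\"older split with interpolation detailed for (iv)". Transplanted literally, that split is Cauchy--Schwarz with exponents $(2,2)$ plus $\|x_L-X_{hL}\|_4\le\|x_L-X_{hL}\|_2^{1/3}\|x_L-X_{hL}\|_8^{2/3}=O(h^{1/3})$, which yields only $O(h^{2/3})$ — this does \emph{not} imply the claimed bound $A_2h$, since $h^{2/3}\gg h$ as $h\to0$. The step needs repairing: either exploit the fact that here the companion factor is only quadratic (rather than quartic as in (iv)) and use H\"older with exponents $(4/3,4)$ together with $\|x_L-X_{hL}\|_{8/3}\le\|\cdot\|_2^{2/3}\|\cdot\|_8^{1/3}$, which gives $O(h^{4/3})\le A_2h$ and justifies your parenthetical "in fact better"; or do what the paper does and absorb one of the two powers of the difference into the bounded factor via $|\phi'(x_L)-\phi'(X_{hL})|\le|\phi'(x_L)|+|\phi'(X_{hL})|$, after which a single Cauchy--Schwarz with the full $\mathbb{L}_2$ rate of \Cref{prop:StrongAccuracy} and the sixth moments from (i) gives $O(h)$ directly. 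As written, the deferral points to an argument whose exponents cannot deliver the claimed rate.
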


To show \eqref{eq:logAR_ConditionsAux1} we transform the expression in \eqref{eq:TotalErrorCoordinate} into the desired form. The sum of third terms of \eqref{eq:LocalErrorCoordinate} terms appearing in  \eqref{eq:TotalErrorCoordinate} can be dealt with using It\^o's lemma
\[
\int_0^TV_t\phi''(X_t)\phi'(X_t)\dd t
~=~
\int_0^T\frac{\partial}{\partial x}\left(\frac{1}{2}\phi'(X_t)^2\right)\dd X_t
~=~
\frac{1}{2}\left(\phi'(X_T)^2-\phi'(X_0)^2\right).
\]
Since $x_{0}=X_0$, this implies that
\begin{align}\label{eq:logAR_ConditionsAux2}
\begin{split}
        \E\bigg[\Big(&\sum_{i=1}^L\big(\phi'(x_{i})^2-\phi'(x_{i-1})^2\big)
    ~-~
    2\int_0^TV_t\phi''(X_t)\phi'(X_t)\dd t\Big)^2\bigg]
    \\&=~
    \E\left[\left(\phi'(x_{L})^2-\phi'(x_{0})^2-\phi'(X_T)^2+\phi'(X_0)^2\right)^2\right]
    \\&\leq~
    2\E\left[\left(\phi'(x_{L})^2-\phi'(X_{hL})^2\right)^2\right]
    ~+~
    2\E\left[\left(\phi'(X_T)^2- \phi'(X_{hL})^2\right)^2\right]
    \quad\xrightarrow{h\to0}\quad
    0\,.
\end{split}
\end{align}
The first term vanishes by \Cref{lem:VarianceForm} (ii). The second vanishes because $hL\to T$, the function $\phi'$ is $M$-Lipschitz (grows at most linearly) and $X_t$ is a stationary process with continuous paths and finite fourth moments.

To control the rest of \eqref{eq:TotalErrorCoordinate} we take advantage of  \cite[Lemma~A.2.]{vogrinc2021counterexamples} and the fundamental theorem of calculus. For each $1\leq i\leq L$ we rewrite using $\int_0^1(1-2u)\dd u=0$
\begin{align*}
   \phi(x_{i})-\phi(x_{i-1})&-\frac12(x_{i}-x_{i-1}) \left(\phi'(x_{i})+\phi'(x_{i-1}\right)
    \\&=~
    \frac12(x_{i}-x_{i-1})^2
    \int_0^1(1-2u)\phi''(x_{i-1}+u(x_{i}-x_{i-1})) \dd u
    \\&=~
    \frac12(x_{i}-x_{i-1})^3
    \int_0^1(1-2u)u\int_0^1\phi^{(3)}(x_{i-1}+su(x_{i}-x_{i-1})) \dd s \dd u\,.
\end{align*}
Noting that $\int_0^1(1-2u)u \dd u=-\tfrac{1}{6}$, we can relate this to a term expressed with Langevin dynamics:
\begin{align}\label{eq:logAR_ConditionsAux3}
\begin{split}
        \phi(x_{i})-\phi(x_{i-1})&-\frac12(x_{i}-x_{i-1}) \left(\phi'(x_{i})+\phi'(x_{i-1}\right)+\frac{1}{12}h^3V_{(i-1)h}^3\phi^{(3)}(X_{(i-1)h})
    \\
    &=~
    R_{i,1}~+~R_{i,2}~+~R_{i,3}\,,
\end{split}
\end{align}
where 
\begin{align*}
    R_{i,1}
    &=
    \frac12\left((x_{i}-x_{i-1})^3-h^3v_{i-1}^3\right)
    \int_0^1(1-2u)u\int_0^1\phi^{(3)}(x_{i-1}+su(x_{i}-x_{i-1})) \dd s \dd u
    \\
    R_{i,2}
    &=
    \frac12h^3\left(v_{i-1}^3-V_{(i-1)h}^3\right)
    \int_0^1(1-2u)u\int_0^1\phi^{(3)}(x_{i-1}+su(x_{i}-x_{i-1})) \dd s \dd u
    \\
    R_{i,3}
    &=
    \frac{1}{2}h^3V_{(i-1)h}^3
    \int_0^1(1-2u)u\int_0^1\left(\phi^{(3)}(x_{i-1}+su(x_{i}-x_{i-1}))-\phi^{(3)}(X_{(i-1)h})\right) \dd s \dd u\,.
\end{align*}
We will show that of the $k=1,2,3$ we have
\begin{equation}\label{eq:logAR_ConditionsAux4}
    \frac{1}{h^6}\sup_{1\leq i\leq L}\E[R_{i,k}^2]
    \quad\xrightarrow{h\to0}\quad
    0\,.
\end{equation}
For the first two term $R_{i,1}$ and $R_{i,2}$ we use boundedness of $\phi^{(3)}$ together with respectively points (iii) and (iv) of \Cref{lem:VarianceForm}. For the last term $R_{i,3}$ we first use Cauchy's inequality (recall $\E[V_{(i-1)h}^{12}]$ is finite since $V_t$ is a stationary standard Gaussian process) together with Jensen's inequality, Fubini's theorem and the fact that both $\phi^{(3)}$ and $\phi^{(4)}$ are bounded.
\begin{multline*}
    \E\left[\left(\phi^{(3)}(x_{i-1}+su(x_{i}-x_{i-1}))-\phi^{(3)}(X_{(i-1)h})\right)^4\right]
    \\\leq~
    8\|\phi^{(4)}\|_\infty\|\phi^{(3)}\|_\infty^3 \left(
    \E\left[\left|x_{i-1}-X_{(i-1)h}\right|\right]
    +\E\left[\left|x_{i}-x_{i-1}\right|\right]
    \right)\,.
\end{multline*}
Then use \Cref{prop:StrongAccuracy} and \Cref{lem:VarianceForm} (iii) and (iv) to control this bound as follows:
\begin{align*}
    \E[|x_{i}-&x_{i-1}|]
   \leq
    \E\left[\left(x_{i}-x_{i-1}\right)^6\right]^{1/6}
    \\&=
    \E\left[\left(\left(x_{i}-x_{i-1}\right)^3-h^3v_{i-1}^3+h^3\left(v_{i-1}^3-V_{(i-1)h}^3\right)+h^3V_{(i-1)h}^3\right)^2\right]^{1/6}
    \\&\leq
    3h\left(\E\Big[\big(\left((x_{i}-x_{i-1})/h\right)^3-v_{i-1}^3\big)^2\Big]+\E\Big[\big(v_{i-1}^3-V_{(i-1)h}^3\big)^2\Big]+\E\left[V_{(i-1)h}^6\right]\right)^{1/6}
\end{align*}

Finally, paths of $(X_t,V_t)$ are almost surely continuous and therefore Riemann integrable. Hence almost surely
%\begin{multline}\label{eq:logAR_ConditionsAux5}
\[
    h\sum_{i=1}^{L}V_{(i-1)h}^3\phi^{(3)}(X_{(i-1)h})
    \quad\xrightarrow{h\to 0}\quad
    \int_0^TV^3_t\phi^{(3)}(X_t) \dd t\,.
\]
Since $hL\to T$, $\phi^{(3)}$ is bounded and $V_t$ is standard Gaussian for all $t$, the convergence also holds in $\mathbb{L}_2$-sense
\[
    \E\left[\left(h\sum_{i=1}^{L}V_{(i-1)h}^3\phi^{(3)}(X_{(i-1)h}) -    \int_0^TV^3_t\phi^{(3)}(X_t) \dd t\right)^2\right]
    \quad\xrightarrow{h\to 0}\quad 0\,.
\]
Combining this convergence with \eqref{eq:logAR_ConditionsAux3} and \eqref{eq:logAR_ConditionsAux4}, we obtain as $h\to0$
\[
\frac{1}{h^4}\E\bigg[\Big(\sum_{i=1}^{L}\big(\phi(x_{i})-\phi(x_{i-1})-\frac12(x_i-x_{i-1}) (\phi'(x_{i})+\phi'(x_{i-1}))\big)-\int_0^TV^3_t\phi^{(3)}(X_t) \dd t\Big)^2\bigg]
\to
0
\]
which, combined with \eqref{eq:logAR_ConditionsAux2}, yields \eqref{eq:logAR_ConditionsAux1}. Point (i) follows by integrating
\[
\frac{1}{h^4}\E\left[ \Delta^2_h(x_{0:L})\right]
~=~
\E\left[\left(\int_0^TS(X_t,V_t) \dd t\right)^2\right]-2\E\left[R_h\int_0^TS(X_t,V_t) \dd t\right]+\E\left[R_h^2\right]
\]
where the last two terms vanish as $h\to0$ (use Cauchy's inequality for the middle term).

Proof of point \Cref{prop:DeltaMALT}(ii) uses similar arguments. Recall that $V_t$ is a stationary Gaussian and that by \Cref{Assumption:ProductForm} there exists constants $M,D$ such that $\|\phi''\|_{\infty}\leq M$ and $|\phi'(x)|\leq M(|x|+D)$. Note that by \Cref{Assumption:ProductForm}, Jensen's inequality and Fubini's theorem
\begin{align*}
\E\left[\Big(\int_0^TS(X_t,V_t) \dd t\Big)^4\right]
&\leq
\int_0^T\E\left[S(X_t,V_t)^4\right] \dd t
\\&=
4\int_0^T\E\left[\frac{1}{12^4}V_t^{12}\phi^{(3)}(X_t)^4+\frac{1}{4^4}V_t^4\phi''(X_t)^4\phi'(X_t)^4\right] \dd t
\\&\leq
\frac{T}{3^4 4^3}\|\phi^{(3)}\|_\infty^4\E[V_t^{12}]+\frac{T}{4^3}M^4\E[V_t^8]^{1/2}\E\left[(M(|X_t|+D))^8\right]^{1/2}
\\&<\infty.
\end{align*}
This and \eqref{eq:logAR_ConditionsAux1} show
\begin{align*}
    \frac{1}{h^4}\E\left[ \Delta^2_h(x_{0:L})\mathds{1}_{\Delta_h(x_{0:L})>h}\right]
    &=
    \E\left[\Big(\int_0^TS(X_t,V_t) \dd t\Big)^2\mathds{1}_{\Delta_h(x_{0:L})>h}\right]+\E\left[R_h^2\mathds{1}_{\Delta_h(x_{0:L})>h}\right]
    \\&+
    2\E\left[R_h\left(\int_0^TS(X_t,V_t)\dd t\right)\mathds{1}_{\Delta_h(x_{0:L})>h}\right].
\end{align*}
The last two terms vanish since $\E[R_h^2]\to 0$. The first term vanishes by a combination of Cauchy and Markov inequalities together with part (i)
\begin{align*}
        \E\bigg[\Big(\int_0^TS(X_t,V_t) \dd t\Big)^2&\mathds{1}_{\Delta_h(x_{0:L})>h}\bigg]
    \leq
    \bigg(\E\bigg[\Big(\int_0^TS(X_t,V_t) \dd t\Big)^4\bigg]\PP\left( \Delta^2_h(x_{0:L})>h^2\right)\bigg)^{1/2}
    \\&~\leq
    h\,\E\left[\Big(\int_0^TS(X_t,V_t) \dd t\Big)^4\right]^{1/2}\left(\frac{1}{h^4}\E\left[ \Delta_h^2(x_{0:L})\right]\right)^{1/2}
    \xrightarrow{h\to0}~
    0\,.
\end{align*}

\subsection{Proof of \Cref{lem:VarianceForm}}\label{sec:proofTechnicalLemma}
A uniform control of the distance between the numerical Langevin trajectory and the Langevin diffusion is given by \Cref{prop:StrongAccuracy}, which extends trivially to functional values along the trajectories for all Lipschitz functions. Point (i) allows us to extend these findings for functions that are not globally Lipschitz but do not grow to rapidly. This is then done for specific functions required for the analysis of MALT in points (ii), (iii) and (iv).

The proof of (i) we use a "discrete Gr\"onwall's inequality". We will show that for $0\leq i\leq L-1$ and some constants $A,B>0$
\begin{equation}\label{eq:ProofLemmaTechnicalLemmaAux1}
    \max\left(\E[x_{i+1}^8],\E[v_{i+1}^8],B\right)
    ~\leq~
    (1+Ah)\times\max\left(\E[x_{i}^8],\E[v_{i}^8],B\right)\,.
\end{equation}
Iterating this inequality and using $h\leq T/L$ and the inequality $(1+t)\leq e^t$ gives us the following uniform bound, valid for all $0\leq i\leq L$:
\begin{align*}
\max\left(\E\left[x_{i}^8\right],\E\left[v_{i}^8\right],B\right)
~&\leq~
(1+Ah)^i\times\max\left(\E\left[x_{0}^8\right],\E\left[v_{0}^8\right],B\right)
\\&\leq~
(1+AT/L)^L\times\max\left(\E\left[x_{0}^8\right],\E\left[v_{0}^8\right],B\right)
\\&\leq~
e^{AT}\times\max\left(\E\left[x_{0}^8\right],\E\left[v_{0}^8\right],B\right)\,.
\end{align*}
This establishes (i) with $A_1=e^{AT}\max\left(\E\left[x_{0}^8\right],\E\left[v_{0}^8\right],B\right)$, which is finite by \Cref{Assumption:ProductForm}, as the Langevin trajectory is initiated in stationarity.

\Cref{Assumption:ProductForm} implies that $\phi'$ is $M$-Lipschitz and grows at most linearly: $|\phi'(x)|\leq M(|x|+D)$ for appropriate constants $M,D$, so that $(\phi'(x))^8\leq 8M^8 (|x|^8+D^8)$.

First, we deal with the bound on the eighth moment of the position. We use the Multinomial theorem to expand the expression of $x_{i+1}^8$ in the OBABO update \eqref{eq:OBABOupdate}. We then separate the zero-order term with respect to $h$ from the rest and use the H\"older's inequality to bound them. Note that the number of terms is fixed and finite and recall that we are assuming $h<1$. Using the fact that $\phi'$ grows at most linearly, we get for appropriate constants $A',B'>0$
\begin{align}\label{eq:VarianceFormAux1}
\begin{split}
     \E[x_{i+1}^8]
    &=
    \E\left[\left(x_{i}+he^{-\gamma h/2}v_{i}+h\sqrt{1-e^{-\gamma h}}\xi_i-\frac{h^2}{2}\phi'(x_{i})\right)^8\right]
    \\
    &\leq
    \E[x_{i}^8]+\sum_{\substack{j_1+j_2+j_3+j_4=8\\j_1<8}}h^{j_2+j_3+2j_4}\binom{8}{j_1,j_2,j_3,j_4}\E\left[|x_{i}|^{j_1}|v_{i}|^{j_2}|\xi_i|^{j_3}|\phi'(x_{i})|^{j_4}\right]
    \\
    &\leq
    \E[x_{i}^8]+(8!)h\sum_{\substack{j_1+j_2+j_3+j_4=8\\j_1<8}}\left(\E\left[x_{i}^8\right]^{j_1}\E\left[v_{i}^8\right]^{j_2}\E\left[\xi_i^8\right]^{j_3}\E\left[\phi'(x_{i})^8\right]^{j_4}\right)^{1/8}
    \\
    &\leq
    \E[x_{i}^8]+A'h\max\left(\E[x_{i}^8],\E[v_{i}^8],B'\right)\,.
\end{split}
\end{align}
We derive a similar identity for the update of the velocities where we use Multinomial theorem to expand the expression of $v_{i+1}^8$ in the OBABO update \eqref{eq:OBABOupdate}.
Note that since $\xi_i,\xi_{i+1}$ are standard Gaussian and independent of each other and of $v_{i}$, there exists a standard Gaussian $\xi_i'$ that is independent of $v_{i}$ and satisfies $e^{-\gamma h/2}\sqrt{1-e^{-\gamma h}}\xi_i+\sqrt{1-e^{-\gamma h}}\xi_{i+1}
=\sqrt{1-e^{-2\gamma h}}\xi'_i$. Also note that the inequality $1-e^{-t}\leq t$ implies that $\sqrt{1-e^{-2\gamma h}}<\sqrt{2 \gamma h}$. We separate the zero-order and half-order terms with respect to $h$ and bound the remaining terms using the Lipschitz property of $\phi'$ and \eqref{eq:VarianceFormAux1}. The half-order term vanishes since $\E[\xi'_i]=0$. For appropriate constants $A,A',B>0$ we get
\begin{align}\label{eq:VarianceFormAux2}
\begin{split}
    \E[v_{i+1}^8]
    ~&=~
    \E\left[\left(e^{-\gamma h}v_{i}-h\phi'(x_{i})-\frac{h}{2}\left(\phi'(x_{i+1})-\phi'(x_{i})\right)+\sqrt{1-e^{-2\gamma h}}\xi'_i\right)^8\right]\\
    &\leq~
    \E[v_{i}^8]~+~
    8e^{-7\gamma h}\sqrt{1-e^{-2\gamma h}}\E\left[v_{i}^7\right]\E\left[\xi'_i\right]\\
    &+~A'\sum_{\substack{j_1+j_2+j_3+j_4=8\\j_2+ j_3+\tfrac12j_4\geq 1}}h^{j_2+ j_3+\tfrac12j_4}\E\left[|v_{i}|^{j_1}| \phi'(x_{i})|^{j_2}\left|\phi'(x_{i+1})-\phi'(x_{i})\right|^{j_3}|\xi'_i|^{j_4}\right]\\
    &\leq~
    \E[v_{i}^8]~+~
    Ah\max\left(\E\left[x_{i}^8\right],\E\left[v_{i}^8\right],B\right)\,.
\end{split}
\end{align}
Together, \eqref{eq:VarianceFormAux1} and \eqref{eq:VarianceFormAux2} imply \eqref{eq:ProofLemmaTechnicalLemmaAux1} and establish (i).

We obtain (ii) from combining: (i), the Lipschitz property of $\phi'$ and Cauchy's inequality
\begin{align*}
    \E\left[\left(\phi'(x_{L})^2-\phi'(X_{hL})^2\right)^2\right]
    ~&=~
    \E\left[\left(\phi'(x_{L})-\phi'(X_{hL})\right)^2\left(\phi'(x_{L})+\phi'(X_{hL})\right)^2\right]
    \\&\leq~
    \E\left[\left|\phi'(x_{L})-\phi'(X_{hL})\right|\left(|\phi'(x_{L})|+|\phi'(X_{hL})|\right)^3\right]
    \\&\leq~
    M^4\times\E\left[\left|x_{L}-X_{hL}\right|\left(|x_{L}|+|X_{hL}|+2D\right)^3\right]
        \\&\leq~
    M^4\times\E\left[\left|x_{L}-X_{hL}\right|^2\right]^{\frac12}\times
    \E\left[\left(|x_{L}|+|X_{hL}|+2D\right)^6\right]^{\frac12}
     \\&\leq~A_2 h\,.
\end{align*}
The final bound follows for an appropriate constant by \Cref{prop:StrongAccuracy} and (i).

To show (iii)
consider the following bound for $a,b\in\R$
\begin{align*}
(a^3-b^3)^2
~&=~
(a-b)^2(a^2+ab+b^2)^2
\\&=~
(a-b)^2((a-b)^2+3(a-b)b+3b^2)^2
\\&\leq~
3(a-b)^6+27(a-b)^4b^2+27(a-b)^2b^4\,.
\end{align*}

Fix $1\leq i\leq L$.
Insert $a=h^{-1}(x_{i}-x_{i-1})$ and $b=v_{i-1}$ in the above bound together with H\"older's inequality gives us
\begin{multline*}
\E\left[\left(h^{-3}(x_{i}-x_{i-1})^3-v_{i-1}^3\right)^2\right]\,
\leq\, 27\,\E\left[\left(h^{-1}(x_{i}-x_{i-1})-v_{i-1}\right)^6\right]^{1/3}\E\left[v_{i-1}^6\right]^{2/3}
\\+~
27\,\E\left[\left(h^{-1}(x_{i}-x_{i-1})-v_{i-1}\right)^6\right]^{2/3}\E\left[v_{i-1}^6\right]^{1/3}
~+~3\,\E\left[\left(h^{-1}(x_{i}-x_{i-1})-v_{i-1}\right)^6\right]
\,.
\end{multline*}
Hence, it is sufficient to show that for an appropriate constant $A_3'>0$
\[
\E\left[\left(h^{-1}(x_{i}-x_{i-1})-v_{i-1}\right)^6\right]
~\leq~A_3'h^3\,.
\]
Using the linear growth of $\phi'$ and the inequality $1-e^{-t}\leq t$
\begin{align*}
\E\left[\left(h^{-1}(x_{i}-x_{i-1})-v_{i-1}\right)^6\right]
~&=~
\E\Big[\Big((e^{-\gamma h/2}-1)v_{i-1}+\sqrt{1-e^{-\gamma h}}\xi_0-\frac{h}{2}\phi'(x_{i-1})\Big)^6\Big]
\\&\leq~
6(\gamma h/2)^6\E\left[v_{i-1}^6\right]
~+~
6(\gamma h)^3\E\left[\xi_0^6\right]
+\frac{3h^6}{32}\left[\phi'(x_{i-1})^6\right]
\\&\leq~A'_3h^3\,.
\end{align*}

To prove (iv) consider the following bound for $a,b\in\R$
\begin{align*}
(a^3-b^3)^2
~&=~
(a-b)^2(a^2+ab+b^2)^2
\\&\leq~
(a-b)^2(|a|+|b|)^4
~\leq~
(a-b)^{2/3}(|a|+|b|)^{16/3}\,.
\end{align*}

Fix $1\leq i\leq L$.
Inserting $a=V_{(i-1)h}$ and $b=v_{i-1}$ in the above bound together with H\"older's inequality gives us
\[
\E\left[\left( V_{(i-1)h}^3-v_{i-1}^3\right)^2\right]
\leq
\E\left[\left( V_{(i-1)h}-v_{i-1}\right)^2\right]^{1/3}
\E\left[\left( |V_{(i-1)h}|+|v_{i-1}|\right)^8\right]^{2/3}\,.
\]
\Cref{prop:StrongAccuracy} implies that $\E\left[\left( V_{(i-1)h}-v_{i-1}\right)^2\right]\leq A'_4h^2$ for an appropriate constant. Point (i) together with the stationarity of process $V_t$ ensure the right factor is bounded.

\subsection{Proofs of miscellaneous results}
\begin{proof}[Proof of \Cref{prop:SingleCoordinateDisplacement}]
(i) This is a direct consequence of weak convergence in \Cref{thm:MALT_CLT} applied to the Lipschitz function $t\to1\wedge e^t$. We use Proposition~2.4 in \cite{Roberts::1997} to evaluate $\E[1\wedge e^Y]$ for a normal random variable $Y$ on $\R$.

(ii) By independence of the coordinates and the product structure \eqref{eq:ProductFormDelta}
\begin{align*}
\E\Big[\big(f(\bX^{n+1}(1))- &f(\bX^n(1))\big)^2\Big]
=~
\E\left[\left(f(\bx_L(1))- f(\bx_0(1))\right)^2 1\wedge e^{-\sum_{j=1}^d\Delta_{h,j}}\right]
\\&=~\E\left[\left(f(\bx_L(1))- f(\bx_0(1))\right)^2 \left(1\wedge e^{-\sum_{j=1}^d\Delta_{h,j}}-1\wedge e^{-\sum_{j=2}^d\Delta_{h,j}}\right)\right]\\
&+~\E\left[\left(f(\bx_L(1))- f(\bx_0(1))\right)^2\right]\E\left[ 1\wedge e^{-\sum_{j=2}^d\Delta_{h,j}}\right]
\end{align*}
We first show that the second term vanishes with increasing $d$. Using Cauchy's inequality and the fact that $t\mapsto1\wedge e^t$ is $1$-Lipschitz we argue
\begin{align*}
    \E\Big[\big(f(\bx_L(1))- &f(\bx_0(1))\big)^2 \Big(1\wedge e^{-\sum_{j=1}^d\Delta_{h,j}}-1\wedge e^{-\sum_{j=2}^d\Delta_{h,j}}\Big)\Big]
    \\&\leq~
     \E\left[\left(f(\bx_L(1))- f(\bx_0(1))\right)^4\right]^{1/2}\E\left[\Delta^2_{h,j}\right]^{1/2}
     \\&\leq~
     C^2\E\left[\left(\bx_L(1)- \bx_0(1)\right)^8\right]^{1/4}
     \E\left[\left(|\bx_L(1)|+ |\bx_0(1)|\right)^8\right]^{1/4}
     \E\left[\Delta^2_{h,j}\right]^{1/2}.
\end{align*}
The first two factors are finite by \Cref{lem:VarianceForm}(i) and the third converges to zero by \Cref{prop:DeltaMALT}.

Hence, as $\E\left[ 1\wedge e^{-\sum_{j=2}^d\Delta_{h,j}}\right]\to a(\ell)$ by (i), we only need to show 
\[
\E\left[\left(f(\bx_L(1))- f(\bx_0(1))\right)^2\right]
\quad\xrightarrow{d\to\infty}\quad
\E\left[\left(f(X_T)-f(X_0)\right)^2\right]\,.
\]
Note that for every dimension $d$ there exists a Langevin diffusion $\bZ_t$ linked to the numerical Langevin trajectory $\bz_{0:L}$ via \eqref{eq_synchronization_langevin_obabo}. This ensures $\bx_{0}=\bX_0$ and implies that
\begin{align*}
\E\left[\left(f(\bx_{L}(1))- f(\bx_{0}(1))\right)^2\right]
=&~
\E\left[\left(f(\bx_{L}(1))-f(\bX_T(1))+f(\bX_T(1)) -f(\bX_0(1))\right)^2\right]
\\=&~2\,\E\left[\left(f(X_T) -f(X_0)\right)\left(f(\bx_{L}(1))-f(\bX_T(1))\right)\right]
\\+&~\E\left[\left(f(X_T) -f(X_0)\right)^2\right]+\E\left[\left(f(\bx_{L}(1))-f(\bX_T(1))\right)^2\right]\,.
\end{align*}
We will show that the last term vanishes. By Cauchy's inequality that means, that the mixed term vanishes as well. By another Cauchy inequality and the bound $|A-B|\leq |A|+|B|$
\begin{align*}
\E\left[\left(f(\bx_{L}(1))-f(\bX_T(1))\right)^2\right]
&\leq
C^2\E\left[\left(\bx_{L}(1)-\bX_T(1)\right)^2
\left(\left|\bx_{L}(1)\right|+\left|\bX_T(1)\right|\right)^2\right]
\\&\leq
C^2\E\left[\left|\bx_{L}(1)-\bX_T(1)\right|
\left(\left|\bx_{L}(1)\right|+\left|\bX_T(1)\right|\right)^3\right]
\\&\leq
C^2\Big(\E\big[\left(\bx_{L}(1)-\bX_T(1)\right)^2\big]\E\big[
\left(\left|\bx_{L}(1)\right|+\left|\bX_T(1)\right|\right)^6\big]\Big)^{1/2}
\end{align*}
The first factor vanishes by \Cref{prop:StrongAccuracy} and the second is bounded by \Cref{lem:VarianceForm}.

(iii)
Again we synchronise the Langevin diffusion with the numerical Langevin trajectory via \eqref{eq_synchronization_langevin_obabo}, hence $\bX^n(1)=\bx_0(1)=X_0$. We will prove that
\[
\E\left[f(\bX^{n+1}(1))\right]
\quad\xrightarrow{d\to\infty}\quad
\E\left[f(X_T')\right]
\]
for an arbitrary bounded Lipschitz function $f$. By Portmanteau theorem this implies the result.

Note that since the function $t\mapsto 1\wedge e^t$ is $1$-Lipschitz, \Cref{prop:DeltaMALT} guarantees
\begin{multline*}
\PP\left(U\text{ is between } 1\wedge e^{-\sum_{j=1}^d\Delta_{h,j}}\text{ and }1\wedge e^{-\sum_{j=2}^d\Delta_{h,j}}\right)
\\\leq
\E\left[\left|1\wedge e^{-\sum_{j=1}^d\Delta_{h,j}}-1\wedge e^{-\sum_{j=2}^d\Delta_{h,j}}\right|\right]
\leq
\E\left[\left|\Delta_h\right|\right]
\quad\xrightarrow{d\to\infty}\quad
0\,.
\end{multline*}
Since $f$ is bounded this implies that
\[
\left|\E\left[f(\bx_L(1))\mathds{1}_{U\leq 1\wedge e^{-\sum_{j=1}^d\Delta_{h,j}}}\right]-\E\left[f(\bx_L(1))\mathds{1}_{U\leq 1\wedge e^{-\sum_{j=2}^d\Delta_{h,j}}}\right]\right|
\quad\xrightarrow{d\to\infty}\quad
0
\]
and
\[
\left|\E\left[f(\bx_0(1))\mathds{1}_{U> 1\wedge e^{-\sum_{j=1}^d\Delta_{h,j}}}\right]-\E\left[f(\bx_0(1))\mathds{1}_{U> 1\wedge e^{-\sum_{j=2}^d\Delta_{h,j}}}\right]\right|
\quad\xrightarrow{d\to\infty}\quad0\,.
\]
Finally, this gives us
\begin{align*}
    &\lim_{d\to\infty}\E\left[f(\bX^{n+1}(1))\right]
    \\&=
    \lim_{d\to\infty}\left(\E\left[f(\bx_L(1))\mathds{1}_{U\leq 1\wedge e^{-\sum_{j=1}^d\Delta_{h,j}}}\right]
    +\E\left[f(\bx_0(1))\mathds{1}_{U> 1\wedge e^{-\sum_{j=1}^d\Delta_{h,j}}}\right]\right)
    \\&=
    \lim_{d\to\infty}\left(\E\left[f(\bx_L(1))\mathds{1}_{U\leq 1\wedge e^{-\sum_{j=2}^d\Delta_{h,j}}}\right]
    +\E\left[f(\bx_0(1))\mathds{1}_{U> 1\wedge e^{-\sum_{j=2}^d\Delta_{h,j}}}\right]\right)
    \\&=
    \lim_{d\to\infty}\left(\E\left[f(\bx_L(1))\right]\PP\left[U\leq 1\wedge e^{-\sum_{j=2}^d\Delta_{h,j}}\right]
    +\E\left[f(\bx_0(1))\right]\PP\left[U> 1\wedge e^{-\sum_{j=2}^d\Delta_{h,j}}\right]\right)
    \\&=
    \E\left[f(X_T)\right]a(\ell)+\E\left[f(X_0)\right](1-a(\ell))
    ~=~
     \E\left[f(X_T')\right]
    \,.
\end{align*}
The first equality holds by construction of the accept-reject test, the third by the product form assumption and the fourth by \Cref{prop:StrongAccuracy} and point (i).
\end{proof}

\begin{proof}[Proof of \Cref{prop:ScalingOptimality}]
The proof relies on a similar argument as in the proof of \cite[Theorem~3]{vogrinc2022optimal}. We first establish that
\begin{align}\label{eq:ScalingOptimalityAux1}
\begin{split}
    \E\left[\left(f(\bX^{n+1}(1))-f(\bX^{n}(1))\right)^2\right]
~&=~
\E\left[\left(f(\bx_L(1))-f(\bx_0(1))\right)^21\wedge e^{-\sum_{j=1}^d\Delta_{h,j}}\right]\\&\leq~
2\E\left[\left(f(\bx_L(1))-f(\bx_0(1))\right)^2\right]\E\left[1\wedge e^{-\sum_{j=2}^d\Delta_{h,j}}\right]\,.
\end{split}
\end{align}
The statement without the factor of two is apparent if $\Delta_{h,1}$ is positive. If it is negative we use \Cref{prop:log-AR_properties} and it is crucial that the expression $(f(\bx_L(1))-f(\bx_0(1))^2$ is symmetric with respect to the start $\bx_0(1)$ and end $\bx_L(1)$ point of the trajectory:
\begin{align*}
    &\E\left[\left(f(\bx_L(1))-f(\bx_0(1))\right)^21\wedge e^{-\sum_{j=1}^d\Delta_{h,j}}\mathds{1}_{(-\infty,0]}(\Delta_{h,1})\right]
    \\&\quad\leq~
    \E\left[\left(f(\bx_L(1))-f(\bx_0(1))\right)^21\wedge e^{-\sum_{j=2}^d\Delta_{h,j}}e^{-\Delta_{h,1}}\mathds{1}_{[0,\infty)}(-\Delta_{h,1})\right]
    \\&\quad=~
    \E\left[\left(f(\bx_0(1))-f(\bx_L(1))\right)^21\wedge e^{-\sum_{j=2}^d\Delta_{h,j}}\mathds{1}_{[0,\infty)}(\Delta_{h,1})\right]
    \\&\quad\leq~
    \E\left[\left(f(\bx_L(1))-f(\bx_0(1))\right)^21\wedge e^{-\sum_{j=2}^d\Delta_{h,j}}\right]\,.
\end{align*}
As $h\to0$ we have $\E[\left(f(\bx_L(1))-f(\bx_0(1))\right)^2]\to\Upsilon_f$; see proof of \Cref{prop:SingleCoordinateDisplacement}(ii). Moreover, if $d^{1/4}h\to0$, then $d^{1/4}h\,\E[1\wedge \exp\{-\sum_{j=2}^d\Delta_{h,j}\}]\to 0$ as $d$ increases, since the acceptance rate is bounded by one. Assume now $d^{1/4}h\to\infty$. By \cite[Theorem~7]{vogrinc2021counterexamples}, \Cref{prop:DeltaMALT} implies that
\[
\frac{\E\left[\Delta_h\right]}{\E\left[\Delta_h^2\right]}
~\xrightarrow{h\to0}~
\frac{1}{2}\,.
\]
We split the probability space with respect to the event
$\mathcal{A}_d=\left\{\sum_{j=2}^d\Delta_{h,j}\geq \tfrac{d-1}{2}\E[\Delta_h]\right\}$.
Using the convergence above and \Cref{prop:DeltaMALT} with $dh^4\to\infty$ we obtain:
\begin{multline*}
    \limsup_{d\to\infty}d^{1/4}h\E\left[1\wedge e^{-\sum_{j=2}^d\Delta_{h,j}}\mathds{1}_{\mathcal{A}_d}\right]
    ~\leq~
    \limsup_{d\to\infty}d^{1/4}h e^{-\frac{d-1}{2}\E[\Delta_h]}
    \\\leq~
    \limsup_{d\to\infty}d^{1/4}h e^{-\frac{d}{5}\E[\Delta^2_h]}
    =~
    \limsup_{d\to\infty}d^{1/4}h e^{-\frac{dh^4}{5}h^{-4}\E[\Delta^2_h]}
    ~\leq~
    \limsup_{d\to\infty}d^{1/4}h e^{-dh^4\frac{\Sigma}{6}}
    ~=~0\,.
\end{multline*}
On the complement $\mathcal{A}_d^c$ we use Chebyshev's inequality in addition 
\begin{align*}
        \limsup_{d\to\infty}&\,d^{1/4}h\,\E\left[1\wedge e^{-\sum_{j=2}^d\Delta_{h,j}}\mathds{1}_{\mathcal{A}_d^c}\right]
    \leq
    \limsup_{d\to\infty}\,d^{1/4}h\,\PP(\mathcal{A}_d^c)
    \\&=
    \limsup_{d\to\infty}\,d^{1/4}h\, \PP\bigg(\sum_{j=1}^d\left(\E[\Delta_{h,j}]-\Delta_{h,j}\right)\geq \frac{d-1}{2}\E[\Delta_h]\bigg)
    \\&\leq
    \limsup_{d\to\infty}\,d^{1/4}h\,\frac{4d\text{Var}[\Delta_h]}{d^2\E[\Delta_h]^2}
    \leq
    \limsup_{d\to\infty}
    \,d^{1/4}h\,\frac{17}{d\E[\Delta_h^2]}
    \leq
    \limsup_{d\to\infty}
    \frac{18}{\Sigma}\frac{1}{d^{3/4}h^3}
    =
    0\,.
\end{align*}
The last two conclusions together with \eqref{eq:ScalingOptimalityAux1} show the sub-optimality of scaling of time-step $h$ different to $d^{-1/4}$. We conclude that $d^{-1/4}$ scaling is optimal by noting that the claimed non-zero limit is achieved in the case when $d^{1/4}h\to\ell$ by \Cref{prop:SingleCoordinateDisplacement}(ii). 

Finally, to optimize over the choice of $\ell$ note that the function $\ell\mapsto 2\ell \Psi\left(-\tfrac12\ell^2\sqrt{\Sigma}\right)$ is smooth and converges to zero both as $\ell\to0$ and $\ell\to\infty$. Since it is positive, its maximum must be attained at a stationary point. Substituting $s=\ell^2\sqrt{\Sigma}/2$ we can find stationary points of $s\mapsto 2^{3/2} \Sigma^{-1/4}\sqrt{s}\Psi(-s)$ which is equivalent to finding solutions of $\frac{1}{2}=s\frac{\psi(-s)}{\Psi(-s)}$, where $\psi$ is the density of the standard Gaussian. There exists a unique solution $s^*$ since the function $s\mapsto s\frac{\psi(-s)}{\Psi(-s)}$ is strictly increasing. Hence, ${\rm eff}(\ell)$ attains its greatest value at a specific value $\ell^*=\sqrt{2s^*}\Sigma^{-1/4}$ that corresponds to an average acceptance rate $2\Psi(-s^*)$, which turns out to numerically equal $0.651$ to three decimal places. It also implies that
\[
{\rm eff}(\ell^*)
~=~
 2^{3/2}\sqrt{s^*}\Psi(-s^*)\times \Sigma^{-1/4}
~\approx~
0.619219\times \Sigma^{-1/4}\,.
\]
\end{proof}

\section{Proofs of Section 4}

\subsection{Proof of \Cref{thm:randomized_langevin_contraction}}\label{proof_contraction}
The proof relies on establishing a contraction rate for a twisted Wasserstein distance between two copies of Randomized HMC. The coupling is synchronized with the same homogeneous Poisson process $(\bN_t)_{t\ge0}$ with rate $\lambda$, and the same independent standard Gaussian refreshments $(\bxi_k)_{k\in \N}$.
Let us denote for any $t\ge 0$ the two processes $\bZ_t\triangleq(\bX_t,\bV_t)$ and $\bZ_t'\triangleq(\bX_t',\bV_t')$, defined through the following system of SDEs:
\begin{align}\label{eq_sdes_coupling}
\begin{split}
        \dd \bV_t&=-\nabla\Phi(\bX_t)\dd t+\left(\sqrt{1-\alpha^2}\bxi_{\bN_t}-(1-\alpha)\bV_t\right)\dd \bN_t,\qquad
    \dd \bX_t=\bV_t\dd t,\\
    \dd \bV_t'&=-\nabla\Phi(\bX_t')\dd t+\left(\sqrt{1-\alpha^2}\bxi_{\bN_t}-(1-\alpha)\bV_t'\right)\dd \bN_t,\qquad
    \dd \bX_t'=\bV_t'\dd t.
\end{split}
\end{align}
Let $\nu$ and $\nu^{\prime}$ be any two probability measures on $\R^{2d}$. We initialize the two processes form an arbitrary coupling $(\bZ_0,\bZ_0')\sim\zeta$ such that $\bZ_0\sim\nu$, $\bZ_0'\sim\nu^{\prime}$. Both copies are Randomized HMC processes, yet we remark that the joint process $(\bZ_t,\bZ_t')$ is also a Markov process. We denote $\mathcal{L}_{\lambda,\alpha}^{\rm Couple}$ its joint infinitesimal generator, characterized by the system of SDEs \eqref{eq_sdes_coupling} for $j\in\{1,2\}$.

The main arguments of this proof rely on a uniform bound on the generator $\mathcal{L}_{\lambda,\alpha}^{\rm Couple}$ applied to a twisted distance between the coupled processes, established in \Cref{lem_generator_bound}. This bound is proven in \Cref{proof_generator_bound}. The twist of the metric is determined by three real numbers $a,b,c$ such that
\begin{equation}\label{eq_A_pos_def}
    \bfA\triangleq\begin{pmatrix}a\bfI_d&b\bfI_d\\
    b\bfI_d& c\bfI_d\end{pmatrix}\succ\bfzero_{d\times d}.
\end{equation}
For any such positive definite matrix $\bfA$ and any vector $\bz\in\R^{2d}$, we denote $|\bz|_{\bfA}\triangleq(\bz^\top \bfA\bz)^{1/2}$ its $\bfA$-norm. For any $q\ge1$, we define the $(q,\bfA)$-Wasserstein metric between any two probability measures $\nu$ and $\nu'$ on $\R^{2d}$ by
$$ W_{q,\bfA}(\nu,\nu')\triangleq \inf \{\E[|\bX-\bX'|_{\bfA}^q]^{1/q}, \bX\sim\nu, \bX'\sim\nu'\}.$$
\begin{lemma}\label{lem_generator_bound}
The values $a=2(M+m)/(1+\alpha)$, $b=\sqrt{M+m}$ and $c=2$ are such that the matrix $\bfA$ defined in \eqref{eq_A_pos_def} is positive definite. Moreover almost surely for $t\ge0$, we have
\begin{equation}\label{eq_gen_bound}
    \mathcal{L}_{\lambda,\alpha}^{\rm Couple}|\bZ_t-\bZ_t'|_{\bfA}^2\le-2r |\bZ_t-\bZ_t'|_{\bfA}^2,\qquad r=\frac{(1+\alpha)m}{2\sqrt{M+m}}.
\end{equation}
\end{lemma}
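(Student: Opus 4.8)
The plan is to split the coupling generator $\mathcal{L}_{\lambda,\alpha}^{\rm Couple}$ into its deterministic Hamiltonian transport part $\mathcal{L}^{\rm H}$ and its Poisson jump part $\lambda\mathcal{R}_\alpha^{\rm PP}$, apply each to the twisted squared distance $|\bZ_t-\bZ_t'|_\bfA^2$, and reduce the outcome to a quadratic expression in the differences $\bx\triangleq\bX_t-\bX_t'$ and $\bv\triangleq\bV_t-\bV_t'$; the claimed contraction then becomes a quadratic-form inequality that I would certify using strong convexity. I would dispatch the positive-definiteness claim first: since $\bfA$ is built from scalar blocks, $\bfA\succ\bfzero_{d\times d}$ iff $\left(\begin{smallmatrix}a&b\\b&c\end{smallmatrix}\right)\succ0$, i.e. $a>0$ and $ac-b^2>0$. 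For the stated values this is immediate, as $a>0$ and $ac-b^2=(M+m)\big(\tfrac{4}{1+\alpha}-1\big)=(M+m)\tfrac{3-\alpha}{1+\alpha}>0$ for every $\alpha\in[0,1)$.

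Next I would carry out the generator computation. Writing $\bg\triangleq\nabla\Phi(\bX_t)-\nabla\Phi(\bX_t')$, the synchronous system \eqref{eq_sdes_coupling} gives $\dot{\bx}=\bv$ and $\dot{\bv}=-\bg$ off the jump times, so $\mathcal{L}^{\rm H}|\bZ_t-\bZ_t'|_\bfA^2=2a\,\bx^\top\bv+2b|\bv|^2-2b\,\bx^\top\bg-2c\,\bv^\top\bg$. At a jump both velocities are hit by the same Gaussian draw, so the velocity difference contracts deterministically as $\bv\mapsto\alpha\bv$ while $\bx$ is unchanged; hence $\lambda\mathcal{R}_\alpha^{\rm PP}$ contributes $\lambda\big(2(\alpha-1)b\,\bx^\top\bv+(\alpha^2-1)c\,|\bv|^2\big)$. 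The key algebraic observation I would exploit is that the prescribed $a=2(M+m)/(1+\alpha)$ and $\lambda=2\sqrt{M+m}/(1-\alpha^2)$ are exactly tuned so that the $\bx^\top\bv$ contributions cancel, collapsing the total generator to $-2\sqrt{M+m}\,|\bv|^2-2b\,\bx^\top\bg-2c\,\bv^\top\bg$.

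To bring in \Cref{assumption:strong_cvx} I would use the integral-Hessian representation $\bg=\bfH\bx$ with $\bfH\triangleq\int_0^1\nabla^2\Phi\big(\bX_t'+s\bx\big)\,\dd s$, which is symmetric and satisfies $m\bfI_d\preceq\bfH\preceq M\bfI_d$; this supplies $\bx^\top\bg\ge m|\bx|^2$ and the co-coercivity bound $|\bg|^2\le M\,\bx^\top\bg$. The natural manipulation is to complete the square in $\bv$, folding $-2\sqrt{M+m}|\bv|^2$ and the cross term $-2c\,\bv^\top\bg$ into a non-positive perfect square plus a residual proportional to $|\bg|^2$, then to trade $|\bg|^2$ for $\bx^\top\bg$ by co-coercivity and $\bx^\top\bg$ for $|\bx|^2$ by strong convexity. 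It remains to show the resulting expression is dominated by $-2r|\bZ_t-\bZ_t'|_\bfA^2$ with $r=(1+\alpha)m/(2\sqrt{M+m})$, uniformly over admissible $\bfH$.

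I expect this last certification to be the main obstacle. Because $\bfH$ is symmetric one can diagonalize and reduce the inequality to a one-parameter family of two-dimensional quadratic forms indexed by the eigenvalues $h\in[m,M]$; the delicate point is that the generator couples $\bv$ to $\bx$ only through $\bv^\top\bfH\bx$, whereas the target norm carries the bare cross term $\bx^\top\bv$, so a naive completion of squares that discards the velocity square controls only $|\bx|^2$ and forfeits the $|\bv|^2$ and cross budget needed to reconstruct the full twisted norm. Obtaining a sharp statement therefore seems to require isolating the precise necessary-and-sufficient relation among $a,b,c,\lambda$ and $r$ under which these two-dimensional forms stay non-negative across the whole spectral range, and then verifying that the proposed constants realise the largest admissible rate; this is where I anticipate the real work, and where the choice $\lambda=2\sqrt{M+m}/(1-\alpha^2)$, coinciding with the interpolation curve of \Cref{prop:cv_generator} at $\gamma=\sqrt{M+m}$, should emerge as the optimiser.
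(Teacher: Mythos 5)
Your setup is sound and coincides with the paper's: the positive-definiteness check, the split of $\mathcal{L}_{\lambda,\alpha}^{\rm Couple}$ into a transport part and a jump part acting on $a|\bx|^2+2b\,\bx^\top\bv+c|\bv|^2$ for $(\bx,\bv)=(\bX_t-\bX_t',\bV_t-\bV_t')$, the representation $\bg=\bfH\bx$ with $m\bfI_d\preceq\bfH\preceq M\bfI_d$, and the reduction, after diagonalizing $\bfH$, to a family of $2\times2$ positive-semidefiniteness conditions indexed by the eigenvalues $\sigma\in[m,M]$. (Your observation that $a=b(1-\alpha)\lambda$ kills the bare $\bx^\top\bv$ term in the generator is a correct repackaging of the paper's off-diagonal block.) The gap is that you stop exactly where the lemma begins: its entire quantitative content is the assertion that the \emph{specific} constants $a$, $b$, $c$, $\lambda$, $r$ make those $2\times2$ forms nonnegative uniformly over $\sigma\in[m,M]$, and you defer this as ``where I anticipate the real work.'' Announcing the correct target inequality and declining to verify it for the stated constants is not a proof.

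The deferred step is moreover not routine, because it does not close as stated. With your collapsed generator the per-eigenvalue condition is
\begin{equation*}
\begin{pmatrix}2b\sigma-2ra & 2\sigma-2rb\\ 2\sigma-2rb & 2\sqrt{M+m}-4r\end{pmatrix}\succeq 0,\qquad \sigma\in[m,M].
\end{equation*}
At $\sigma=m$ the $(1,1)$ entry vanishes, since $2ra=2bm$, while the off-diagonal entry equals $2m-(1+\alpha)m=(1-\alpha)m\neq0$ for $\alpha<1$, so the matrix has negative determinant and is indefinite. Concretely, for $\Phi(x)=x^2/2$ with $(m,M)=(1,3)$ and $\alpha=0$ one gets $a=8$, $b=c=2$, $\lambda=4$, $r=1/4$, and the generator equals $-4\tilde x^2-4\tilde x\tilde v-4\tilde v^2$ while $-2r|\cdot|_{\bfA}^2=-4\tilde x^2-2\tilde x\tilde v-\tilde v^2$; the claimed inequality fails at $(\tilde x,\tilde v)=(-2,1)$. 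So had you carried out the certification you correctly identified as essential, you would have found that either the constants must be re-derived (the paper's own derivation obtains $a$ and $b$ from a quantity $s$ for which it gives two mutually incompatible expressions) or the argument must avoid saturating the $(1,1)$ entry at $\sigma=m$. As written, your proposal neither performs this verification nor detects that it breaks down.
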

 An application of Gr\"onwall's inequality to \Cref{lem_generator_bound} yields contraction for the twisted $\mathbb{L}_2$-norm. We obtain for any $t\ge 0$
  \begin{equation}\label{eq_l2_contraction}
\E[|\bZ_t-\bZ_t'|_{\bfA}^2]\le e^{-2r t}\E[|\bZ_0-\bZ_0'|_{\bfA}^2].
\end{equation}
Taking square roots and considering the infimum over the couplings $(\bZ_0,\bZ_0')\sim\zeta$ such that $\bZ_0\sim\nu$, $\bZ_0'\sim\nu'$ yields 
\begin{equation}\label{eq_wass_contraction}
W_{2,\bfA}(\nu \bfP^t,\nu^{\prime} \bfP^t)\le e^{-rt}W_{2,\bfA}(\nu,\nu^{\prime})
\end{equation}
Establishing the $2$-Wasserstein convergence for the Euclidean distance follows from unfolding the twist of the metric. We consider the inequality \eqref{eq_l2_contraction} for a particular coupling $(\bZ_0,\bZ_0')\sim\zeta$ with marginals $\nu=\nu_{\bx}\otimes\mathcal{N}_d(\bfzero_d,\bfI_d)$ and $\nu'=\Pi_*=\Pi\otimes\mathcal{N}_d(\bfzero_d,\bfI_d)$ defined as follows. We start from $\bV_0=\bV_0'\sim\mathcal{N}_d(\bfzero_d,\bfI_d)$, independently drawn from an arbitrary coupling of the positions $(\bX_0,\bX_0')\sim\zeta_{\bx}$ such that $\bX_0\sim\nu_{\bx}^{}$ and $\bX_0'\sim\Pi$. 
We remark that for any $\bx,\bv\in\R^d$ we have
\begin{equation*}
        c(a|\bx|^2+2b\bx^\top \bv + c |\bv|^2)=(ac-b^2)|\bx|^2+|b\bx+cv|^2\ge (ac-b^2)|\bx|^2.
\end{equation*}
Combining this inequality with \eqref{eq_l2_contraction} yields
\begin{equation*}
\E\left[\left|\bX_t-\bX_t'\right|^2\right]\le \frac{c}{ac-b^2}\, \E\left[\left|\bZ_t-\bZ_t'\right|_{\bfA}^2\right]\le \frac{ce^{-2rt}}{ac-b^2}\, \E\left[\left|\bZ_0-\bZ_0'\right|_{\bfA}^2\right].
\end{equation*}
Using that $\bV_0-\bV_0'=0$ the last term can be substituted by
\begin{equation*}
\E\left[\left|\bZ_0-\bZ_0'\right|_{\bfA}^2\right]=a\, \E\left[\left|\bX_0-\bX_0'\right|^2\right].
\end{equation*}
Taking square roots and considering the infimum over the couplings $(\bX_0,\bX_0')\sim\zeta_{\bx}$ such that $\bX_0\sim\nu_{\bx}$ and $\bX_0'\sim\Pi$ yields
\begin{equation*}
W_2((\nu \bfP^t)_{\bx},\Pi)\le Ce^{-r t} W_2(\nu_{\bx},\Pi),\qquad C\triangleq\sqrt{\frac{ac}{ac-b^2}}.
\end{equation*}
In particular, $(\Pi_*\bfP^t)_{\bx}=\Pi$ for any $t>0$ because $\Pi_*$ is invariant. A direct computation of the constant $C$ yields
\begin{equation}\label{eq_C}
    C=\sqrt{4/(3-\alpha)}\le \sqrt{2}.
\end{equation}

We now extend the convergence to the $\mathbb{L}_2(\Pi_*)$-norm using a similar sketch of proof as for \cite[Theorem 3]{deligiannidis2021randomized}. We introduce $(\bfP^t)^*$ and $(\mathcal{L}_{\lambda,\alpha}^{\rm Couple})^*$ the respective adjoints of $\bfP^t$ and $\mathcal{L}_{\lambda,\alpha}^{\rm Couple}$. These adjoints are characterized by the distribution of Randomized HMC ran backwards in time. In a weak sense, the backward dynamics are similar to the forward dynamics up to a flip of the drift term.
%The time reversibility of Hamiltonian dynamics yields tractable SDEs for the backward process, which are similar to the forward process up to a flip of the velocity. We consider the same synchronous coupling defined through the system of SDEs \eqref{eq_sdes_coupling} for $j\in\{1,2\}$ except $\bV_t^{(j)}$ is replaced by $-\bV_t^{(j)}$. We note $(\mathcal{L}_{\lambda,\alpha}^{\rm Couple})^*$ the generator of this coupling and introduce an analogue twist of the metric determined by 
In \Cref{lem_adjointgenerator_bound} we present a uniform bound on the adjoint generator, similar to the one obtained in \Cref{lem_generator_bound} for the forward process. The twist of the metric differs to take into account the change of sign in the velocities. It is determined by 
\begin{equation}\label{eq_Aprime_pos_def}
    \bfA^{\prime}\triangleq\begin{pmatrix}a\bfI_d&-b\bfI_d\\
    -b\bfI_d& c\bfI_d\end{pmatrix}\succ\bfzero_{d\times d}.
\end{equation}
\begin{lemma}\label{lem_adjointgenerator_bound}
The values $a=2(M+m)/(1+\alpha)$, $b=\sqrt{M+m}$ and $c=2$ are such that the matrix $\bfA'$ defined in \eqref{eq_Aprime_pos_def} is positive definite. Moreover almost surely for $t\ge0$, we have
\begin{equation}\label{eq_adjointgen_bound}
    (\mathcal{L}_{\lambda,\alpha}^{\rm Couple})^*|\bZ_t-\bZ_t'|_{\bfA'}^2\le-2r |\bZ_t-\bZ_t'|_{\bfA'}^2,\qquad r=\frac{(1+\alpha)m}{2\sqrt{M+m}}.
\end{equation}
\end{lemma}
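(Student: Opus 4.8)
The plan is to derive \Cref{lem_adjointgenerator_bound} directly from \Cref{lem_generator_bound} by exploiting the velocity-flip symmetry of $\Pi_*$, so that the contraction is inherited with essentially no new computation. Write $\varphi(\bx,\bv)\triangleq(\bx,-\bv)$ for the velocity flip and $\widehat\varphi\triangleq\varphi\times\varphi$ for the joint flip on $\R^{2d}\times\R^{2d}$.

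Positive definiteness of $\bfA'$ is immediate: setting $\mathbf{J}\triangleq\mathrm{diag}(\bfI_d,-\bfI_d)$, one has $\bfA'=\mathbf{J}\bfA\mathbf{J}$, and since $\mathbf{J}$ is orthogonal and symmetric, $\bfA'$ is orthogonally similar to $\bfA$. Thus $\bfA'$ and $\bfA$ share the same spectrum, and the positive definiteness of $\bfA$ established in \Cref{lem_generator_bound} transfers verbatim.

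The core step is to identify the adjoint coupled generator. First I would record the single-chain facts that $\mathcal{L}^{\rm H}$ is anti-self-adjoint on $\mathbb{L}_2(\Pi_*)$ with $\varphi\mathcal{L}^{\rm H}\varphi=-\mathcal{L}^{\rm H}$, while $\mathcal{R}_\alpha^{\rm PP}$ is self-adjoint and flip-invariant, $\varphi\mathcal{R}_\alpha^{\rm PP}\varphi=\mathcal{R}_\alpha^{\rm PP}$ (the latter because the standard Gaussian is symmetric). Lifting these to the synchronous coupling — the shared refreshment $\bxi$ is sent to $-\bxi$, which leaves the coupling law unchanged — yields the conjugation identity $(\mathcal{L}_{\lambda,\alpha}^{\rm Couple})^*=\widehat\varphi\,\mathcal{L}_{\lambda,\alpha}^{\rm Couple}\,\widehat\varphi$; equivalently, the backward coupled dynamics are the forward coupled dynamics with both velocities flipped, matching the ``flip of the drift'' heuristic. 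A one-line computation then shows the two twists are swapped by the flip,
\[
|\widehat\varphi\bz-\widehat\varphi\bz'|_{\bfA}^2=|\bz-\bz'|_{\bfA'}^2,\qquad |\widehat\varphi\bz-\widehat\varphi\bz'|_{\bfA'}^2=|\bz-\bz'|_{\bfA}^2,
\]
since flipping the velocities only reverses the sign of the cross term $2b(\bx-\bx')^\top(\bv-\bv')$. Hence, setting $V(\bz,\bz')\triangleq|\bz-\bz'|_{\bfA'}^2$ so that $V\circ\widehat\varphi=|\,\cdot\,|_{\bfA}^2$, the conjugation identity together with the pointwise generator estimate of \Cref{lem_generator_bound} applied at $(\varphi\bz,\varphi\bz')$ gives
\[
(\mathcal{L}_{\lambda,\alpha}^{\rm Couple})^*V(\bz,\bz')=\bigl[\mathcal{L}_{\lambda,\alpha}^{\rm Couple}(V\circ\widehat\varphi)\bigr](\varphi\bz,\varphi\bz')\le-2r\,|\varphi\bz-\varphi\bz'|_{\bfA}^2=-2r\,|\bz-\bz'|_{\bfA'}^2,
\]
which is exactly \eqref{eq_adjointgen_bound}.

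The main obstacle will be the rigorous justification of the conjugation identity at the level of the coupled process: one must verify that the time reversal of the synchronously coupled Randomized HMC coincides with its velocity-flipped forward version, i.e.\ that the shared Poisson clock and the shared Gaussian refreshment (whose law is invariant under $\bxi\mapsto-\bxi$) are preserved by time reversal combined with the flip, so that synchronicity is not destroyed. Once this is secured, \Cref{lem_adjointgenerator_bound} costs nothing beyond \Cref{lem_generator_bound}. As a fallback avoiding the abstract adjoint, I would instead compute the backward coupled generator explicitly — it is the forward coupled generator with the Hamiltonian vector field reversed in sign — whereupon the flipped twist $\bfA'$ absorbs that sign and the estimate reduces line-by-line to the proof of \Cref{lem_generator_bound}.
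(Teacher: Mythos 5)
Your proposal is correct, but note that the paper does not actually write out a proof of \Cref{lem_adjointgenerator_bound}: it declares that the result ``can be derived following the same arguments used for [Deligiannidis et al., Theorem 3]'' and omits the details, the intended route being your fallback — write down the backward coupled dynamics (Hamiltonian drift reversed in sign, same Poisson clock and shared Gaussian refreshments), apply the product rule to $a|\Tilde{\bX}_t|^2-2b\Tilde{\bX}_t^\top\Tilde{\bV}_t+c|\Tilde{\bV}_t|^2$, and observe that the resulting matrix inequality is the one already solved in \Cref{proof_generator_bound}. Your primary route via the conjugation $(\mathcal{L}_{\lambda,\alpha}^{\rm Couple})^*=\widehat\varphi\,\mathcal{L}_{\lambda,\alpha}^{\rm Couple}\,\widehat\varphi$ is a genuinely different and cleaner argument: it obtains the $\bfA'$ bound from the $\bfA$ bound with zero new computation, since $\bfA'=\mathbf{J}\bfA\mathbf{J}$ and the joint flip exchanges the two twists while leaving the positions (hence the Hessian average $\bfH_t$) unchanged, so \Cref{lem_generator_bound} applied at $(\varphi\bz,\varphi\bz')$ delivers exactly \eqref{eq_adjointgen_bound}. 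The one point you rightly flag — justifying the conjugation identity for the coupled process — is less of an obstacle than you suggest, because the paper only ever defines $(\mathcal{L}_{\lambda,\alpha}^{\rm Couple})^*$ informally as the generator of the synchronously coupled time-reversed dynamics; under that reading the identity is precisely the single-chain facts $\varphi\mathcal{L}^{\rm H}\varphi=-\mathcal{L}^{\rm H}=(\mathcal{L}^{\rm H})^*$ and $\varphi\mathcal{R}_\alpha^{\rm PP}\varphi=\mathcal{R}_\alpha^{\rm PP}=(\mathcal{R}_\alpha^{\rm PP})^*$ lifted to the coupling, which survives the lift because $\bxi\mapsto-\bxi$ preserves the law of the shared refreshments and the Poisson clock is untouched. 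What your approach buys is modularity (any future sharpening of \Cref{lem_generator_bound} transfers automatically); what the explicit recomputation buys is that it never needs to interpret the adjoint of a coupled generator at all.
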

The result of \Cref{lem_adjointgenerator_bound} can be derived following the same arguments used for \cite[Theorem 3]{deligiannidis2021randomized}, its proof is therefore omitted. Similarly to \eqref{eq_l2_contraction} and \eqref{eq_wass_contraction}, this result yields a $(2,\bfA')$-Wasserstein contraction for the adjoint $(\bfP^t)^*$. For any $\nu,\nu'$ two probability measures of $\R^{2d}$
\begin{equation}\label{eq_adjointwass_contraction}
W_{2,\bfA'}(\nu (\bfP^t)^*,\nu^{\prime} (\bfP^t)^*)\le e^{-rt}W_{2,\bfA'}(\nu,\nu^{\prime}).
\end{equation}
For $f:\R^{2d}\rightarrow \R$ we introduce the Lipschitz norm with respect to $\bfA$ as
\begin{equation*}
    \|f\|_{{\rm Lip},\bfA}\triangleq\underset{\bz_1\neq\bz_2}{\sup}\frac{|f(\bz_1)-f(\bz_2)|}{|\bz_1-\bz_2|_{\bfA}}.
\end{equation*}
Using the dual formulation of the $(1,\bfA)$-Wasserstein distance, for any measures $\nu,\nu'$ on $\R^{2d}$
\begin{equation*}
    W_{2,\bfA}(\nu ,\nu^{\prime})\ge W_{1,\bfA}(\nu ,\nu^{\prime})=\underset{\|f\|_{{\rm Lip},\bfA}\le 1}{\sup}\int f \, \dd(\nu-\nu').
\end{equation*}
Let $\bz_1,\bz_2\in\R^{2d}$. We combine this dual formulation with \eqref{eq_wass_contraction} and \eqref{eq_adjointwass_contraction} for $\nu=\delta_{\bz_1}$ and $\nu'=\delta_{\bz_2}$. In particular for any $f,g$ such that $\|f\|_{{\rm Lip},\bfA}\le 1$ and $\|g\|_{{\rm Lip},\bfA'}\le 1$ we obtain
\begin{align}\label{eq_ricci}
\begin{split}
             |\bfP^t f(\bz_1)-\bfP^t f(\bz_2)|\le  & e^{-rt}|\bz_1-\bz_2|_{\bfA}  \\
     |(\bfP^t)^* g(\bz_1)-(\bfP^t)^*g(\bz_2)|\le  & e^{-rt}|\bz_1-\bz_2|_{\bfA'}
\end{split}
\end{align}
We also note that \cite[Equation (4.12)]{deligiannidis2021randomized} yields
\begin{equation}\label{eq_twist_relation}
    |\bz_1-\bz_2|_{\bfA'}\le (C')^2|\bz_1-\bz_2|_{\bfA},\qquad C'\triangleq\left(\frac{ac+b^2+2\sqrt{acb^2}}{ac-b^2}\right)^{1/4}.
\end{equation}
Let $f\in\mathbb{L}_2^0(\Pi)$. By definition, $f$ depends only on the position therefore $\|f\|_{{\rm Lip},\bfA}=\|f\|_{{\rm Lip},\bfA'}$. Applying successively \eqref{eq_ricci} with $g=\bfP^t f$ and \eqref{eq_twist_relation} yields
\begin{align}
    \begin{split}
        \|(\bfP^t)^*\bfP^t f\|_{{\rm Lip},\bfA'}\le & e^{-rt}\|\bfP^t f\|_{{\rm Lip},\bfA'}\le (C')^2e^{-rt}\|\bfP^t f\|_{{\rm Lip},\bfA}\\
        \le& (C')^2e^{-2rt}\|f\|_{{\rm Lip},\bfA}=(C')^2e^{-2rt}\|f\|_{{\rm Lip},\bfA'}.
    \end{split}
\end{align}
We refer to the proof of \cite[Proposition 30]{ollivier2009ricci} with $\kappa=1-(C')^2e^{-2rt}$ for any $t>\log(C')/r$. We argue that $(\bfP^t)^*\bfP^t$ is a reversible kernel with spectral radius at most $(C')^2e^{-2rt}$ on the Lipschitz functions of $\mathbb{L}_2^0(\Pi)$. This subset being dense, the spectral radius extends to every function of $\mathbb{L}_2^0(\Pi)$. Noting $\langle,\rangle$ for the scalar product in $\mathbb{L}_2^0(\Pi_*)$, we obtain that for any $f\in\mathbb{L}_2^0(\Pi)$
\begin{equation}
    \|\bfP^t f\|^2=\langle f,(\bfP^t)^*\bfP^t f\rangle\le\|f\|\|(\bfP^t)^*\bfP^t f\|\le (C')^2e^{-2rt}\|f\|^2.
\end{equation}
The second claim of the Theorem follows from taking square roots. A direct computation of $C'$ yields
\begin{equation}\label{eq_C'}
    C'=\left(\frac{5+\alpha+4\sqrt{1+\alpha}}{3-\alpha}\right)^{1/4}\le (3+2\sqrt{2})^{1/4}\le 1.56.
\end{equation}

 %=\frac{ac}{ac-b^2}e^{-\mu t}W(\nu_X,\nu_X^{\prime})
%$\{a |\Tilde{X}_t|^2+2b\Tilde{X}_t^\top\Tilde{V}_t+c|\Tilde{V}_t|^2\}$

\subsection{Proof of \Cref{lem_generator_bound}}\label{proof_generator_bound}
 In the sequel, we denote $\bfH_t$ the $d\times d$ matrix defined by
$$
\bfH_t\triangleq\int_0^1\nabla^2\Phi(s\bX_t+(1-s)\bX_t')\dd s.
$$
Taylor's theorem yields $\nabla\Phi(\bX_t)-\nabla\Phi(\bX_t')=\bfH_t \Tilde{\bX}_t$, and $m\bfI_d\preceq \bfH_t \preceq M\bfI_d$ by \Cref{assumption:strong_cvx}. For the purpose of computations, we make repeated use of the shorthand notations $\Tilde{\bX}_t\triangleq \bX_t-\bX_t'$, and $\Tilde{\bV}_t\triangleq \bV_t-\bV_t'$ in the following.  From \eqref{eq_sdes_coupling} we rewrite $(\Tilde{\bX}_t,\Tilde{\bV}_t)_{t\ge0}$ as a stochastic jump process driven by the following SDE
\begin{align*}
    \dd \Tilde{\bV}_t&=-\bfH_t\Tilde{\bX}_t\dd t-(1-\alpha)\Tilde{\bV}_t\dd \bN_t\\
    \dd \Tilde{\bX}_t&=\Tilde{\bV}_t\dd t
\end{align*}
Applying the product rule to this SDE yields
\begin{align*}
    \dd |\Tilde{\bX}_t|^2&=2\Tilde{\bX}_t^\top\Tilde{\bV}_t\dd t\\
    \dd \Tilde{\bX}_t^\top\Tilde{\bV}_t&=(-\Tilde{\bX}_t^\top \bfH_t\Tilde{\bX}_t+|\Tilde{\bV}_t|^2)\dd t-(1-\alpha)\Tilde{\bX}_t^\top \Tilde{\bV}_t\dd \bN_t\\
    \dd |\Tilde{\bV}_t|^2&=-2\Tilde{\bX}_t^\top \bfH_t\Tilde{\bV}_t\dd t-(1-\alpha^2)|\Tilde{\bV}_t|^2\dd \bN_t.
\end{align*}
Letting $a,c>0$ and $b\in\R$ such that $ac-b^2>0$, we consider the positive definite matrix
$$
\bfA\triangleq\begin{pmatrix}a\bfI_d&b\bfI_d\\
    b\bfI_d& c\bfI_d\end{pmatrix}.
$$
Applying the generator $\mathcal{L}_{\lambda,\alpha}^{\rm Couple}$ on $|\bZ_t-\bZ_t'|_{\bfA}^2=a|\Tilde{\bX}_t|^2+2b\Tilde{\bX}_t^\top \Tilde{\bV}_t + c |\Tilde{\bV}_t|^2$ yields
\begin{align*}
    \mathcal{L}_{\lambda,\alpha}^{\rm Couple}|\bZ_t-\bZ_t'|_{\bfA}^2=&\, 2a\Tilde{\bX}_t^\top\Tilde{\bV}_t-2b\Tilde{\bX}_t^\top \bfH_t\Tilde{\bX}_t-2b(1-\alpha)\lambda \Tilde{\bX}_t^\top \Tilde{\bV}_t+2b|\Tilde{\bV}_t|^2\\&-2c\Tilde{\bX}_t^\top \bfH_t\Tilde{\bV}_t-c(1-\alpha^2)\lambda|\Tilde{\bV}_t|^2\\
    =&-(\bZ_t-\bZ_t')^\top\,\bfS_t\,(\bZ_t-\bZ_t')
\end{align*}
where
$$
\bfS_t\triangleq\begin{pmatrix}2b\bfH_t&(b(1-\alpha)\lambda-a)\bfI_d+c\bfH_t\\
    (b(1-\alpha)\lambda-a)\bfI_d+c\bfH_t& (c(1-\alpha^2)\lambda-2b)\bfI_d\end{pmatrix}.
$$
The inequality $\mathcal{L}_{\lambda,\alpha}^{\rm Couple}|\bZ_t-\bZ_t'|_{\bfA}^2\le-2r |\bZ_t-\bZ_t'|_{\bfA}^2$ is therefore satisfied whenever $\bfS_t-2r \bfA$ is a positive semi-definite matrix.
By diagonalizing each block of this last matrix in the same basis, we obtain necessary and sufficient conditions for ensuring that its $2d$ eigen-values are simultaneously non-negative: $\bfS_t-2r \bfA\succeq0$ iff for each eigen-value $\sigma$ of $\bfH_t$, we have
\begin{align}
    -2r a+2b\sigma&\ge0 \label{eigen:eq1}\\
    -2r c+c(1-\alpha^2)\lambda-2b&\ge 0 \label{eigen:eq2}\\
    [-2r b+ b (1-\alpha)\lambda-a+ch]^2&\le [-2r a+2b\sigma][-2r c+c(1-\alpha^2)\lambda-2b]\label{eigen:eq3}
\end{align}
The inequalities \eqref{eigen:eq1}, \eqref{eigen:eq2} and \eqref{eigen:eq3} ensure that the two diagonal elements and the determinant of the $2\times 2$ submatrix corresponding to a given eigen-value $\sigma$ (composed by the corresponding diagonal elements of each block), are non-negative. To solve these inequalities, we first choose $a,b,c$ such that the constraints \eqref{eigen:eq1} and \eqref{eigen:eq3} are saturated for the less favorable choice of eigen-value $\sigma$, then we check that \eqref{eigen:eq2} holds for the resulting values of $a,b,c$.

Without loss of generality we fix $c=2$, and then choose $a$ such that \eqref{eigen:eq1} is an equality when $\sigma=m$. Consequently, $b$ is also fixed since both sides of \eqref{eigen:eq3} must be zero when $\sigma=m$, we get
\begin{equation}\label{eq:choice_ab}
    \left\{\begin{matrix*}[l]2r a= 2b m\\
2r b=b(1-\alpha)\lambda-a+2m\end{matrix*}\right.\qquad \Leftrightarrow\qquad\left\{\begin{matrix*}[l]a=2m/s\\
b=2r/s\end{matrix*}\right.,\qquad s\triangleq 1- \frac{r}{m}((1-\alpha)\lambda-2r).
\end{equation}
For any real numbers $x,y,z,w$ such that $xz>0$ and $w/z=y/x=m$ the map $\sigma\mapsto(x\sigma-y)^2/(z\sigma-w)$ is increasing when $\sigma>m$. As a consequence, \eqref{eigen:eq3} holds for any $\sigma\in [m,M]$ iff it holds for $\sigma=M$. This condition is satisfied whenever
\begin{align*}
    [2(M-m)]^2&\le[2b(M-m)][2((1-\alpha^2)\lambda-2r)-2b]\\
    \Leftrightarrow\qquad\qquad M-m&\le b((1-\alpha^2)\lambda-2r-b)\\
    \Leftrightarrow\qquad\qquad M+m&\le a+\alpha(1-\alpha)\lambda b -b^2
\end{align*}
where the last equivalence follows from substituting $b((1-\alpha)\lambda-2r)$ by $a-2m$, in line with \eqref{eq:choice_ab}. By multiplying both sides by $s^2$, we conclude that \eqref{eigen:eq3} holds for any $h\in [m,M]$ iff
\begin{equation}\label{eq:2nd_degree_s}
    (M+m)s^2-(2m+2\alpha(1-\alpha)\lambda r)s+4 r^2\le0.
\end{equation}
Moreover, the set of solutions $s\in \R$ to \eqref{eq:2nd_degree_s} is non-empty only when 
\begin{equation}\label{eq:solutions_existence}
    16 r^2(M+m)-(2m+2\alpha(1-\alpha)\lambda r)^2\le 0.
\end{equation}
For the purpose of computations, we parametrize the refreshment angle, intensity, and the exponential convergence rate with respect to $x, y\ge0$ such that
$$
\alpha(1-\alpha)\lambda=x\sqrt{M+m}, \qquad r= y\left(\frac{m}{\sqrt{M+m}}\right).
$$
Plugging this parametrization into \eqref{eq:solutions_existence} simplifies to
\begin{align*}
    16 y^2m^2-(2m+2 y x m)^2\le0\qquad&\Leftrightarrow\qquad16 y^2-(2+2 y x)^2\le0\\
    &\Leftrightarrow\qquad(4 y-2-2 y x)(2 x+2+2 y x)\le0\\
    &\Leftrightarrow\qquad y(2- x)\le 1.
\end{align*}
We remark that the largest exponential convergence rate is obtained by choosing $ y=1/(2- x)$ for $0\le x<2$.
This choice saturates \eqref{eq:solutions_existence}, therefore \eqref{eq:2nd_degree_s} now has a unique positive solution that fixes the choice of $\delta$, indeed 
\begin{align*}
    s=\frac{m+\alpha(1-\alpha)\lambda r}{M+m}\qquad&\Leftrightarrow\qquad1-\frac{r}{m}((1-\alpha)\lambda-2r)=1-\frac{(M-\alpha(1-\alpha)\lambda r)}{(M+m)}\\
    &\Leftrightarrow\qquad y x(M+m)-2\alpha y m)=\alpha M-\alpha y x m\\
    &\Leftrightarrow\qquad x (M+m)-(2- x)\alpha M+ x \alpha m= 2 \alpha m\\
    &\Leftrightarrow\qquad x(M+m+\alpha M+\alpha m)=2\alpha m + 2\alpha M\\
        &\Leftrightarrow\qquad x=2\alpha/(1+\alpha).
\end{align*}
We remark therefore that $ x\in[0,1)$ because $\alpha\in [0,1)$. Since the constraint \eqref{eq:2nd_degree_s} is also saturated, we know
 \eqref{eigen:eq2} holds for any $ x\in[0,1)$ because $(1-\alpha^2)\lambda-2r- b=(M-m)/b$ is non-negative. Moreover, direct computations of $s$ and $r$ in \eqref{eq:choice_ab} yields explicit values for $a,b,c$ as follows:
 \begin{equation}\label{def:abc}
     a=2(M+m)/(1+\alpha),\qquad b=\sqrt{M+m},\qquad c=2.
 \end{equation}
Both $a$ and $c$ are positive, therefore $A$ is positive definite for any $\alpha\in[0,1)$ since
 $$
 ac-b^2=(M+m)(3-\alpha)/(1+\alpha)>0.
 $$
 
\subsection{Proof of \Cref{prop:cv_generator}}\label{proof_cv_generator}
We note $\mathcal{L}^{\rm H}$ the infinitesimal generator of Hamiltonian dynamics (a.k.a Liouville operator), defined on test functions $f\in C_c^{\infty}(\R^{2d})$ by
\begin{equation*}
    \mathcal{L}^{\rm H}f(\bx,\bv)\triangleq \bv^{\top}\nabla_{\bx} f(\bx,\bv) -\nabla\Phi(\bx)^\top\nabla_{\bv} f(\bx,\bv),\qquad (\bx,\bv)\in\R^{2d}.
\end{equation*}
In addition, we note $\mathcal{L}_{\lambda,\alpha}^{\rm RH}$ the generator of Randomized HMC with refreshment rate $\lambda\ge0$ and persistence $\alpha\in[0,1)$, and we note $\mathcal{L}_\gamma^{\rm LD}$ the generator of the Langevin diffusion with damping $\gamma\ge0$. These are built upon two types of momentum refreshments. The generator $\mathcal{R}_\alpha^{\rm PP}$ refers to discrete refreshments with persistence $\alpha\in[0,1)$ driven by a standard Poisson Process. The generator $\mathcal{R}^{\rm BM}$ refers to continuous refreshments induced by a Brownian motion on $\R^d$. These generators are formally defined for $\bxi\sim\mathcal{N}_d(\bfzero_d,\bfI_d)$ by
    \begin{align*}
        \mathcal{L}_{\lambda,\alpha}^{\rm RH}\triangleq\mathcal{L}^{\rm H}+\lambda\mathcal{R}_\alpha^{\rm PP},\,\,\,\qquad\mathcal{R}_\alpha^{\rm PP}f(\bx,\bv)&\triangleq \mathbb{E}\left[f(\bx,\alpha \bv +\sqrt{1-\alpha^2}\bxi)\right]-f(\bx,\bv),\\
    \mathcal{L}_{\gamma}^{\rm LD}\triangleq\mathcal{L}^{\rm H}+\gamma\mathcal{R}^{\rm BM},\qquad\mathcal{R}^{\rm BM}f(\bx,\bv)&\triangleq -\bv^\top\nabla_{\bv} f(\bx,\bv) +{\rm tr}(\nabla^2_{\bv} f(\bx,\bv)).
\end{align*}
% Built upon the two types of refreshment introduced, explicit definitions for these generators are as follows
%\begin{align*}
%    \mathcal{L}_{\lambda,\alpha}^{\rm RH}&\triangleq\mathcal{L}^{\rm H}+\lambda\mathcal{R}_\alpha^{\rm PP}\\
 %   \mathcal{L}_{\gamma}^{\rm LD}&\triangleq\mathcal{L}^{\rm H}+\gamma\mathcal{R}^{\rm BM}.
%\end{align*}
Since $\mathcal{L}^{\rm H}$ is a common element of these two generators, their proximity depends only on the proximity of the refreshments. Indeed
\begin{equation*}
    \|\mathcal{L}_{\lambda,\alpha}^{\rm RH}f-\mathcal{L}_{\gamma}^{\rm LD}f\|_{\infty}=\|\lambda\mathcal{R}_{\alpha}^{\rm PP}f-\gamma\mathcal{R}^{\rm BM}f\|_{\infty}.
\end{equation*}
Therefore the proof does not rely on any assumption with respect to the potential function.
Let $f\in C_c^{\infty}$. By assumption there is a constant $B>0$ such that: $f$ and its derivatives are zero on the complement of the compact set $S(B)\triangleq\{(\bx,\bv)\in\R^{2d}: |\bx|\vee|\bv|\le B \}$, and for any $(\bx,\bv)\in S(B)$ we have $|\nabla_{\bv} f(\bx,\bv)|\le B$ and $\nabla_{\bv}^2 f(\bx,\bv)\preceq B\bfI_d$.  We denote $\beta=\sqrt{1-\alpha^2}$ and assume that $\beta\in (0,1/2]$. Let $\bxi\sim\mathcal{N}_d(\bfzero_d,\bfI_d)$, and define
$$
g_{\bxi}(\beta)\triangleq f\left(\bx,\sqrt{1-\beta^2}\bv+\beta \bxi\right)-f(\bx,\bv).
$$
We remark that for any $(\bx,\bv)\notin S(2B+|\bxi|)$ we have $g_{\bxi}(\beta)=0$ for any $\beta\in (0,1/2]$. Taylor’s theorem ensures that there is a function $\beta\mapsto R_{\bxi}(\beta)$ such that 
$$
g_{\bxi}(\beta)= g_{\bxi}(0)+ \beta g'_{\bxi}(0)+ (\beta^2/2)g''_{\bxi}(0)+(\beta^2/2)R_{\bxi}(\beta).
$$
We have $R_{\bxi}(\beta)\rightarrow 0$ almost surely as $\beta\rightarrow 0$ and for any $\beta\in(0,1/2]$ there exists a random variable $\delta_{\bxi}\in(0,\beta]$ such that  $ R_{\bxi}(\beta)= g''_{\bxi}(\delta_{\bxi})- g''_{\bxi}(0)$. Direct computations yields
\begin{align*}
g'_{\bxi}(\beta)=&\left(-\frac{\beta}{(1-\beta^2)^{1/2}}\bv+\bxi\right)^\top\nabla_{\bv} f\left(\bx,\sqrt{1-\beta^2}\bv+\beta \bxi\right)\\
    g''_{\bxi}(\beta)=&-\frac{1}{(1-\beta^2)^{3/2}}\bv^\top\nabla_{\bv} f\left(\bx,\sqrt{1-\beta^2}\bv+\beta \bxi\right)\\
    &+\left(-\frac{\beta}{(1-\beta^2)^{1/2}}\bv+\bxi\right)^\top\nabla_{\bv}^2 f\left(\bx,\sqrt{1-\beta^2}\bv+\beta \bxi\right)\left(-\frac{\beta}{(1-\beta^2)^{1/2}}\bv+\bxi\right).
\end{align*}
Furthermore for $\lambda=\frac{2\gamma}{1-\alpha^2}$, we have
\begin{equation*}
    \|\mathcal{L}_{\lambda,\alpha}^{\rm RH}f-\mathcal{L}_{\gamma}^{\rm LD}f\|_{\infty}=\gamma\|(2/(1-\alpha^2))\mathcal{R}_{\alpha}^{\rm PP}f-\mathcal{R}^{\rm BM}f\|_{\infty}.
\end{equation*}
For any deterministic square matrix $\bfA$ we have $\E[\bxi^\top\bfA\bxi]={\rm tr}(\bfA)$. This yields
\begin{align*}
        (2/(1-\alpha^2))\mathcal{R}_{\alpha}^{\rm PP}f(\bx,\bv)=&(2/\beta)\E[g'_{\bxi}(0)]+\E[g''_{\bxi}(0)]+\E[R_{\bxi}(\beta)]\\
        =&-\bv^\top\nabla_{\bv}f(\bx,\bv)+\E[\bxi^\top\nabla_{\bv}^2f(\bx,\bv)\bxi]+\E[R_{\bxi}(\beta)]\\
        =&\mathcal{R}^{\rm BM}f(\bx,\bv)+\E[R_{\bxi}(\beta)].
\end{align*}
We see that $ \|\mathcal{L}_{\lambda,\alpha}^{\rm RH}f-\mathcal{L}_{\gamma}^{\rm LD}f\|_{\infty}\rightarrow 0$ as soon as $\E[R_{\bxi}(\beta)]$ converges to zero uniformly over $(\bx,\bv)\in\R^{2d}$. This condition is satisfied because $R_{\bxi}(\beta)$ can be uniformly dominated by an integrable random variable, i.e.
\begin{align*}
        |R_{\bxi}(\beta)|\le& |g''_{\bxi}(\delta_{\bxi})|+|g''_{\bxi}(0)|\\
        \le& 2\left(8|\bv|B+(|\bv|+|\bxi|)^2B\right)\mathds{1}_{(\bx,\bv)\in S(2B+|\bxi|)}\\
        \le& 2\left(16B^2+8B|\bxi|+(2B+2|\bxi|)^2B\right).
\end{align*}
By Taylor's theorem $R_{\bxi}(\beta)\rightarrow 0$ almost surely as $\beta\rightarrow 0$. The claim of the Proposition follows from applying the dominated convergence theorem.

 \subsection{Proof of \Cref{prop:langevin_contraction}}\label{proof_contraction_langevin}
 For $a=\gamma^2$, $b=\gamma$ and $c=2$, consider the matrices
 $$
\bfA\triangleq\begin{pmatrix}a\bfI_d&b\bfI_d\\
    b\bfI_d& c\bfI_d\end{pmatrix},\qquad \bfA'\triangleq\begin{pmatrix}a\bfI_d&-b\bfI_d\\
    -b\bfI_d& c\bfI_d\end{pmatrix}.
$$
 Denote the forward Langevin processes $(\bZ_t)_{t\ge0}$ and $(\bZ_t)_{t\ge0}$, solutions of \eqref{eq_langevin} coupled with the same Brownian motion. We refer to the proof of \cite[Proposition 1]{dalalyan2020sampling} (for $v=0$), which obtains
$$
\frac{\dd}{\dd t}|\bZ_t-\bZ'_t|^2_{\bfA}\le - 2r|\bZ_t-\bZ'_t|^2_{\bfA}.
$$
 We remark that a similar inequality can be proven for the backward processes, with respect to the $\bfA'$-norm. Applying Gronwall's inequality then yields both
%\label{eq_wass_contraction_langevin}
\begin{align*}
    W_{2,\bfA}(\nu \bfP^t,\nu^{\prime} \bfP^t)&\le e^{-rt}W_{2,\bfA}(\nu,\nu^{\prime})\\
W_{2,\bfA'}(\nu (\bfP^t)^*,\nu^{\prime} (\bfP^t)^*)&\le e^{-rt}W_{2,\bfA'}(\nu,\nu^{\prime})
\end{align*}
The claim of the proposition follows from using the exact same arguments as in the proof of \Cref{thm:randomized_langevin_contraction} starting from \eqref{eq_adjointwass_contraction}; see \Cref{proof_contraction}. The same constant $C'=(3+2\sqrt{2})^{1/4}\le 1.56$ is obtained for any choice of friction.
\section{Additional proofs and experiments}\label{sec_add_experiments}
\begin{subsection}{Proof of \Cref{prop:geom_corr_rhmc}}\label{sec:proof_geom_corr_rhmc}
Direct computations are derived for $\sigma_i=1$. Denote $Y_n=\bY^{n}(i)$ and $\xi_n=\bxi^n(i)$.
\begin{align*}
    {\rm Corr}(Y_n,Y_0)=\,&\E[\cos(\tau_n)Y_{n-1}Y_0]=\E[\cos(\tau_1)]{\rm Corr}(Y_{n-1},Y_0)=\E[\cos(\tau_1)]^n\\
    a_n\triangleq{\rm Corr}(Y_n^2,Y_0^2)=\,&{\rm Corr}\left(\left(\cos^2(\tau_n)Y_{n-1}^2+\sin^2(\tau_n)\xi_n^2+2\cos(\tau_n)\sin(\tau_n)Y_{n-1}\xi_n\right),Y_0^2\right)\\
    =\,&(1/2)\left(\E[\cos^2(\tau_1)]\E[Y_n^2Y_0^2]+\E[\sin^2(\tau_1)]\E[\xi_n^2Y_0^2]-1\right)\\
    =\,&(1/2)\left(\E[\cos^2(\tau_1)](2a_{n-1}+1)+\E[\sin^2(\tau_1)]-1\right)\\
    =\,&a_{n-1}\E[\cos^2(\tau_1)]=\E[\cos^2(\tau_1)]^n
\end{align*}
Computing the moments of the exponential distribution yields $\E[\cos(\tau_1)]=1/(1+\lambda^{-2})$ while
$$
\E[\cos^2(\tau_1)]=(1/2)\E[\cos(2\tau_1)+1]=\frac{1}{2}\left(\frac{1}{1+4\lambda^{-2}}+1\right)=\frac{1+2\lambda^{-2}}{1+4\lambda^{-2}}
$$
In particular $(r_i(T))^2\neq s_i(T)$, therefore $(\bY^{n}(i))_{n\ge 0}$ cannot be jointly Gaussian, otherwise Isserlis' theorem would yield a contradiction.
\end{subsection}

\begin{subsection}{Gaussian mixture}
For $\ba\in\R^d$ and a positive definite matrix $\bfSigma\in\R^{d\times d}$, we consider the mixture of $\mathcal{N}_d(\ba,\bfSigma)$ and $\mathcal{N}_d(-\ba,\bfSigma)$ with equal weights.
Noting $\bb=\bfSigma^{-1}\ba$, we define $\Phi$ and its derivatives at $\bx\in\R^d$ as
\begin{align*}
    \Phi(\bx)&=\frac{1}{2}|\bx-\ba|_{\bfSigma^{-1}}^2-\log\left(1+\exp(-2\bx^\top\bb)\right)\\
    \nabla\Phi(\bx)&=\bfSigma^{-1}\bx-\bb+2\bb\left(1+\exp(-2\bx^\top\bb)\right)^{-1}\\
    \nabla^2\Phi(\bx)&=\bfSigma^{-1}-4\bb\bb^\top\exp(-2\bx^\top\bb)\left(1+\exp(-2\bx^\top\bb)\right)^{-2}
\end{align*}
The target is strongly log-concave if $|\ba|_{\bfSigma^{-1}}<1$. In that case, the bound $0\le e^u/(1+e^u)^2\le1/4$ shows that \Cref{assumption:strong_cvx} holds with constants $m=(1-|\ba|_{\bfSigma^{-1}}^2)/\sigma_{\rm max}(\bfSigma)$ and $M=1/\sigma_{\rm min}(\bfSigma)$, where $\sigma_{\rm min}(\bfSigma)$ and $\sigma_{\rm max}(\bfSigma)$ denote the smallest and largest eigen-values of $\bfSigma$. The target has $d=50$ components such that $\bfSigma={\rm diag}_{1\le i\le d}(\sigma_i^2)$ where $\sigma_i^2=i/d$. We set $\ba(i)=\sqrt{i}/(2d)$ so that $|\ba|_{\bfSigma^{-1}}=1/2$. In \Cref{ess_mixture_graph} and \Cref{tab:odd_vs_even_mixture}, we compare the minimum ESS per gradient evaluation for MALT, RHMC and HMC for a time step $h=0.20$ on samples of size $N=10^6$. The friction in MALT is chosen as $\gamma=1/\sigma_{\rm max}$, while the number of steps in \Cref{tab:odd_vs_even_mixture} is chosen in order to optimize the worst efficiency between $f(x)=x$ and $f(x)=x^2$. The optimal number of steps for HMC is $L=1$ (i.e. MALA). We compute the efficiency of HMC for $L=3$ as well.
\vspace{-0.2cm}
\begin{figure}[!ht]
  \centering
\includegraphics[width=.49\linewidth]{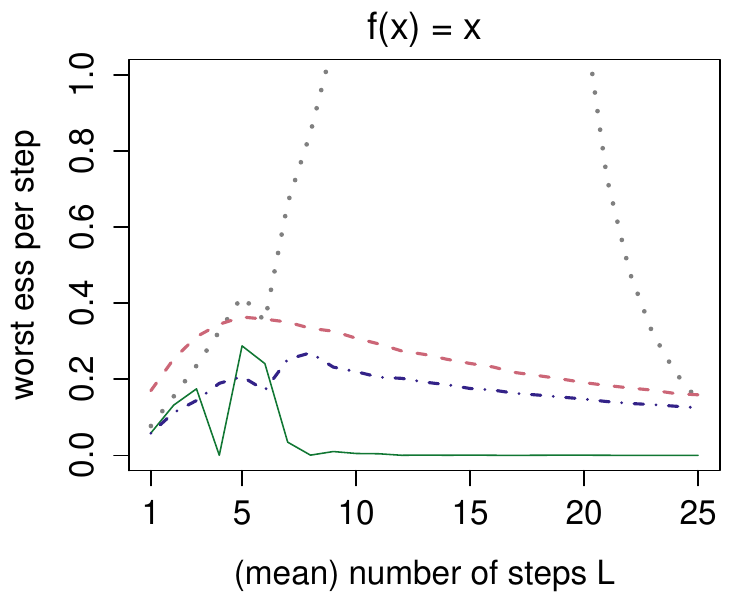}
\includegraphics[width=.49\linewidth]{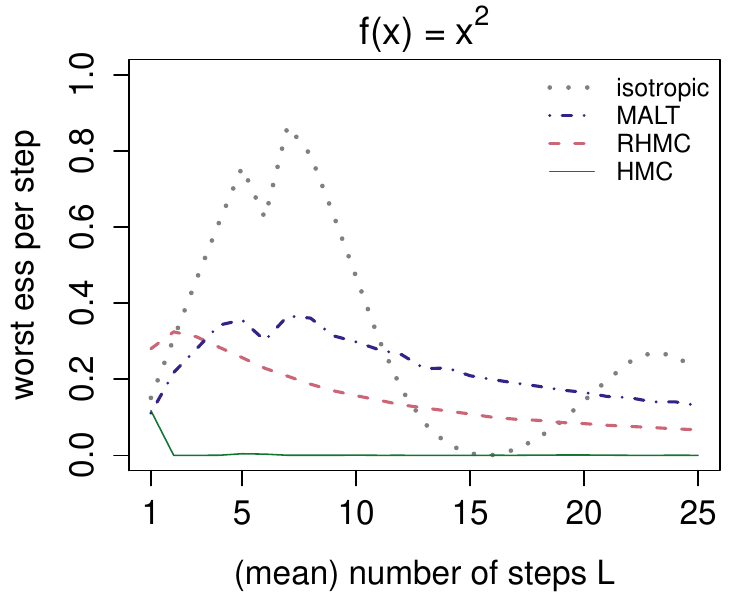}
\vspace{-0.4cm}
\caption{Gaussian mixture. \it Minimum ESS per gradient evaluation for estimating the mean and variance (resp. left and right). The dotted grey lines correspond to an ideally preconditioned HMC sampler (isotropic: $\bfSigma=\bfI_d$). The dot-dashed blue lines correspond to MALT for $\gamma=1/\sigma_{\rm max}$. The dashed pink lines correspond to Randomized HMC with full refreshments. The solid green lines correspond to HMC.
\vspace{-0.2cm}}\label{ess_mixture_graph}
\end{figure}
\vspace{-0.3cm}
\begin{table}[!ht]
    % \begin{center}
          \caption{\it
    Gaussian mixture. Minimum ESS per gradient evaluation for various odd/even functions.
    }\label{tab:odd_vs_even_mixture}
 \begin{tabular}{ %|m{2.5cm}||m{1cm}|m{1cm}|m{1cm}|m{1cm}|m{1cm}|m{1cm}|  }
%|p{2.5cm}||p{1cm}|p{1cm}|p{1cm}|p{1cm}|p{1cm}|p{1cm}|  }
%|c||c|c|c|c|c|c|c|c|}
|C{2.4cm}|C{0.95cm}|C{0.95cm}|C{0.95cm}|C{0.95cm}|C{0.95cm}|C{0.95cm}|C{0.95cm}|C{0.95cm}|}
 \hline
 &\multicolumn{4}{c|}{odd}&\multicolumn{4}{c|}{even} \\
 \hline
 $f(x)$& $x$& $x^3$ &${\rm sgn}(x)$&$\sin(x)$ & $x^2$ & $x^4$    &$e^{-|x|}$&   $\cos(x)$\\
 \hline
  MALT: $L=8$ & 0.27& 0.32& 0.31  & 0.27   &\textbf{0.36}& \textbf{0.37}    &\textbf{0.38}& \textbf{0.36}\\
 RHMC: $L=4$   &\textbf{0.35} & \textbf{0.38}& \textbf{0.41}    &\textbf{0.36}&   0.29& 0.29    &0.33&   0.29\\
 HMC: $L=3$ &  0.17& 0.23& 0.24&0.19  &0.00 & 0.00&0.00 &0.00\\
  MALA ($L=1$) &  0.06& 0.08& 0.09&0.07  &0.11 & 0.13&0.16 &0.12\\
 \hline
\end{tabular}  
\end{table}

\Cref{ess_mixture_graph} and \Cref{tab:odd_vs_even_mixture} show that MALT and RHMC both achieve higher efficiency than HMC and MALA, on a unimodal Gaussian mixture. This observation confirms the better performance of MALT and RHMC on the Gaussian mixture model. A similar discrepancy between odd and even functions is observed when comparing MALT and RHMC.
\end{subsection} 

\begin{subsection}{Student's distribution}
For a pos. def. matrix $\bfSigma\in\R^{d\times d}$, we consider the Student's distribution with $k\ge 1$ degrees of freedom. The potential and its gradient at $\bx\in\R^d$ are
% illustrate numerically the efficiency of MALT
%-\frac{(k+d)}{2}\left(1+\frac{\bx^\top\bfSigma^{-1}\bx}{k}\right)
\begin{align*}
    \Phi(\bx)&=\frac{(k+d)}{2}\log\left(k+|\bx|_{\bfSigma^{-1}}^2\right)\\
    \nabla\Phi(\bx)&=\bfSigma^{-1}\bx(k+d)\left(k+|\bx|_{\bfSigma^{-1}}^2\right)^{-1}
\end{align*}
The target has $k-1$ moments. Here, \Cref{assumption:strong_cvx} does not hold while \Cref{assumption:grad_lipschitz} does. For any $k>2$, the covariance matrix is proportional to $\bfSigma$. Heterogeneity of scales is introduced among $d=50$ components by setting $\bfSigma={\rm diag}_{1\le i\le d}(\sigma_i^2)$ where $\sigma_i^2=i/d$. In \Cref{ess_student_graph} and \Cref{tab:odd_vs_even_student}, we compare numerically the minimum ESS per gradient evaluation for MALT, RHMC and HMC on a Student with $k=20$ degrees of freedom for a time step $h=0.20$. The friction in MALT is chosen empirically as $\gamma=1/\sigma_{\rm max}$, while the number of steps in \Cref{tab:odd_vs_even_student} is chosen in order to optimize the worst efficiency between $f(x)=x$ and $f(x)=x^2$. HMC's optimal number of steps is $L=3$. 
\vspace{-0.2cm}
\begin{figure}[!ht]
  \centering
\includegraphics[width=.49\linewidth]{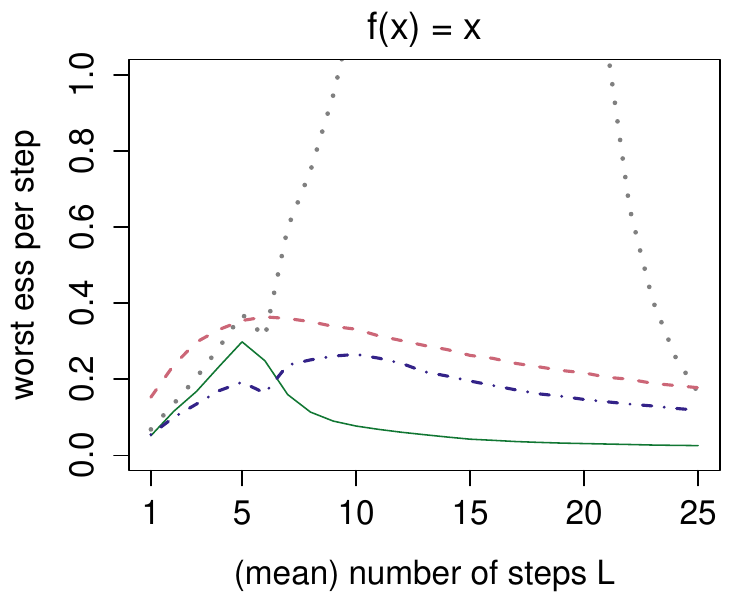}
\includegraphics[width=.49\linewidth]{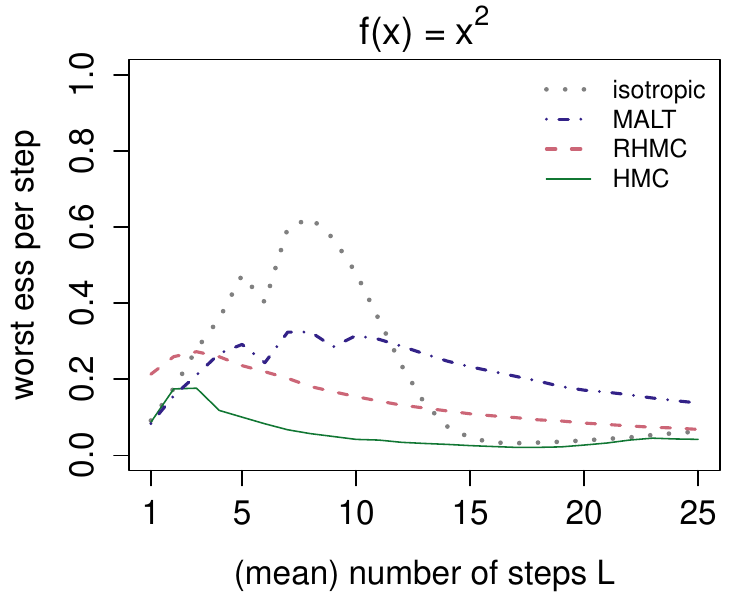}
\caption{Student's distribution. \it Minimum ESS per gradient evaluation for estimating the mean and variance (resp. left and right). The dotted grey lines correspond to an ideally preconditioned HMC sampler (isotropic: $\bfSigma=\bfI_d$). The dot-dashed blue lines correspond to MALT for $\gamma=1/\sigma_{\rm max}$. The dashed pink lines correspond to Randomized HMC with full refreshments. The solid green lines correspond to HMC.
%\vspace{-0.3cm}
}\label{ess_student_graph}
\end{figure}
\vspace{-0.3cm}
\begin{table}[!ht]
    % \begin{center}
          \caption{\it
    Student's distribution. Minimum ESS per gradient evaluation for various odd/even functions.
    }\label{tab:odd_vs_even_student}
 \begin{tabular}{ %|m{2.5cm}||m{1cm}|m{1cm}|m{1cm}|m{1cm}|m{1cm}|m{1cm}|  }
%|p{2.5cm}||p{1cm}|p{1cm}|p{1cm}|p{1cm}|p{1cm}|p{1cm}|  }
%|c||c|c|c|c|c|c|c|c|}
|C{2.4cm}|C{0.95cm}|C{0.95cm}|C{0.95cm}|C{0.95cm}|C{0.95cm}|C{0.95cm}|C{0.95cm}|C{0.95cm}|}
 \hline
 &\multicolumn{4}{c|}{odd}&\multicolumn{4}{c|}{even} \\
 \hline
 $f(x)$& $x$& $x^3$ &${\rm sgn}(x)$&$\sin(x)$ & $x^2$ & $x^4$    &$e^{-|x|}$&   $\cos(x)$\\
 \hline
  MALT: $L=8$ & 0.25& 0.30& 0.29  & 0.28   &\textbf{0.33}& \textbf{0.37}    &0.26& \textbf{0.33}\\
 RHMC: $L=5$   &\textbf{0.35} & \textbf{0.37}& \textbf{0.40}    &\textbf{0.37}&   0.24& 0.26    &\textbf{0.28}&   0.25\\
 HMC: $L=3$ &  0.17& 0.19& 0.24&0.20  &0.18 & 0.19&0.17 &0.18\\
 MALA ($L=1$) &0.05&0.07 &0.08 & 0.07& 0.09& 0.08& 0.14& 0.11\\
 \hline
\end{tabular}  
\end{table}
%\vspace{-0.5cm}

\Cref{ess_student_graph} and \Cref{tab:odd_vs_even_student} give an illustration of the sampling performances on a distribution with polynomial tails. We observe that heterogeneity of scales in the covariance matrix have slightly less impact on the worst ESS of HMC, although its performance still breaks down in higher dimension. When comparing MALT and RHMC: apart from one example, a similar discrepancy between odd and even functions is observed. 
\end{subsection}
%Preprint
%\bibliographystyle{abbrvnat}
\bibliographystyle{imsart-number} % Style BST file (imsart-number.bst or imsart-nameyear.bst)
%\bibliography{main}
{\renewcommand{\addtocontents}[2]{}
\bibliography{main}}

\begin{thebibliography}{71}
% BibTex style file: imsart-number.bst, 2017-11-03
% Default style options (sort=1,type=number).
% Used options (sort=1,type=number).

\bibitem{alamo2016technique}
\begin{barticle}[author]
\bauthor{\bsnm{Alamo},~\bfnm{A.}\binits{A.}} \AND
  \bauthor{\bsnm{Sanz-Serna},~\bfnm{J.~M.}\binits{J.~M.}}
(\byear{2016}).
\btitle{A technique for studying strong and weak local errors of splitting
  stochastic integrators}.
\bjournal{SIAM J. Numer. Anal.}
\bvolume{54}
\bpages{3239--3257}.
\bdoi{10.1137/16M1058765}
\bmrnumber{3570281}
\end{barticle}
\endbibitem

\bibitem{andrieu2020general}
\begin{barticle}[author]
\bauthor{\bsnm{Andrieu},~\bfnm{Christophe}\binits{C.}},
  \bauthor{\bsnm{Lee},~\bfnm{Anthony}\binits{A.}} \AND
  \bauthor{\bsnm{Livingstone},~\bfnm{Sam}\binits{S.}}
(\byear{2020}).
\btitle{A general perspective on the Metropolis-Hastings kernel}.
\bnote{Preprint. Available at
  \href{https://arxiv.org/abs/2012.14881}{arXiv:2012.14881}}.
\end{barticle}
\endbibitem

\bibitem{andrieu2021peskun}
\begin{barticle}[author]
\bauthor{\bsnm{Andrieu},~\bfnm{Christophe}\binits{C.}} \AND
  \bauthor{\bsnm{Livingstone},~\bfnm{Samuel}\binits{S.}}
(\byear{2021}).
\btitle{Peskun-{T}ierney ordering for {M}arkovian {M}onte {C}arlo: beyond the
  reversible scenario}.
\bjournal{Ann. Statist.}
\bvolume{49}
\bpages{1958--1981}.
\bdoi{10.1214/20-aos2008}
\bmrnumber{4319237}
\end{barticle}
\endbibitem

\bibitem{besag1994comments}
\begin{barticle}[author]
\bauthor{\bsnm{Besag},~\bfnm{Julian}\binits{J.}}
(\byear{1994}).
\btitle{Comments on “Representations of knowledge in complex systems” by U.
  Grenander and MI Miller}.
\bjournal{J. Roy. Statist. Soc. Ser. B}
\bvolume{56}
\bpages{4}.
\end{barticle}
\endbibitem

\bibitem{Beskos::2013}
\begin{barticle}[author]
\bauthor{\bsnm{Beskos},~\bfnm{Alexandros}\binits{A.}},
  \bauthor{\bsnm{Pillai},~\bfnm{Natesh}\binits{N.}},
  \bauthor{\bsnm{Roberts},~\bfnm{Gareth}\binits{G.}},
  \bauthor{\bsnm{Sanz-Serna},~\bfnm{Jesus-Maria}\binits{J.-M.}} \AND
  \bauthor{\bsnm{Stuart},~\bfnm{Andrew}\binits{A.}}
(\byear{2013}).
\btitle{Optimal tuning of the hybrid {M}onte {C}arlo algorithm}.
\bjournal{Bernoulli}
\bvolume{19}
\bpages{1501--1534}.
\bdoi{10.3150/12-BEJ414}
\bmrnumber{3129023}
\end{barticle}
\endbibitem

\bibitem{bou2021mixing}
\begin{barticle}[author]
\bauthor{\bsnm{Bou-Rabee},~\bfnm{Nawaf}\binits{N.}} \AND
  \bauthor{\bsnm{Eberle},~\bfnm{Andreas}\binits{A.}}
(\byear{2021}).
\btitle{Mixing Time Guarantees for Unadjusted Hamiltonian Monte Carlo}.
\bnote{Preprint. Available at
  \href{https://arxiv.org/abs/2105.00887}{arXiv:2105.00887}}.
\end{barticle}
\endbibitem

\bibitem{bou2020coupling}
\begin{barticle}[author]
\bauthor{\bsnm{Bou-Rabee},~\bfnm{Nawaf}\binits{N.}},
  \bauthor{\bsnm{Eberle},~\bfnm{Andreas}\binits{A.}} \AND
  \bauthor{\bsnm{Zimmer},~\bfnm{Raphael}\binits{R.}}
(\byear{2020}).
\btitle{Coupling and convergence for {H}amiltonian {M}onte {C}arlo}.
\bjournal{Ann. Appl. Probab.}
\bvolume{30}
\bpages{1209--1250}.
\bdoi{10.1214/19-AAP1528}
\bmrnumber{4133372}
\end{barticle}
\endbibitem

\bibitem{bou2010long}
\begin{barticle}[author]
\bauthor{\bsnm{Bou-Rabee},~\bfnm{Nawaf}\binits{N.}} \AND
  \bauthor{\bsnm{Owhadi},~\bfnm{Houman}\binits{H.}}
(\byear{2010}).
\btitle{Long-run accuracy of variational integrators in the stochastic
  context}.
\bjournal{SIAM J. Numer. Anal.}
\bvolume{48}
\bpages{278--297}.
\bdoi{10.1137/090758842}
\bmrnumber{2608370}
\end{barticle}
\endbibitem

\bibitem{bou2017randomized}
\begin{barticle}[author]
\bauthor{\bsnm{Bou-Rabee},~\bfnm{Nawaf}\binits{N.}} \AND
  \bauthor{\bsnm{Sanz-Serna},~\bfnm{Jes\'{u}s~Mar\'{\i}a}\binits{J.~M.}}
(\byear{2017}).
\btitle{Randomized {H}amiltonian {M}onte {C}arlo}.
\bjournal{Ann. Appl. Probab.}
\bvolume{27}
\bpages{2159--2194}.
\bdoi{10.1214/16-AAP1255}
\bmrnumber{3693523}
\end{barticle}
\endbibitem

\bibitem{bou2020convergence}
\begin{barticle}[author]
\bauthor{\bsnm{Bou-Rabee},~\bfnm{Nawaf}\binits{N.}} \AND
  \bauthor{\bsnm{Schuh},~\bfnm{Katharina}\binits{K.}}
(\byear{2020}).
\btitle{Convergence of unadjusted Hamiltonian Monte Carlo for mean-field
  models}.
\bnote{Preprint. Available at
  \href{https://arxiv.org/abs/2009.08735}{arXiv:2009.08735}}.
\end{barticle}
\endbibitem

\bibitem{bou2010pathwise}
\begin{barticle}[author]
\bauthor{\bsnm{Bou-Rabee},~\bfnm{Nawaf}\binits{N.}} \AND
  \bauthor{\bsnm{Vanden-Eijnden},~\bfnm{Eric}\binits{E.}}
(\byear{2010}).
\btitle{Pathwise accuracy and ergodicity of metropolized integrators for
  {SDE}s}.
\bjournal{Comm. Pure Appl. Math.}
\bvolume{63}
\bpages{655--696}.
\bdoi{10.1002/cpa.20306}
\bmrnumber{2583309}
\end{barticle}
\endbibitem

\bibitem{bussi2007accurate}
\begin{barticle}[author]
\bauthor{\bsnm{Bussi},~\bfnm{Giovanni}\binits{G.}} \AND
  \bauthor{\bsnm{Parrinello},~\bfnm{Michele}\binits{M.}}
(\byear{2007}).
\btitle{Accurate sampling using Langevin dynamics}.
\bjournal{Physical Review E}
\bvolume{75}
\bpages{056707}.
\end{barticle}
\endbibitem

\bibitem{campos2015extra}
\begin{barticle}[author]
\bauthor{\bsnm{Campos},~\bfnm{C\'{e}dric~M.}\binits{C.~M.}} \AND
  \bauthor{\bsnm{Sanz-Serna},~\bfnm{J.~M.}\binits{J.~M.}}
(\byear{2015}).
\btitle{Extra chance generalized hybrid {M}onte {C}arlo}.
\bjournal{J. Comput. Phys.}
\bvolume{281}
\bpages{365--374}.
\bdoi{10.1016/j.jcp.2014.09.037}
\bmrnumber{3281978}
\end{barticle}
\endbibitem

\bibitem{chen2020fast}
\begin{barticle}[author]
\bauthor{\bsnm{Chen},~\bfnm{Yuansi}\binits{Y.}},
  \bauthor{\bsnm{Dwivedi},~\bfnm{Raaz}\binits{R.}},
  \bauthor{\bsnm{Wainwright},~\bfnm{Martin~J.}\binits{M.~J.}} \AND
  \bauthor{\bsnm{Yu},~\bfnm{Bin}\binits{B.}}
(\byear{2020}).
\btitle{Fast mixing of metropolized {H}amiltonian {M}onte {C}arlo: benefits of
  multi-step gradients}.
\bjournal{J. Mach. Learn. Res.}
\bvolume{21}
\bpages{Paper No. 92, 71}.
\bmrnumber{4119160}
\end{barticle}
\endbibitem

\bibitem{chen2019optimal}
\begin{barticle}[author]
\bauthor{\bsnm{Chen},~\bfnm{Zongchen}\binits{Z.}} \AND
  \bauthor{\bsnm{Vempala},~\bfnm{Santosh~S.}\binits{S.~S.}}
(\byear{2019}).
\btitle{Optimal convergence rate of {H}amiltonian {M}onte {C}arlo for strongly
  logconcave distributions}.
\bvolume{145}
\bpages{Art. No. 64, 12}.
\bmrnumber{4012714}
\end{barticle}
\endbibitem

\bibitem{cheng2018underdamped}
\begin{binproceedings}[author]
\bauthor{\bsnm{Cheng},~\bfnm{Xiang}\binits{X.}},
  \bauthor{\bsnm{Chatterji},~\bfnm{Niladri~S}\binits{N.~S.}},
  \bauthor{\bsnm{Bartlett},~\bfnm{Peter~L}\binits{P.~L.}} \AND
  \bauthor{\bsnm{Jordan},~\bfnm{Michael~I}\binits{M.~I.}}
(\byear{2018}).
\btitle{Underdamped Langevin MCMC: A non-asymptotic analysis}.
In \bbooktitle{Conference on Learning Theory}
\bpages{300--323}.
\bpublisher{PMLR}.
\end{binproceedings}
\endbibitem

\bibitem{chewi2021optimal}
\begin{binproceedings}[author]
\bauthor{\bsnm{Chewi},~\bfnm{Sinho}\binits{S.}},
  \bauthor{\bsnm{Lu},~\bfnm{Chen}\binits{C.}},
  \bauthor{\bsnm{Ahn},~\bfnm{Kwangjun}\binits{K.}},
  \bauthor{\bsnm{Cheng},~\bfnm{Xiang}\binits{X.}},
  \bauthor{\bsnm{Le~Gouic},~\bfnm{Thibaut}\binits{T.}} \AND
  \bauthor{\bsnm{Rigollet},~\bfnm{Philippe}\binits{P.}}
(\byear{2021}).
\btitle{Optimal dimension dependence of the metropolis-adjusted langevin
  algorithm}.
In \bbooktitle{Conference on Learning Theory}
\bpages{1260--1300}.
\bpublisher{PMLR}.
\end{binproceedings}
\endbibitem

\bibitem{dalalyan2017theoretical}
\begin{barticle}[author]
\bauthor{\bsnm{Dalalyan},~\bfnm{Arnak~S.}\binits{A.~S.}}
(\byear{2017}).
\btitle{Theoretical guarantees for approximate sampling from smooth and
  log-concave densities}.
\bjournal{J. R. Stat. Soc. Ser. B. Stat. Methodol.}
\bvolume{79}
\bpages{651--676}.
\bdoi{10.1111/rssb.12183}
\bmrnumber{3641401}
\end{barticle}
\endbibitem

\bibitem{dalalyan2019user}
\begin{barticle}[author]
\bauthor{\bsnm{Dalalyan},~\bfnm{Arnak~S.}\binits{A.~S.}} \AND
  \bauthor{\bsnm{Karagulyan},~\bfnm{Avetik}\binits{A.}}
(\byear{2019}).
\btitle{User-friendly guarantees for the {L}angevin {M}onte {C}arlo with
  inaccurate gradient}.
\bjournal{Stochastic Process. Appl.}
\bvolume{129}
\bpages{5278--5311}.
\bdoi{10.1016/j.spa.2019.02.016}
\bmrnumber{4025705}
\end{barticle}
\endbibitem

\bibitem{dalalyan2019bounding}
\begin{barticle}[author]
\bauthor{\bsnm{Dalalyan},~\bfnm{Arnak~S.}\binits{A.~S.}},
  \bauthor{\bsnm{Karagulyan},~\bfnm{Avetik}\binits{A.}} \AND
  \bauthor{\bsnm{Riou-Durand},~\bfnm{Lionel}\binits{L.}}
(\byear{2022}).
\btitle{Bounding the error of discretized {L}angevin algorithms for
  non-strongly log-concave targets}.
\bjournal{J. Mach. Learn. Res.}
\bvolume{23}
\bpages{Paper No. [235], 38}.
\bmrnumber{4577674}
\end{barticle}
\endbibitem

\bibitem{dalalyan2020sampling}
\begin{barticle}[author]
\bauthor{\bsnm{Dalalyan},~\bfnm{Arnak~S.}\binits{A.~S.}} \AND
  \bauthor{\bsnm{Riou-Durand},~\bfnm{Lionel}\binits{L.}}
(\byear{2020}).
\btitle{On sampling from a log-concave density using kinetic {L}angevin
  diffusions}.
\bjournal{Bernoulli}
\bvolume{26}
\bpages{1956--1988}.
\bdoi{10.3150/19-BEJ1178}
\bmrnumber{4091098}
\end{barticle}
\endbibitem

\bibitem{deligiannidis2021randomized}
\begin{barticle}[author]
\bauthor{\bsnm{Deligiannidis},~\bfnm{George}\binits{G.}},
  \bauthor{\bsnm{Paulin},~\bfnm{Daniel}\binits{D.}},
  \bauthor{\bsnm{Bouchard-C\^{o}t\'{e}},~\bfnm{Alexandre}\binits{A.}} \AND
  \bauthor{\bsnm{Doucet},~\bfnm{Arnaud}\binits{A.}}
(\byear{2021}).
\btitle{Randomized {H}amiltonian {M}onte {C}arlo as scaling limit of the bouncy
  particle sampler and dimension-free convergence rates}.
\bjournal{Ann. Appl. Probab.}
\bvolume{31}
\bpages{2612--2662}.
\bdoi{10.1214/20-aap1659}
\bmrnumber{4350970}
\end{barticle}
\endbibitem

\bibitem{Duane::1987}
\begin{barticle}[author]
\bauthor{\bsnm{Duane},~\bfnm{Simon}\binits{S.}},
  \bauthor{\bsnm{Kennedy},~\bfnm{A.~D.}\binits{A.~D.}},
  \bauthor{\bsnm{Pendleton},~\bfnm{Brian~J.}\binits{B.~J.}} \AND
  \bauthor{\bsnm{Roweth},~\bfnm{Duncan}\binits{D.}}
(\byear{1987}).
\btitle{Hybrid {M}onte {C}arlo}.
\bjournal{Phys. Lett. B}
\bvolume{195}
\bpages{216--222}.
\bdoi{10.1016/0370-2693(87)91197-x}
\bmrnumber{3960671}
\end{barticle}
\endbibitem

\bibitem{durmus2019analysis}
\begin{barticle}[author]
\bauthor{\bsnm{Durmus},~\bfnm{Alain}\binits{A.}},
  \bauthor{\bsnm{Majewski},~\bfnm{Szymon}\binits{S.}} \AND
  \bauthor{\bsnm{Miasojedow},~\bfnm{B\l a\.~{z}ej}\binits{B.~a.~z.}}
(\byear{2019}).
\btitle{Analysis of {L}angevin {M}onte {C}arlo via convex optimization}.
\bjournal{J. Mach. Learn. Res.}
\bvolume{20}
\bpages{Paper No. 73, 46}.
\bmrnumber{3960927}
\end{barticle}
\endbibitem

\bibitem{durmus2017nonasymptotic}
\begin{barticle}[author]
\bauthor{\bsnm{Durmus},~\bfnm{Alain}\binits{A.}} \AND
  \bauthor{\bsnm{Moulines},~\bfnm{\'{E}ric}\binits{E.}}
(\byear{2017}).
\btitle{Nonasymptotic convergence analysis for the unadjusted {L}angevin
  algorithm}.
\bjournal{Ann. Appl. Probab.}
\bvolume{27}
\bpages{1551--1587}.
\bdoi{10.1214/16-AAP1238}
\bmrnumber{3678479}
\end{barticle}
\endbibitem

\bibitem{durmus2019high}
\begin{barticle}[author]
\bauthor{\bsnm{Durmus},~\bfnm{Alain}\binits{A.}} \AND
  \bauthor{\bsnm{Moulines},~\bfnm{\'{E}ric}\binits{E.}}
(\byear{2019}).
\btitle{High-dimensional {B}ayesian inference via the unadjusted {L}angevin
  algorithm}.
\bjournal{Bernoulli}
\bvolume{25}
\bpages{2854--2882}.
\bdoi{10.3150/18-BEJ1073}
\bmrnumber{4003567}
\end{barticle}
\endbibitem

\bibitem{durmus2020irreducibility}
\begin{barticle}[author]
\bauthor{\bsnm{Durmus},~\bfnm{Alain}\binits{A.}},
  \bauthor{\bsnm{Moulines},~\bfnm{\'{E}ric}\binits{E.}} \AND
  \bauthor{\bsnm{Saksman},~\bfnm{Eero}\binits{E.}}
(\byear{2020}).
\btitle{Irreducibility and geometric ergodicity of {H}amiltonian {M}onte
  {C}arlo}.
\bjournal{Ann. Statist.}
\bvolume{48}
\bpages{3545--3564}.
\bdoi{10.1214/19-AOS1941}
\bmrnumber{4185819}
\end{barticle}
\endbibitem

\bibitem{dwivedi2018log}
\begin{barticle}[author]
\bauthor{\bsnm{Dwivedi},~\bfnm{Raaz}\binits{R.}},
  \bauthor{\bsnm{Chen},~\bfnm{Yuansi}\binits{Y.}},
  \bauthor{\bsnm{Wainwright},~\bfnm{Martin~J.}\binits{M.~J.}} \AND
  \bauthor{\bsnm{Yu},~\bfnm{Bin}\binits{B.}}
(\byear{2019}).
\btitle{Log-concave sampling: {M}etropolis-{H}astings algorithms are fast}.
\bjournal{J. Mach. Learn. Res.}
\bvolume{20}
\bpages{Paper No. 183, 42}.
\bmrnumber{4048994}
\end{barticle}
\endbibitem

\bibitem{eberle2019couplings}
\begin{barticle}[author]
\bauthor{\bsnm{Eberle},~\bfnm{Andreas}\binits{A.}},
  \bauthor{\bsnm{Guillin},~\bfnm{Arnaud}\binits{A.}} \AND
  \bauthor{\bsnm{Zimmer},~\bfnm{Raphael}\binits{R.}}
(\byear{2019}).
\btitle{Couplings and quantitative contraction rates for {L}angevin dynamics}.
\bjournal{Ann. Probab.}
\bvolume{47}
\bpages{1982--2010}.
\bdoi{10.1214/18-AOP1299}
\bmrnumber{3980913}
\end{barticle}
\endbibitem

\bibitem{ethier2009markov}
\begin{bbook}[author]
\bauthor{\bsnm{Ethier},~\bfnm{Stewart~N}\binits{S.~N.}} \AND
  \bauthor{\bsnm{Kurtz},~\bfnm{Thomas~G}\binits{T.~G.}}
(\byear{2009}).
\btitle{Markov processes: characterization and convergence}
\bvolume{282}.
\bpublisher{John Wiley \& Sons}.
\end{bbook}
\endbibitem

\bibitem{gardiner1985handbook}
\begin{bbook}[author]
\bauthor{\bsnm{Gardiner},~\bfnm{C.~W.}\binits{C.~W.}}
(\byear{2004}).
\btitle{Handbook of stochastic methods for physics, chemistry and the natural
  sciences},
\bedition{third} ed.
\bseries{Springer Series in Synergetics}
\bvolume{13}.
\bpublisher{Springer-Verlag, Berlin}.
\bdoi{10.1007/978-3-662-05389-8}
\bmrnumber{2053476}
\end{bbook}
\endbibitem

\bibitem{green2001delayed}
\begin{barticle}[author]
\bauthor{\bsnm{Green},~\bfnm{Peter~J.}\binits{P.~J.}} \AND
  \bauthor{\bsnm{Mira},~\bfnm{Antonietta}\binits{A.}}
(\byear{2001}).
\btitle{Delayed rejection in reversible jump {M}etropolis-{H}astings}.
\bjournal{Biometrika}
\bvolume{88}
\bpages{1035--1053}.
\bdoi{10.1093/biomet/88.4.1035}
\bmrnumber{1872218}
\end{barticle}
\endbibitem

\bibitem{Hastings::1970}
\begin{barticle}[author]
\bauthor{\bsnm{Hastings},~\bfnm{W.~K.}\binits{W.~K.}}
(\byear{1970}).
\btitle{Monte {C}arlo sampling methods using {M}arkov chains and their
  applications}.
\bjournal{Biometrika}
\bvolume{57}
\bpages{97--109}.
\bdoi{10.1093/biomet/57.1.97}
\bmrnumber{3363437}
\end{barticle}
\endbibitem

\bibitem{hoffman2021adaptive}
\begin{binproceedings}[author]
\bauthor{\bsnm{Hoffman},~\bfnm{Matthew}\binits{M.}},
  \bauthor{\bsnm{Radul},~\bfnm{Alexey}\binits{A.}} \AND
  \bauthor{\bsnm{Sountsov},~\bfnm{Pavel}\binits{P.}}
(\byear{2021}).
\btitle{An Adaptive-MCMC Scheme for Setting Trajectory Lengths in Hamiltonian
  Monte Carlo}.
In \bbooktitle{International Conference on Artificial Intelligence and
  Statistics}
\bpages{3907--3915}.
\bpublisher{PMLR}.
\end{binproceedings}
\endbibitem

\bibitem{horowitz1991generalized}
\begin{barticle}[author]
\bauthor{\bsnm{Horowitz},~\bfnm{Alan~M}\binits{A.~M.}}
(\byear{1991}).
\btitle{A generalized guided Monte Carlo algorithm}.
\bjournal{Physics Letters B}
\bvolume{268}
\bpages{247--252}.
\end{barticle}
\endbibitem

\bibitem{karagulyan2020penalized}
\begin{barticle}[author]
\bauthor{\bsnm{Karagulyan},~\bfnm{Avetik}\binits{A.}} \AND
  \bauthor{\bsnm{Dalalyan},~\bfnm{Arnak}\binits{A.}}
(\byear{2020}).
\btitle{Penalized Langevin dynamics with vanishing penalty for smooth and
  log-concave targets}.
\bjournal{Advances in Neural Information Processing Systems}
\bvolume{33}
\bpages{17594--17604}.
\end{barticle}
\endbibitem

\bibitem{kennedy2001cost}
\begin{barticle}[author]
\bauthor{\bsnm{Kennedy},~\bfnm{A.~D.}\binits{A.~D.}} \AND
  \bauthor{\bsnm{Pendleton},~\bfnm{Brian}\binits{B.}}
(\byear{2001}).
\btitle{Cost of the generalised hybrid {M}onte {C}arlo algorithm for free field
  theory}.
\bjournal{Nuclear Phys. B}
\bvolume{607}
\bpages{456--510}.
\bdoi{10.1016/S0550-3213(01)00129-8}
\bmrnumber{1850796}
\end{barticle}
\endbibitem

\bibitem{lee2020logsmooth}
\begin{binproceedings}[author]
\bauthor{\bsnm{Lee},~\bfnm{Yin~Tat}\binits{Y.~T.}},
  \bauthor{\bsnm{Shen},~\bfnm{Ruoqi}\binits{R.}} \AND
  \bauthor{\bsnm{Tian},~\bfnm{Kevin}\binits{K.}}
(\byear{2020}).
\btitle{Logsmooth gradient concentration and tighter runtimes for metropolized
  hamiltonian monte carlo}.
In \bbooktitle{Conference on Learning Theory}
\bpages{2565--2597}.
\bpublisher{PMLR}.
\end{binproceedings}
\endbibitem

\bibitem{leimkuhler2013rational}
\begin{barticle}[author]
\bauthor{\bsnm{Leimkuhler},~\bfnm{Benedict}\binits{B.}} \AND
  \bauthor{\bsnm{Matthews},~\bfnm{Charles}\binits{C.}}
(\byear{2013}).
\btitle{Rational construction of stochastic numerical methods for molecular
  sampling}.
\bjournal{Appl. Math. Res. Express. AMRX}
\bvolume{1}
\bpages{34--56}.
\bdoi{10.1093/amrx/abs010}
\bmrnumber{3040887}
\end{barticle}
\endbibitem

\bibitem{livingstone2019geometric}
\begin{barticle}[author]
\bauthor{\bsnm{Livingstone},~\bfnm{Samuel}\binits{S.}},
  \bauthor{\bsnm{Betancourt},~\bfnm{Michael}\binits{M.}},
  \bauthor{\bsnm{Byrne},~\bfnm{Simon}\binits{S.}} \AND
  \bauthor{\bsnm{Girolami},~\bfnm{Mark}\binits{M.}}
(\byear{2019}).
\btitle{On the geometric ergodicity of {H}amiltonian {M}onte {C}arlo}.
\bjournal{Bernoulli}
\bvolume{25}
\bpages{3109--3138}.
\bdoi{10.3150/18-BEJ1083}
\bmrnumber{4003576}
\end{barticle}
\endbibitem

\bibitem{livingstone2019barker}
\begin{barticle}[author]
\bauthor{\bsnm{Livingstone},~\bfnm{Samuel}\binits{S.}} \AND
  \bauthor{\bsnm{Zanella},~\bfnm{Giacomo}\binits{G.}}
(\byear{2022}).
\btitle{The {B}arker proposal: combining robustness and efficiency in
  gradient-based {MCMC}}.
\bjournal{J. R. Stat. Soc. Ser. B. Stat. Methodol.}
\bvolume{84}
\bpages{496--523}.
\bdoi{10.1111/rssb.12482}
\bmrnumber{4412995}
\end{barticle}
\endbibitem

\bibitem{ma2021there}
\begin{barticle}[author]
\bauthor{\bsnm{Ma},~\bfnm{Yi-An}\binits{Y.-A.}},
  \bauthor{\bsnm{Chatterji},~\bfnm{Niladri~S.}\binits{N.~S.}},
  \bauthor{\bsnm{Cheng},~\bfnm{Xiang}\binits{X.}},
  \bauthor{\bsnm{Flammarion},~\bfnm{Nicolas}\binits{N.}},
  \bauthor{\bsnm{Bartlett},~\bfnm{Peter~L.}\binits{P.~L.}} \AND
  \bauthor{\bsnm{Jordan},~\bfnm{Michael~I.}\binits{M.~I.}}
(\byear{2021}).
\btitle{Is there an analog of {N}esterov acceleration for gradient-based
  {MCMC}?}
\bjournal{Bernoulli}
\bvolume{27}
\bpages{1942--1992}.
\bdoi{10.3150/20-bej1297}
\bmrnumber{4278799}
\end{barticle}
\endbibitem

\bibitem{mangoubi2019mixing}
\begin{binproceedings}[author]
\bauthor{\bsnm{Mangoubi},~\bfnm{Oren}\binits{O.}} \AND
  \bauthor{\bsnm{Smith},~\bfnm{Aaron}\binits{A.}}
(\byear{2019}).
\btitle{Mixing of Hamiltonian Monte Carlo on strongly log-concave distributions
  2: Numerical integrators}.
In \bbooktitle{The 22nd international conference on artificial intelligence and
  statistics}
\bpages{586--595}.
\bpublisher{PMLR}.
\end{binproceedings}
\endbibitem

\bibitem{mangoubi2021mixing}
\begin{barticle}[author]
\bauthor{\bsnm{Mangoubi},~\bfnm{Oren}\binits{O.}} \AND
  \bauthor{\bsnm{Smith},~\bfnm{Aaron}\binits{A.}}
(\byear{2021}).
\btitle{Mixing of {H}amiltonian {M}onte {C}arlo on strongly log-concave
  distributions: continuous dynamics}.
\bjournal{Ann. Appl. Probab.}
\bvolume{31}
\bpages{2019--2045}.
\bdoi{10.1214/20-aap1640}
\bmrnumber{4332690}
\end{barticle}
\endbibitem

\bibitem{mangoubi2018dimensionally}
\begin{barticle}[author]
\bauthor{\bsnm{Mangoubi},~\bfnm{Oren}\binits{O.}} \AND
  \bauthor{\bsnm{Vishnoi},~\bfnm{Nisheeth}\binits{N.}}
(\byear{2018}).
\btitle{Dimensionally Tight Bounds for Second-Order Hamiltonian Monte Carlo}.
\bvolume{31}.
\end{barticle}
\endbibitem

\bibitem{mattingly2002ergodicity}
\begin{barticle}[author]
\bauthor{\bsnm{Mattingly},~\bfnm{J.~C.}\binits{J.~C.}},
  \bauthor{\bsnm{Stuart},~\bfnm{A.~M.}\binits{A.~M.}} \AND
  \bauthor{\bsnm{Higham},~\bfnm{D.~J.}\binits{D.~J.}}
(\byear{2002}).
\btitle{Ergodicity for {SDE}s and approximations: locally {L}ipschitz vector
  fields and degenerate noise}.
\bjournal{Stochastic Process. Appl.}
\bvolume{101}
\bpages{185--232}.
\bdoi{10.1016/S0304-4149(02)00150-3}
\bmrnumber{1931266}
\end{barticle}
\endbibitem

\bibitem{metropolis1953equation}
\begin{barticle}[author]
\bauthor{\bsnm{Metropolis},~\bfnm{Nicholas}\binits{N.}},
  \bauthor{\bsnm{Rosenbluth},~\bfnm{Arianna~W}\binits{A.~W.}},
  \bauthor{\bsnm{Rosenbluth},~\bfnm{Marshall~N}\binits{M.~N.}},
  \bauthor{\bsnm{Teller},~\bfnm{Augusta~H}\binits{A.~H.}} \AND
  \bauthor{\bsnm{Teller},~\bfnm{Edward}\binits{E.}}
(\byear{1953}).
\btitle{Equation of state calculations by fast computing machines}.
\bjournal{The journal of chemical physics}
\bvolume{21}
\bpages{1087--1092}.
\end{barticle}
\endbibitem

\bibitem{meyn_tweedie_1993}
\begin{barticle}[author]
\bauthor{\bsnm{Meyn},~\bfnm{Sean~P.}\binits{S.~P.}} \AND
  \bauthor{\bsnm{Tweedie},~\bfnm{R.~L.}\binits{R.~L.}}
(\byear{1993}).
\btitle{Stability of {M}arkovian processes {I}{I}{I}: {F}oster-{L}yapunov
  criteria for continuous-time processes}.
\bjournal{Advances in Applied Probability}
\bvolume{25}
\bpages{518--548}.
\bdoi{10.2307/1427522}
\bmrnumber{1234295}
\end{barticle}
\endbibitem

\bibitem{milstein2013stochastic}
\begin{bbook}[author]
\bauthor{\bsnm{Milstein},~\bfnm{G.~N.}\binits{G.~N.}} \AND
  \bauthor{\bsnm{Tretyakov},~\bfnm{M.~V.}\binits{M.~V.}}
(\byear{2004}).
\btitle{Stochastic numerics for mathematical physics}.
\bseries{Scientific Computation}.
\bpublisher{Springer-Verlag, Berlin}.
\bdoi{10.1007/978-3-662-10063-9}
\bmrnumber{2069903}
\end{bbook}
\endbibitem

\bibitem{mira2001metropolis}
\begin{barticle}[author]
\bauthor{\bsnm{Mira},~\bfnm{Antonietta}\binits{A.}}
(\byear{2001}).
\btitle{On {M}etropolis-{H}astings algorithms with delayed rejection}.
\bjournal{Metron}
\bvolume{59}
\bpages{231--241}.
\bmrnumber{1889712}
\end{barticle}
\endbibitem

\bibitem{monmarche2020high}
\begin{barticle}[author]
\bauthor{\bsnm{Monmarch\'{e}},~\bfnm{Pierre}\binits{P.}}
(\byear{2021}).
\btitle{High-dimensional {MCMC} with a standard splitting scheme for the
  underdamped {L}angevin diffusion}.
\bjournal{Electron. J. Stat.}
\bvolume{15}
\bpages{4117--4166}.
\bdoi{10.1214/21-ejs1888}
\bmrnumber{4309974}
\end{barticle}
\endbibitem

\bibitem{neal2011mcmc}
\begin{barticle}[author]
\bauthor{\bsnm{Neal},~\bfnm{Radford~M.}\binits{R.~M.}}
(\byear{2011}).
\btitle{M{CMC} using {H}amiltonian dynamics}.
\bpages{113--162}.
\bmrnumber{2858447}
\end{barticle}
\endbibitem

\bibitem{nelson1967dynamical}
\begin{bbook}[author]
\bauthor{\bsnm{Nelson},~\bfnm{Edward}\binits{E.}}
(\byear{1967}).
\btitle{Dynamical theories of {B}rownian motion}.
\bpublisher{Princeton University Press, Princeton, N.J.}
\bmrnumber{0214150}
\end{bbook}
\endbibitem

\bibitem{ollivier2009ricci}
\begin{barticle}[author]
\bauthor{\bsnm{Ollivier},~\bfnm{Yann}\binits{Y.}}
(\byear{2009}).
\btitle{Ricci curvature of {M}arkov chains on metric spaces}.
\bjournal{J. Funct. Anal.}
\bvolume{256}
\bpages{810--864}.
\bdoi{10.1016/j.jfa.2008.11.001}
\bmrnumber{2484937}
\end{barticle}
\endbibitem

\bibitem{ottobre2016function}
\begin{barticle}[author]
\bauthor{\bsnm{Ottobre},~\bfnm{Michela}\binits{M.}},
  \bauthor{\bsnm{Pillai},~\bfnm{Natesh~S.}\binits{N.~S.}},
  \bauthor{\bsnm{Pinski},~\bfnm{Frank~J.}\binits{F.~J.}} \AND
  \bauthor{\bsnm{Stuart},~\bfnm{Andrew~M.}\binits{A.~M.}}
(\byear{2016}).
\btitle{A function space {HMC} algorithm with second order {L}angevin diffusion
  limit}.
\bjournal{Bernoulli}
\bvolume{22}
\bpages{60--106}.
\bdoi{10.3150/14-BEJ621}
\bmrnumber{3449777}
\end{barticle}
\endbibitem

\bibitem{park2020markov}
\begin{barticle}[author]
\bauthor{\bsnm{Park},~\bfnm{Joonha}\binits{J.}} \AND
  \bauthor{\bsnm{Atchad\'{e}},~\bfnm{Yves}\binits{Y.}}
(\byear{2020}).
\btitle{Markov chain {M}onte {C}arlo algorithms with sequential proposals}.
\bjournal{Stat. Comput.}
\bvolume{30}
\bpages{1325--1345}.
\bdoi{10.1007/s11222-020-09948-4}
\bmrnumber{4137254}
\end{barticle}
\endbibitem

\bibitem{pasarica2010adaptively}
\begin{barticle}[author]
\bauthor{\bsnm{Pasarica},~\bfnm{Cristian}\binits{C.}} \AND
  \bauthor{\bsnm{Gelman},~\bfnm{Andrew}\binits{A.}}
(\byear{2010}).
\btitle{Adaptively scaling the metropolis algorithm using expected squared
  jumped distance}.
\bjournal{Statist. Sinica}
\bvolume{20}
\bpages{343--364}.
\bmrnumber{2640698}
\end{barticle}
\endbibitem

\bibitem{pillai2012optimal}
\begin{barticle}[author]
\bauthor{\bsnm{Pillai},~\bfnm{Natesh~S.}\binits{N.~S.}},
  \bauthor{\bsnm{Stuart},~\bfnm{Andrew~M.}\binits{A.~M.}} \AND
  \bauthor{\bsnm{Thi\'{e}ry},~\bfnm{Alexandre~H.}\binits{A.~H.}}
(\byear{2012}).
\btitle{Optimal scaling and diffusion limits for the {L}angevin algorithm in
  high dimensions}.
\bjournal{Ann. Appl. Probab.}
\bvolume{22}
\bpages{2320--2356}.
\bdoi{10.1214/11-AAP828}
\bmrnumber{3024970}
\end{barticle}
\endbibitem

\bibitem{ricci2003algorithms}
\begin{barticle}[author]
\bauthor{\bsnm{Ricci},~\bfnm{Andrea}\binits{A.}} \AND
  \bauthor{\bsnm{Ciccotti},~\bfnm{Giovanni}\binits{G.}}
(\byear{2003}).
\btitle{Algorithms for Brownian dynamics}.
\bjournal{Molecular Physics}
\bvolume{101}
\bpages{1927--1931}.
\end{barticle}
\endbibitem

\bibitem{riou2023adaptive}
\begin{binproceedings}[author]
\bauthor{\bsnm{Riou-Durand},~\bfnm{Lionel}\binits{L.}},
  \bauthor{\bsnm{Sountsov},~\bfnm{Pavel}\binits{P.}},
  \bauthor{\bsnm{Vogrinc},~\bfnm{Jure}\binits{J.}},
  \bauthor{\bsnm{Margossian},~\bfnm{Charles}\binits{C.}} \AND
  \bauthor{\bsnm{Power},~\bfnm{Sam}\binits{S.}}
(\byear{2023}).
\btitle{Adaptive Tuning for Metropolis Adjusted Langevin Trajectories}.
In \bbooktitle{International Conference on Artificial Intelligence and
  Statistics}
\bpages{8102--8116}.
\bpublisher{PMLR}.
\end{binproceedings}
\endbibitem

\bibitem{Roberts::1997}
\begin{barticle}[author]
\bauthor{\bsnm{Roberts},~\bfnm{G.~O.}\binits{G.~O.}},
  \bauthor{\bsnm{Gelman},~\bfnm{A.}\binits{A.}} \AND
  \bauthor{\bsnm{Gilks},~\bfnm{W.~R.}\binits{W.~R.}}
(\byear{1997}).
\btitle{Weak convergence and optimal scaling of random walk {M}etropolis
  algorithms}.
\bjournal{Ann. Appl. Probab.}
\bvolume{7}
\bpages{110--120}.
\bdoi{10.1214/aoap/1034625254}
\bmrnumber{1428751}
\end{barticle}
\endbibitem

\bibitem{Roberts::1998}
\begin{barticle}[author]
\bauthor{\bsnm{Roberts},~\bfnm{Gareth~O.}\binits{G.~O.}} \AND
  \bauthor{\bsnm{Rosenthal},~\bfnm{Jeffrey~S.}\binits{J.~S.}}
(\byear{1998}).
\btitle{Optimal scaling of discrete approximations to {L}angevin diffusions}.
\bjournal{J. R. Stat. Soc. Ser. B Stat. Methodol.}
\bvolume{60}
\bpages{255--268}.
\bdoi{10.1111/1467-9868.00123}
\bmrnumber{1625691}
\end{barticle}
\endbibitem

\bibitem{roberts2004general}
\begin{barticle}[author]
\bauthor{\bsnm{Roberts},~\bfnm{Gareth~O.}\binits{G.~O.}} \AND
  \bauthor{\bsnm{Rosenthal},~\bfnm{Jeffrey~S.}\binits{J.~S.}}
(\byear{2004}).
\btitle{General state space {M}arkov chains and {MCMC} algorithms}.
\bjournal{Probab. Surv.}
\bvolume{1}
\bpages{20--71}.
\bdoi{10.1214/154957804100000024}
\bmrnumber{2095565}
\end{barticle}
\endbibitem

\bibitem{RobertsTweedie::1996}
\begin{barticle}[author]
\bauthor{\bsnm{Roberts},~\bfnm{Gareth~O.}\binits{G.~O.}} \AND
  \bauthor{\bsnm{Tweedie},~\bfnm{Richard~L.}\binits{R.~L.}}
(\byear{1996}).
\btitle{Exponential convergence of {L}angevin distributions and their discrete
  approximations}.
\bjournal{Bernoulli}
\bvolume{2}
\bpages{341--363}.
\bdoi{10.2307/3318418}
\bmrnumber{1440273}
\end{barticle}
\endbibitem

\bibitem{sanz2021wasserstein}
\begin{barticle}[author]
\bauthor{\bsnm{Sanz-Serna},~\bfnm{Jesus~Mar\'{\i}a}\binits{J.~M.}} \AND
  \bauthor{\bsnm{Zygalakis},~\bfnm{Konstantinos~C.}\binits{K.~C.}}
(\byear{2021}).
\btitle{Wasserstein distance estimates for the distributions of numerical
  approximations to ergodic stochastic differential equations}.
\bjournal{J. Mach. Learn. Res.}
\bvolume{22}
\bpages{Paper No. 242, 37}.
\bdoi{10.1080/14685248.2020.1855352}
\bmrnumber{4329821}
\end{barticle}
\endbibitem

\bibitem{scemama2006efficient}
\begin{barticle}[author]
\bauthor{\bsnm{Scemama},~\bfnm{Anthony}\binits{A.}},
  \bauthor{\bsnm{Leli{\`e}vre},~\bfnm{Tony}\binits{T.}},
  \bauthor{\bsnm{Stoltz},~\bfnm{Gabriel}\binits{G.}},
  \bauthor{\bsnm{Canc{\`e}s},~\bfnm{Eric}\binits{E.}} \AND
  \bauthor{\bsnm{Caffarel},~\bfnm{Michel}\binits{M.}}
(\byear{2006}).
\btitle{An efficient sampling algorithm for variational Monte Carlo}.
\bjournal{The Journal of chemical physics}
\bvolume{125}
\bpages{114105}.
\end{barticle}
\endbibitem

\bibitem{sohl2014hamiltonian}
\begin{binproceedings}[author]
\bauthor{\bsnm{Sohl-Dickstein},~\bfnm{Jascha}\binits{J.}},
  \bauthor{\bsnm{Mudigonda},~\bfnm{Mayur}\binits{M.}} \AND
  \bauthor{\bsnm{DeWeese},~\bfnm{Michael}\binits{M.}}
(\byear{2014}).
\btitle{Hamiltonian Monte Carlo without detailed balance}.
In \bbooktitle{International Conference on Machine Learning}
\bpages{719--726}.
\bpublisher{PMLR}.
\end{binproceedings}
\endbibitem

\bibitem{tierney1999some}
\begin{barticle}[author]
\bauthor{\bsnm{Tierney},~\bfnm{Luke}\binits{L.}} \AND
  \bauthor{\bsnm{Mira},~\bfnm{Antonietta}\binits{A.}}
(\byear{1999}).
\btitle{Some adaptive Monte Carlo methods for Bayesian inference}.
\bjournal{Statistics in medicine}
\bvolume{18}
\bpages{2507--2515}.
\end{barticle}
\endbibitem

\bibitem{vogrinc2021counterexamples}
\begin{barticle}[author]
\bauthor{\bsnm{Vogrinc},~\bfnm{Jure}\binits{J.}} \AND
  \bauthor{\bsnm{Kendall},~\bfnm{Wilfrid~S.}\binits{W.~S.}}
(\byear{2021}).
\btitle{Counterexamples for optimal scaling of {M}etropolis-{H}astings chains
  with rough target densities}.
\bjournal{Ann. Appl. Probab.}
\bvolume{31}
\bpages{972--1019}.
\bdoi{10.1214/20-aap1612}
\bmrnumber{4254501}
\end{barticle}
\endbibitem

\bibitem{vogrinc2022optimal}
\begin{barticle}[author]
\bauthor{\bsnm{Vogrinc},~\bfnm{Jure}\binits{J.}},
  \bauthor{\bsnm{Livingstone},~\bfnm{Samuel}\binits{S.}} \AND
  \bauthor{\bsnm{Zanella},~\bfnm{Giacomo}\binits{G.}}
(\byear{2023}).
\btitle{Optimal design of the {B}arker proposal and other locally balanced
  {M}etropolis-{H}astings algorithms}.
\bjournal{Biometrika}
\bvolume{110}
\bpages{579--595}.
\bdoi{10.1093/biomet/asac056}
\bmrnumber{4627772}
\end{barticle}
\endbibitem

\bibitem{wagoner2012reducing}
\begin{barticle}[author]
\bauthor{\bsnm{Wagoner},~\bfnm{Jason~A}\binits{J.~A.}} \AND
  \bauthor{\bsnm{Pande},~\bfnm{Vijay~S}\binits{V.~S.}}
(\byear{2012}).
\btitle{Reducing the effect of Metropolization on mixing times in molecular
  dynamics simulations}.
\bjournal{The Journal of chemical physics}
\bvolume{137}
\bpages{214105}.
\end{barticle}
\endbibitem

\end{thebibliography}

\end{document}